\documentclass{mscs}
\usepackage[backend=bibtex,maxbibnames=9,style=alphabetic,natbib=true]{biblatex}
\usepackage{amsmath, amssymb, amsfonts, stmaryrd}
\usepackage{mathpartir}
\usepackage{bussproofs}
\usepackage{xparse}
\usepackage{hyperref}
\usepackage[usenames, dvipsnames]{xcolor}
\usepackage{tikz-cd}
\usepackage[disable]{todonotes}

\DeclareSymbolFont{sfoperators}{OT1}{cmss}{m}{n}
\DeclareSymbolFontAlphabet{\mathsf}{sfoperators}

\makeatletter
\def\operator@font{\mathgroup\symsfoperators}
\makeatother

\newcommand{\gdtt}{\ensuremath{\mathsf{GDTT}}}

\newcommand{\propeq}{=}
\newcommand{\judgeq}{\equiv}
\newcommand{\judgeqty}{\equiv}

\renewcommand{\gets}{\shortleftarrow}
\newcommand{\subst}[2]{[#1/#2]}
\newcommand{\esubst}[2]{\hrt{#1 \gets #2}} 


\DeclareDocumentCommand{\laterbare}{o}
{\IfNoValueTF{#1}
  {\mathord{\triangleright}}
  {\mathord{\triangleright}^{#1}}}

\DeclareDocumentCommand{\later}{ m o m }{
  \IfNoValueTF{#2}
  {\mathord{\triangleright}#3}
  {\mathord{\triangleright}#2 . #3}}
\newcommand{\laterclock}[1]{\mathord{\triangleright}_#1}

\newcommand{\hrt}[1]{\left[ #1 \right]}
\newcommand{\laterold}{\mathord\triangleright}
\newcommand{\latercode}[1]{\code{\triangleright}^{#1}}
\newcommand{\latercodebare}{\code{\triangleright}}
\newcommand{\U}[1]{\mathcal{U}_{#1}}

\newcommand{\El}{\operatorname{El}}

\newcommand{\depprod}[3]{\ensuremath{{\Pi{\left(#1 : #2\right)}.#3}}}

\newcommand{\alwaystype}[2]{\ensuremath{\forall{#1}\ifstrempty{#2}{}{.#2}}}

\newcommand{\alwaysterm}[2]{\ensuremath{\Lambda{#1}\ifstrempty{#2}{}{.#2}}}
\newcommand{\alwaysapp}[2]{\ensuremath{#1\!\left[#2\right]}} 

\newcommand{\code}[1]{\widehat{#1}}
\newcommand{\codeop}[1]{\mathop{\widehat{#1}}}

\newcommand{\purebare}{\operatorname{next}}
\DeclareDocumentCommand{\pure}{ m o m }{
  \IfNoValueTF{#2}
  {\operatorname{next}#3}
  {\operatorname{next} #2 . #3}}

\DeclareDocumentCommand{\app}{o}
{\IfNoValueTF{#1}
  {\ensuremath{\mathbin{\circledast}}}
  {\ensuremath{\mathbin{\ensuremath{\text{\textcircled{\scalebox{0.7}{\ensuremath{#1}}}}}}}}}
\newcommand{\prev}[2]{\operatorname{prev} #1 . #2}

\newcommand{\pair}[2]{\left\langle #1, #2 \right\rangle}
\DeclareDocumentCommand{\fixcombinator}{o}
{\IfNoValueTF{#1}
  {\operatorname{fix}}
  {\operatorname{fix}^{#1}}
}

\newcommand{\inl}{\operatorname{inl}}
\newcommand{\inr}{\operatorname{inr}}

\DeclareDocumentCommand{\force}{o}
{
  \IfNoValueTF{#1}
  {\ensuremath{\operatorname{force}}}
  {\ensuremath{\operatorname{force}\left(#1\right)}}
}


\newcommand{\wfctx}[2]{\ensuremath{#2 \vdash_{#1}}}
\newcommand{\wftype}[3]{\ensuremath{#2 \vdash_{#1} #3 \, \operatorname{type}}}
\newcommand{\hastype}[4]{\ensuremath{#2 \vdash_{#1} #3 : #4}}

\newcommand{\Sl}{\ensuremath{\mathcal{S}}}


\newtheorem{theorem}{Theorem}[section] 
\usepackage{chngcntr}
\newtheorem{definition}[theorem]{Definition} 

\newtheorem{proposition}[theorem]{Proposition}
\newtheorem{lemma}[theorem]{Lemma}

\newtheorem{corollary}[theorem]{Corollary}

\newcommand{\gstream}[1]{\ensuremath{\operatorname{Str}_g^{#1}}}


\newcommand{\den}[1]{\ensuremath{\left\llbracket #1 \right\rrbracket}}

\newenvironment{diagram}{\begin{tikzcd}[row sep=1.5cm,column sep=1.5cm]}{\end{tikzcd}}


\newcommand{\eqdef}{\overset{\mathrm{def}}{=\joinrel=}}

\newcommand{\denglob}[1]{\llbracket #1 \rrbracket^{\glob}}

\newcommand{\jud}[3]{#1 \vdash #2 : #3}


\newcommand{\pcffix}[2]{\operatorname{Y}_{#1} \ #2 }

\newcommand{\pcfzero}[0]{\operatorname{zero} }

\newcommand{\pcfsucc}[1]{\operatorname{succ} \ #1 }

\newcommand{\pcfifzbare}{\operatorname{ifz}}
\newcommand{\pcfifz}[3]{\operatorname{ifz} \ #1 \ #2 \ #3 }

\newcommand{\pcfnattype}[0]{\textbf{\textup{nat}}}

\newcommand{\nowbare}{\eta}

\newcommand{\now}[1]{\nowbare}
\newcommand{\tickbare}{\theta}

\newcommand{\tick}[2]{\tickbare_{#2}}

\newcommand{\delaybare}{\delta}
\newcommand{\delayglob}[1]{\delta^{\glob}_{#1}}

\DeclareDocumentCommand{\delay}{m m o}
{\IfNoValueTF{#3}
  {\delaybare_{#2}}
  {\delaybare_{#2}^{#3}}
}

\newcommand{\ctx}{\operatorname{Ctx}}

\newcommand{\zeromany}[2]{#1 \to^0_* #2}

\newcommand{\ctxeq}{\ \approx_{\texttt{CTX}} \ }

\newcommand{\wb}[1]{\ \approx_{#1} \ }
\newcommand{\wbisim}[2]{\ \approx_{#2} \ }
\newcommand{\wbglob}[1]{\ \approx^{\glob}_{#1} \ }
\newcommand{\laterWBisim}[2]{\ \later{#1}{\approx_{#2}}\ }

\newcommand{\R}[1]{\ \mathcal{R}_{#1}\ }
\newcommand{\Rbare}{\mathcal{R}}
\newcommand{\laterR}[1]{\ \laterbare\mathcal{R}_{#1}\ }

\newcommand{\ctxhastype}[5]{#1 : (#2, #3) \to (#4, #5)}

\newcommand{\FPCOTerms}{\ensuremath{\texttt{\textup{OTerm}}_{\texttt{\tiny\textup{FPC}}}}}
\newcommand{\FPCTerms}{\ensuremath{\texttt{\textup{Term}}_{\texttt{\tiny\textup{FPC}}}}}

\newcommand{\FPCValues}{\ensuremath{\texttt{\textup{Value}}_{ \texttt{\tiny \textup{FPC} }}}}
\newcommand{\FPCTypes}{\ensuremath{\texttt{\textup{Type}}_{ \texttt{\tiny \textup{FPC} }}}}
\newcommand{\fpcfst}[1]{\textup{\texttt{fst}}\;#1}
\newcommand{\fpcsnd}[1]{\textup{\texttt{snd}}\;#1}
\newcommand{\fpcinl}[1]{\textup{\texttt{inl}}\;#1}
\newcommand{\fpcinr}[1]{\textup{\texttt{inr}}\;#1}
\newcommand{\fpccasebare}{\textup{\texttt{case}}}
\newcommand{\fpccaseopen}[5]{
        \textup{\texttt{case}}\;#1\;
        \textup{\texttt{of}}\;#2.#3;#4.#5}
\newcommand{\fpccase}[3]{
        \fpccaseopen{#1}{x_1}{#2}{x_2}{#3}}
\newcommand{\fpcfold}[1]{\fpcfoldbare\;#1}
\newcommand{\fpcunfold}[1]{\fpcunfoldbare\;#1}
\newcommand{\fpcfoldbare}{\textup{\texttt{fold}}}
\newcommand{\fpcunfoldbare}{\textup{\texttt{unfold}}}
\newcommand{\fpcpair}[2]{\langle #1, #2 \rangle}
\newcommand{\fpcunit}{\langle \rangle}
\newcommand{\fpcunittype}{1}
\newcommand{\foldedtype}{\mu \alpha.\tau}
\newcommand{\unfoldedtype}{\tau[\mu \alpha.\tau/\alpha]}


\newcommand{\sfold}{{\textsf{fold}}}

\newcommand{\sunit}{*}

\newcommand{\sinl}{\inl}
\newcommand{\sinr}{\inl}
\newcommand{\glob}{\mathrm{gl}}
\newcommand{\Lglob}{L^{\glob}}
\newcommand{\liftrelglob}{\; L^{\glob} R \;}

\newcommand{\denglobal}[1]{ \den{#1}^{\glob}}

\newcommand{\bigstep}{\Downarrow}
\newcommand{\many}[3]{#1 \to^{#2}_{*} #3}
\newcommand{\manyk}[3]{#1 \Rightarrow^{#2} #3}
\newcommand{\tozero}[2]{#1 \to^0 #2}
\newcommand{\toone}[2]{#1 \to^1 #2}

\newcommand{\liftrel}{\;LR\;}
\newcommand{\laterliftrel}{\;\laterbare LR\;}


\newcommand{\gttfst}{\pi_1}
\newcommand{\gttsnd}{\pi_2}

\newcommand{\gttpair}[2]{\langle #1,#2 \rangle}


\newcommand{\co}{\colon}
\newcommand{\ld}{.}

\newcommand{\N}{\mathbb{N}}

\newcommand{\gnl}{\\[2ex]} 

\newcommand{\NN}{\ensuremath{\mathbb{N}}}
\newcommand{\copair}[2]{[#1,#2]}
\newcommand{\exec}{\ensuremath{\operatorname{exec}}}
\newcommand{\runstep}{\ensuremath{\operatorname{runstep}}}


\newcommand{\rasmus}[1]{\todo[inline,backgroundcolor=yellow!30]{#1}}

\addbibresource{fpc-in-s.bib} 

\newenvironment{proofof}[1]
{\begin{proof}[Proof of {#1}]}
{\end{proof}}

\usepackage[autostyle=true]{csquotes} 

\title{Denotational semantics of recursive types in synthetic guarded domain theory}
\author[R. E. M{\o}gelberg and M.Paviotti]
    {R\ls A\ls S\ls M\ls U\ls S\ns E.\ns M\ls {\O}\ls G\ls E\ls L\ls
      B\ls E\ls R\ls G$^1$ \thanks{This research was supported by The Danish Council for Independent Research for the Natural Sciences (FNU), Grant no. 4002-00442.} \ns and \ns M\ls A\ls R\ls C\ls O\ns P\ls A\ls V\ls I\ls O\ls T\ls T\ls I$^2$\thanks{Marco Paviotti was funded in part by EPSRC grant EP/M017176/1.}\\
      $^1$ IT University of Copenhagen,   Copenhagen, Denmark.
      \addressbreak $^2$ University of Kent, Canterbury, United Kingdom.}
\date{\today}

\begin{document}

\label{firstpage}
\maketitle

\begin{abstract}
Just like any other branch of mathematics, denotational semantics 
of programming languages should be formalised in type theory, but
adapting traditional domain theoretic semantics, as originally formulated
in classical set theory to type theory has proven challenging. This paper
is part of a project on formulating denotational semantics in type theories
with guarded recursion. This should have the benefit of not only giving 
simpler semantics and proofs of properties such as adequacy, but also 
hopefully in the future to scale to languages with advanced features, such
as general references, outside the reach of traditional domain theoretic 
techniques. 


%
%
%

Working in \emph{Guarded Dependent Type Theory} (\gdtt), we develop
denotational semantics for FPC, the simply typed lambda calculus
extended with recursive types, modelling the recursive types of 
FPC using the guarded recursive types of $\gdtt$. We prove soundness
and computational adequacy of the model in $\gdtt$ using a logical relation
between syntax and semantics constructed also using guarded recursive
types. The denotational semantics is intensional in the sense that it counts
the number of unfold-fold reductions needed to compute the value of a term,
but we construct a relation relating the denotations of extensionally equal 
terms, i.e., pairs of terms that compute the same value in a different number
of steps. Finally we show how the denotational semantics of terms can be 
executed inside type theory and prove that executing the 
denotation of a boolean term computes the same value as the operational 
semantics of FPC. 
%
%
%
\end{abstract}

\tableofcontents

\section{Introduction}
Recent years have seen great advances in formalisation of mathematics
in type theory, in particular with the development of homotopy type
theory~\cite{hottbook}.  Such formalisations are an important step
towards machine assisted verification of mathematical proofs. Rather
than adapting classical set theory-based mathematics to type theory,
new synthetic approaches sometimes offer simpler and clearer
presentations in type theory. As an example of the synthetic approach,
consider synthetic homotopy theory~\cite{hottbook}, which formalises
homotopy theory in type theory, not by formalising a topological space
as a type with structure, but rather by thinking of types as
topological spaces directly. Particular spaces such as the circle can then 
be constructed as types using higher inductive types. 
Synthetic homotopy theory can be formally
related to classical homotopy theory via the simplicial sets
interpretation of homotopy type theory~\cite{Simplicial:model}, interpreting 
types essentially as topological spaces. 

Just like any other branch of mathematics, domain theory and
denotational semantics for programming languages with recursion should
be formalised in type theory and, as was the case of homotopy theory,
synthetic approaches can provide clearer and more abstract proofs.  In
the case of domain theory, the synthetic approach means treating types
as domains, rather than constructing domains internally in type theory
as types with an order relation. The result of this should be a
considerable simplification of denotational semantics when expressed
in type theory.  For example, function types of a higher-order object
language can be modelled simply as the function types of type theory,
rather than as some type of Scott continuous maps.  To model
recursion, some form of fixed point construction must be added to type
theory, but, as is well known, an unrestricted fixed point combinator
makes the logical reading of type theory inconsistent.


\subsection{Synthetic guarded domain theory}
\label{sec:fpc:intro:sgdt}

%
In this paper we follow the approach of guarded
recursion~\cite{Nak00}, which introduces a new type constructor
$\laterbare$, pronounced ``later''. Elements of $\laterbare A$ are to
be thought of as elements of type $A$ available only one time step
from now, and the introduction form $\purebare\co A \to \laterbare A$
makes anything available now, also available later.  The fixed point
operator has type
\[
  \fixcombinator\co (\laterbare A \to A) \to A
\]
and maps an $f$ to a fixed point of $f\circ \purebare$. Guarded
recursion also assumes solutions to all guarded recursive type
equations, i.e., equations where all occurences of the type variable
are under a $\laterbare$, as for example in the equation
\begin{equation} \label{eq:lift:example} LA \cong A + \laterbare LA
\end{equation}
used to define the lifting monad $L$ below, but guarded recursive
equations can also have negative or even non-functorial occurences.

One application of guarded recursion is for programming with
coinductive types.  This requires a notion of clocks used to index
delays. For example, if $\kappa$ is a clock and $A$ is a type then
$\laterclock\kappa A$ is a type.  If $\kappa$ is a clock variable not
free in $A$ and $LA \cong A + \laterclock\kappa LA$, then $\kappa$ can
be universally quantified in $LA$ to give the type
$\forall\kappa . LA$ which can be shown to be a coinductive solution
to $\forall\kappa . LA \cong A + \forall\kappa . LA$. Almost
everything we do in this paper uses a single implicit clock variable
and all uses of $\laterbare$ should be thought of as indexed by this
clock. More details can be found in
Section~\ref{sec:guarded-recursion-intro}.


Recent work has shown how guarded recursion can be used to construct
syntactic models and operational reasoning principles for (also
combinations of) advanced programming language features including
general references, recursive types, countable non-determinism and
concurrency~\cite{BMSS12,BBM14,SB14}. These models often require
solving recursive domain equations which are beyond the reach of
domain theoretic methods. When viewing these syntactic models through
the topos of trees model of guarded recursion~\cite{BMSS12} one
recovers step-indexing~\cite{App01}, a technique for sidestepping
recursive domain equations by indexing the interpretation of types by
numbers, counting the number of unfoldings of the equation.  Thus
guarded recursion can be more accurately described as synthetic
step-indexing. Indeed, guarded recursion provides a type system for
constructing step-indexed models, in which the type equations
sidestepped by step-indexing can be solved using guarded recursive
types.


This work is part of a programme of developing \emph{denotational
  semantics} using guarded recursion with the expectation that this
will not only be simpler to formalise in type theory than the
classical domain theoretic semantics, but also generalise to languages
with advanced features for which step-indexing has been used for
operational reasoning.  This programme was initiated in previous
work~\cite{PMB15}, in which a model of PCF (simply typed lambda
calculus with fixed points) was developed in Guarded Dependent Type
Theory (\gdtt)~\cite{BGCMB16} an extensional type theory with guarded
recursive types and terms.  By aligning the fixpoint unfoldings of PCF
with the steps of the metalanguage (represented by $\laterbare$), we
proved a computational adequacy result for the model inside type
theory.  Guarded recursive types were used both in the denotational
semantics (to define a lifting monad) and in the proof of
computational adequacy.  Likewise, the fixed point operator
$\fixcombinator$ of $\gdtt$ was used both to model fixed points of PCF
and as a proof principle.

\subsection{Contributions}
Here we extend our previous work in two ways. First we extend the
denotational semantics and adequacy proof to languages with recursive
types.  Secondly, we define a relation capturing extensionally equal
elements in the model.

More precisely, we consider the language FPC (simply typed lambda
calculus extended with general recursive types) with a call-by-name
operational semantics. Working internally in $\gdtt$ this language can
be given a denotational semantics in the synthetic style discussed
above.  In particular, function types of FPC are interpreted simply as
the function types of $\gdtt$. Base types are interpreted using the
lifting monad $L$ satisfying the isomorphism
(\ref{eq:lift:example}). In particular the unit type of FPC is
interpreted as $L1$ isomorphic to $1 + \laterbare L1$, so that
denotationally, a program of this type is either a value now, or a
delayed computation. Recursive types are modelled as guarded recursive
types satisfying the isomorphism
\[\den{\mu\alpha.\sigma} \cong \laterbare
  \den{\sigma\subst{\mu\alpha.\sigma}\alpha}
\]
(in the case of closed types). This means that the introduction rule
for recursive types (folding a term) can be interpreted as
$\purebare$.  To interpret unfolding of terms of recursive types we
construct, for every FPC type $\sigma$ a map
$\tick{}{\sigma} : \laterbare \den\sigma \to \sigma$, and interpret
unfolding as $\tick{}{\sigma\subst{\mu\alpha.\sigma}\alpha}$.  As a
consequence, folding followed by unfolding is interpreted as the map
$\delay{}{\sigma\subst{\mu\alpha.\sigma}\alpha}$ defined as
$\tick{}{\sigma\subst{\mu\alpha.\sigma}\alpha} \circ \purebare$. This
composition is not the identity, rather the denotational semantics
counts the number of fold-unfold reductions needed to evaluate a term
to a value.

Thus, to state a precise soundness theorem, the operational semantics
also needs to count the fold-unfold reductions. To do this, we define
a judgement $\many{M}k{N}$ to mean that $M$ reduces to $N$ in a sequence of
reductions containing exactly $k$ fold-unfold reductions, and an
equivalent big-step semantics $M \bigstep^kv$. One might hope to
formulate an adequacy theorem stating that for $M$ of type $1$,
$M \bigstep^k\fpcunit$ (where $\fpcunit$ is the introduction form for
$1$) if and only if $\den M = \delta{}{}^k\den
\fpcunit$. Unfortunately this is not true. For example, if
$M \bigstep^2 \fpcunit$ the type $M \bigstep^1 \fpcunit$ is empty, but
the identity type $\den M = \delta{}{}^1\den \fpcunit$ is equivalent
to $\laterbare 0$, a non-standard truth value different from $0$.  To
state an exact correspondence between the operational and denotational
semantics we use the \emph{guarded transitive closure of the
  small-step semantics} which \emph{synchronises} the steps of FPC
with those of $\gdtt$.  This is defined as $\manyk{M}{k+1}{N}$ if
$\zeromany{M}{M'}$, $M' \to^1 M''$ and
$\later{}{(\manyk{M''}{k}{N})}$, where $M' \to^1 M''$ is a fold-unfold
reduction in an evaluation context.

The adequacy theorem states that $\manyk Mk\fpcunit$ if and only if
$\den M = \delta{}{}^k\den \fpcunit$. We prove this working inside
$\gdtt$, and the proof shows an interesting aspect of guarded domain
theory: It uses a logical relation between syntax and semantics
defined by induction over the structure of types. The case of
recursive types requires a solution to a recursive type equation. In
the setting of classical domain theory, the existence of this solution
requires a separate argument~\cite{Pit96}, but here it is simply a
guarded recursive type.

%
%

The second contribution is a relation capturing extensionally equal
elements in the model. As mentioned above, the denotational semantics
distinguishes between computations computing the same value in a
different number of steps. In this paper we construct a relation on
the denotational semantics of each type relating elements
extensionally equal elements, i.e., elements that compute the same
value in a different number of steps. This relation is defined on the
\emph{global interpretation of types} $\denglob\sigma$ defined from
$\den\sigma$ by quantifying over the implicit clock variable (see
Section~\ref{sec:fpc:intro:sgdt} above). This is necessary, because,
as can be seen from the denotational semantics of guarded recursion,
any relation on $\den 1$ relating $\den\fpcunit$ to
$\delay{}{}^n\den\fpcunit$ for any $n$ will also necessarily relate
non-termination to $\den\fpcunit$. On the other hand, it is possible
to define such a relation on $\denglob \fpcunittype$ which is the
coinductive solution to
$\denglob \fpcunittype \cong 1 + \denglob \fpcunittype$. This is then
lifted to function types in the usual way for logical relations: Two
functions are related it they map related elements to related
elements, and to recursive types using a solution to a guarded
recursive type equation.  We prove a soundness result for this
relation stating that if the (global) denotation of two terms are
related, then the terms are contextually equivalent.


Finally we show that it is possible to execute the denotational
semantics. Of course, FPC is a non-total programming language, so to
run FPC programs in type theory, these must be given a time-out to
ensure termination. We demonstrate the technique in the case of
boolean typed programs and show that the denotation of a program
executes to true with a time-out of $n$ steps if and only if the
program evaluates to true in less than $n$ steps in the operational
semantics.

All constructions and proofs are carried out working informally in
$\gdtt$.  This work illustrates the strength of $\gdtt$ , and indeed
influenced the design of the type theory.

\subsection{Related work}
Escard\'{o} constructs a model of PCF using a category of ultrametric
spaces~\cite{Esc99}.  Since this category can be seen as a subcategory
of the topos of trees~\cite{BMSS12}, our previous work on PCF is a
synthetic version of Escard\'{o}'s model. Escard\'{o}'s model also
distinguishes between computations computing the same value in a
different number of steps, and captures extensional behaviour using a
logical relation similar to the one constructed here. Escard\'{o}
however, does not consider recursive types.  Although Escard\'{o}'s
model was useful for intuitions, the synthetic construction in type
theory presented here is very different, in particular the proof of
adequacy, which here is formulated in guarded dependent type theory.
%

Synthetic approaches to domain theory have been developed based on a
wide range of models dating back to~\cite{Hyl91, Ros86}.  Indeed, the
internal languages of these models can be used to construct models of
FPC and prove computational adequacy~\cite{Sim02}.  A more axiomatic
approach was developed in Reus's work~\cite{Reu96} where an
axiomatisation of domain theory is postulated a priori inside the
Extended Calculus of Constructions.

There has also been work on (non-synthetic) adaptations of domain
theory to type theory~\cite{BKV09,BBKV10,Dockins14}. However, due to
the mistmatch between set-theory and type theory ``\emph{some of the
  proofs and constructions are much more complex than they would
  classically and one does sometimes have to pay attention to which of
  two classically-equivalent forms of definition one works with}''
~\cite{BKV09}.  More recently Altenkirch et al.~\cite{ADK17} have
shown how to encode the free pointed $\omega$-cpo as a quotient
inductive-inductive types (QIIT). This looks like a more promising
direction for domain theory in type theory, but this has not yet been
developed to models of programming languages.


The lifting monad used in this paper is a \emph{guarded} recursive
variant of Capretta's delay monad~\cite{Cap05} considered by among
others~\cite{BKV09, BBKV10, Dan12,CUV15, ADK17, Veltri17}. The monad
$D(A)$ is coinductively generated by the constructors
$\texttt{now} : A \to D(A)$ and $\texttt{later} : D(A) \to D(A)$. As
reported by Danielsson~\cite{Dan12}, working with the partiality monad
requires convincing Agda of productivity of coinductive definitions
using workarounds. In this paper productivity is ensured by the type
system for guarded recursion.

In the delay monad, two computations of type $D(A)$ can be
distinguished by their number of steps. To address this issue,
Capretta also defines a weak bisimulation on this monad, similar to
the one defined in Definition~\ref{def:fpc:liftrel}, and proves the
combination of the delay monad with the weak bisimulation is a monad
using setoids. Chapman et al.\cite{CUV15, Veltri17} avoid using
setoids, but they crucially rely on proposition extensionality and the
axiom of countable choice. Altenkirch et al.~\cite{ADK17} show that under 
the assumption of countable choice, their free pointed $\omega$-cpo 
construction is equivalent to quotiented delay monad of Chapman et al. 
We work crucially with the non-quotiented delay monad when defining the 
denotational semantics, since the steps are necessary for guarded recursion.


This is an extended version of a conference publication~\cite{MP16}. A
number of proofs that were omitted from the previous version due to
space restrictions have been included in this version. There is also a
slight difference in approach: the conference version defined a
big-step operational semantics equivalent to the guarded transitive
closure of the small-step operational semantics of
Figure~\ref{fig:op:sem} below. This operational semantics synchronises
the steps of FPC with those of the meta-language, and capturing this
in a big-step semantics was quite tricky. Here, instead, we define a
simpler big-step operational semantics and prove this equivalent to
the ``global'' small-step semantics
(Lemma~\ref{lem:fpc:bigstep:many:manyk:soundness}).  The results on
executing the denotational semantics presented in
Section~\ref{sec:executing} are also new.

Since this work was carried out, the extensional type theory $\gdtt$
that we work in in this paper has been extended in two directions
towards intensionality and implementation. The first direction is
Guarded Cubical Type Theory~\cite{BBCGV16}, extending the fragment of
$\gdtt$ without universal quantification over clocks with
constructions from Cubical Type Theory~\cite{CCHM15}.  Guarded Cubical
Type Theory even has a prototype implementation.  The other direction
is Clocked Type Theory~\cite{clott}, a variant of the fragment of
$\gdtt$ without identity types in which delayed substitutions
(Section~\ref{sec:gdtt}) are encoded using a new notion of ticks on a
clock.  Clocked Type Theory has a strongly normalising reduction
semantics.  Since neither theory is complete, we stick to $\gdtt$ as
our type theory for this paper.

\paragraph{The paper is organized as follows.}
Section~\ref{sec:guarded-recursion-intro} gives a brief introduction
to the most important concepts of $\gdtt$. More advanced constructions of
the type theory are introduced as needed. Section~\ref{sec:FPC}
defines the encoding of FPC and its operational semantics in $\gdtt$. The
denotational semantics is defined and soundness is proved in
Section~\ref{sec:fpc:den}. Computational adequacy is proved in
Section~\ref{sec:fpc:adequacy}, and the relation capturing extensional
equivalence is defined in
Section~\ref{sec:extensional}. Section~\ref{sec:executing} shows how
to execute the denotational semantics of boolean programs.  We
conclude and discuss future work in Section~\ref{sec:conclusions}.

\paragraph{Acknowledgements.} We thank Nick Benton, Lars Birkedal,
Ale\v{s} Bizjak, and Alex Simpson for helpful discussions and
suggestions.

\section{Guarded recursion}
\label{sec:guarded-recursion-intro}
In this paper we work informally within a type theory with dependent
types, inductive types and guarded recursion.  Although inductive
types are not mentioned in~\cite{BGCMB16} the ones used here can be
safely added -- as they can be modelled in the topos of trees model --
and so the arguments of this paper can be formalised in 
Guarded Dependent Type Theory ($\gdtt$)~\cite{BGCMB16}.  We
start by recalling some core features of this theory, but postpone
delayed substitutions to Section~\ref{sec:gdtt} since these are not needed
for the moment.

When working in type theory, we use $\judgeq$ for judgemental equality
of types and terms and $\propeq$ for propositional equality (sometimes
$\propeq_A$ when we want to be explicit about the type). We also use
$\propeq$ for (external) set theoretical equality.

The core of guarded recursion consists of the type constructor $\laterbare$ 
and the fixed point operator $\fixcombinator : (\laterbare A \to A) \to A$ satisfying
\begin{equation}\label{eq:fixunfold}
\fixcombinator f = f(\purebare(\fixcombinator (f)))
\end{equation}
both introduced in Section~\ref{sec:fpc:intro:sgdt}. Elements of 
type $\laterbare A$ are intuitively elements of type $A$ available one time step from now. 
To illustrate the power of 
the fixed point operator, consider a type of guarded streams $\gstream{}$
satisfying 
\begin{equation} \label{eq:gstream}
\gstream{} \cong \NN \times \laterbare\gstream{}
\end{equation}
This is a guarded recursive type in the sense that the recursion variable 
appears under a $\laterbare$, and its elements are to be thought of as 
streams, whose head is immediately available and whose tails take
one time step to compute. The fixed point operator can be used to define
guarded streams by recursion. For example, the constant stream of a number
$n$ can be defined as $\fixcombinator (\lambda x. \pair nx)$, where the
type isomorphism (\ref{eq:gstream}) is left implicit. Note that the type of the fixed point
operator prevents us from defining elements like $\fixcombinator (\lambda x. x)$, which 
are not productive, in the sense that any element of the stream can be computed in finite
time. In fact, the type $\laterbare \gstream{} \to \gstream{}$ precisely captures productive
recursive stream definitions.

\rasmus{Use identity rather than isomorphism?}

The type constructor $\laterbare$ is an applicative functor in the
sense of~\cite{MP08}, which means that there is 
a ``later application''
$\app \co \laterbare (A \to B) \to \laterbare A \to \laterbare B$
written infix, satisfying
\begin{equation}
  \purebare (f) \circledast \purebare (t) \judgeq \purebare (f(t))
  \label{eq:fpc:next:laterappl}
\end{equation}
among other axioms (see also~\cite{BM13}).  In particular,
$\laterbare$ extends to a functor mapping $f\co A \to B$ to
$\lambda x\co \laterbare A \ld \purebare (f)\circledast x$.  Moreover,
the $\laterbare$ operator distributes over the identity type as
follows
\begin{equation}\laterbare (t \propeq_A u) \judgeq (\purebare t
  \propeq_{\laterbare A} \purebare u)\label{eq:later:id}\end{equation}

Guarded dependent type theory comes with universes in the style of
Tarski. In this paper, we will just use a single universe
$\U{}$. Readers familiar with~\cite{BGCMB16} should think of this as
$\U{\kappa}$, but since we work with a unique clock $\kappa$, we will
omit the subscript.  The universe comes with codes for type
operations, including $\codeop{+}\co \U{}\times \U{} \to \U{}$ for
binary sum types, codes for dependent sums and products, and
$\latercodebare \co \laterbare \U{} \to \U{}$ satisfying
\begin{equation}
  \El(\latercodebare (\purebare (A))) \judgeqty \laterold \El(A)
  \label{eq:latercode}
\end{equation}
where we use $\El(A)$ for the type corresponding to an element
$A\co \U{}$.  The type of $\latercodebare$ allows us to solve
recursive type equations using the fixed point combinator. For
example, if $A$ is small, i.e., has a code $\codeop{A}$ in $\U{}$, the
type equation (\ref{eq:lift:example}) can be solved by computing a
code of $LA$ as
\begin{equation}
  \codeop LA = \fixcombinator(\lambda X\co \laterbare \U{} \ld \codeop +(\codeop A,
  \latercodebare X)) 
  \label{eq:fix:lift:example}
\end{equation}
and then by taking the elements using $\El$. More precisely,
defining $LA$ as $\El(\codeop LA)$, $LA$ 
unfolds to
$\El(\codeop +(\codeop A,\latercodebare (\purebare (\codeop L A))))$
which is equal to $A + \El(\latercodebare (\purebare ( \codeop L A)))$
which is equal to $A + \laterbare LA$.  In this paper, we will only
apply the monad $L$ to small types $A$.

To ease presentation, we will usually not distinguish between types
and type operations on the one hand, and their codes on the other. We will still refer use the
notation $\latercodebare \co \laterbare \U{} \to \U{}$, but write $\laterbare$ for the composition
$\latercodebare \circ \purebare$. We generally leave $\El$
 implicit.

%

\subsection{The topos of trees model}
\label{sec:fpc:ToT}

The topos $\Sl$ of trees is the category of presheaves over $\omega$,
the first infinite ordinal. The category $\Sl$ models guarded
recursion~\cite{BMSS12} and provides useful intuitions, and so we
briefly recall it.

A closed type is modelled as an object of the topos of trees, i.e., as 
a family of sets $X(n)$ indexed by
natural numbers together with \emph{restriction maps}
$r_n^X\co X(n+1) \to X(n)$ as in the following diagram
\begin{equation}
  \label{eq:sobject}
  \begin{diagram}
    X(1) & X(2)\arrow{l}{} & X(3) \arrow{l}{} & X(4)\arrow{l}{} &
    \dots \arrow{l}{}
  \end{diagram}
\end{equation}

A term of type $Y$ in context $x : X$, for $X,Y$ closed types, 
is modelled as a morphism in $\Sl$, i.e., as a family of functions
$f_i : X(i) \to Y(i)$ obeying the \emph{naturality} condition
$f_i \circ r_{i}^{X} = r_{i}^{Y} \circ f_{i+1}$ as in the following
diagram
\begin{equation}
  \label{eq:nattransf}
  \begin{diagram}
    X(i) \arrow{d}{f_i}& X(i+1) \arrow{d}{f_{i+1}}\arrow{l}{r^{X}_i} \\
    Y(i) & Y(i+1)\arrow{l}{r^{Y}_i}
  \end{diagram}
\end{equation}

The $\laterbare$ type operator is modelled as an endofunctor in $\Sl$
such that $\laterbare X(1) = 1$, $\laterbare X(n+1) = X(n)$.
Intuitively, $X(n)$ is the $n$th approximation for computations of
type $X$, thus $X(n)$ describes the type $X$ as it looks if we have
$n$ computational steps to reason about it.

Using the proposition-as-types principle, types like $\laterbare^{3}0$
are non-standard truth values. Following the intuition that $\laterbare^{3}0(n)$
is the type $\laterbare^{3}0$ as it looks, if we have $n$ steps to reason about it,
$\laterbare^{3}0$ is the truth value of
propositions that appear true for $3$ computation steps, but then are
falsified after $4$.  In fact, in the model, $(\laterbare^{3} 0) (3)$
equals $1$, but $(\laterbare^{3} 0) (4)$ equals $0$ zero as depicted
by the following diagram
\begin{equation}
  \label{eq:laterthree}
  \begin{diagram}
    1 & 1\arrow{l}{} & 1 \arrow{l}{} & 0\arrow{l}{} & 0\arrow{l}{}&
    \dots \arrow{l}{}
  \end{diagram}
\end{equation}

\newcommand{\Set}{\textbf{Set}}
\newcommand{\homset}[3]{\text{Hom}_#1(#2,#3)}

\rasmus{I erased this: for an object $X$ we write $X^{\glob}$ for the set 
of global elements of $X$, since this is not used elsewhere}

The \emph{global elements} of a closed type $X$ is 
the set of morphisms from the constant object $1$ to $X$ in $\Sl$. 
This can be thought of as
the \emph{limit} of the sequence of (\ref{eq:sobject}) as a diagram in
$\Set$. 
This construction gives us the \emph{global view} of a type as it
allows us to observe \emph{all the computation at once}. For example,
the global elements of $\laterbare X$ correspond to those of 
$X$ simply by discarding the first component. 
%
Note that objects can have equal sets of global elements
without being isomorphic. In particular $0$ and $\laterbare^n 0$ are not
isomorphic.

For guarded recursive type equations, $X(n)$ describes the $n$th
unfolding of the type equation. For example, fixing an object $A$, the
unique solution to (\ref{eq:lift:example}) is
\[
  LA(n) = 1 + A(1) + \dots + A(n)
\]
with restriction maps defined using the restriction maps of $A$.
In particular, if $A$ is a constant presheaf, i.e., $A(n) = X$ for
some fixed $X$ and $r_n^A$ identities, then we can think of $LA(n)$ as
$\{0,\dots, n-1\}\times X + \{\bot\}$ with restriction map given by
$r_n(\bot) = \bot$, $r_n(n,x) = \bot$ and $r_n(i,x) = (i,x)$ for $i<n$. 
The set of global elements of
$LA$ is then isomorphic to $\N\times X + \{\bot\}$.  In particular, if
$X=1$, the set of global elements is $\bar\omega$, the natural numbers
extended with a point at infinity.

The global elements of $LA$, correspond to the elements of Capretta's
partiality monad~\cite{Cap05} $L^{\glob}$ defined as the coinductive
solution to the type equation
\begin{equation}\label{eq:partiality:monad}
  L^{\glob}A  \cong  A + L^{\glob}A 
\end{equation} 

Similarly, the type of $\gstream{}$ can be modelled as $\gstream{}(n) = \NN^n\times 1$. Note that
if these products associate to the right, we can even model (\ref{eq:gstream}) as an identity. 
The restriction maps of this type are projections, and the 
global elements of this type correspond to streams in the usual sense. 

\subsection{Universal quantification over clocks}
\label{sec:clock-variables}

The type of guarded streams $\gstream{}$ mentioned above, is not the usual 
coinductive type of streams. For example, a term $t : \gstream{}$ in context
$x : \gstream{}$ is a causal function of streams, i.e., one where the $n$ first elements
of the output depend only on the $n$ first elements of the input. This can be seen e.g. 
in the topos of trees model, where such a term is modelled by a family of maps 
$f_n :  \NN^n\times 1 \to  \NN^n\times 1$ commuting with projections. Causality
is crucial to the encoding of productivity in types mentioned above. 

On the other hand, a \emph{closed} term $t : \gstream{}$ is modelled by a global 
element of $\gstream{}$ and thus corresponds to a real stream of numbers. 
Likewise, if $t : \gstream{}$ only depends on a variable $x : \NN$, then $t$ denotes a map 
from the set of natural numbers to that of streams, because the context is modelled
as the constant topos of trees object $\NN$, with restriction maps being identities. More generally, say
a context is \emph{independent of time} if it is modelled as a constant object, i.e, one where all restriction 
maps are isomorphisms. The denotation of a term $t : \gstream{}$ in a context $\Gamma$ independent of time, 
corresponds to a map from $\Gamma(1)$ to the set of streams. 

The idea of independence of time can be captured syntactically using a notion of clocks, and universal 
quantification over these~\cite{AM13}. 
%
We now briefly recall 
this as implemented in \gdtt,  
referring to~\cite{BGCMB16} for details.

In $\gdtt$ all types and terms are typed in a clock context, i.e., a
finite set of names of clocks. For each clock $\kappa$, there is a
type constructor $\laterclock\kappa$, a fixed point combinator, and so
on. 
Each clock carries its own notion of time, and the idea of a context being independent of
time mentioned above, can be captured as a clock not appearing in a context. 

If $A$ is a type in a context where $\kappa$ does not appear, one can
form the type $\forall \kappa. A$, binding $\kappa$. This construction
behaves in many ways similarly to polymorphic quantification over
types in System F. There is an associated binding introduction form
$\alwaysterm{\kappa}{(-)}$ (applicable to terms where $\kappa$ does
not appear free in the context), and elimination form
$\alwaysapp{t}{\kappa'}$ having type $A\subst {\kappa'}{\kappa}$
whenever $t\co \alwaystype{\kappa}{A}$. 

Semantically, a closed type in the empty clock variable context is modelled by a set, and 
a type in a context of a single clock is modelled as an object in the topos of trees.
In the latter case, universal quantification over the single clock is modelled by taking the 
set of global elements. As we saw above, these sets correspond to coinductive types, and this 
also holds in the type theory: 
If $\gstream{}$ is the type of streams guarded on clock $\kappa$, i.e., satisfies 
$\gstream{} \cong \NN \times \laterclock\kappa\gstream{}$, then one can prove~\cite{AM13,Mog14} that
the type $\forall\kappa . \gstream{}$ behaves as a coinductive type of streams. Similarly, if 
$LA \cong A + \laterclock\kappa LA$, and $\kappa$ is not free in $A$, 
then $\forall\kappa . LA$ is a coinductive solution to 
$X \cong A + X$. This isomorphism arises 
as a composite of isomorphisms
\begin{align}
\nonumber \forall\kappa . LA & \cong \forall\kappa . (A + \laterclock\kappa LA) \\
\label{eq:forall:dist:plus} & \cong (\forall\kappa . A)  + (\forall\kappa . \laterclock\kappa LA) \\
\label{eq:forall:const} & \cong A + \forall\kappa. \laterclock\kappa LA \\
\label{eq:forall:later} & \cong A + \forall\kappa. LA
\end{align}
the components of which we recall below.
Using these encodings one can use guarded
recursion to program with coinductive types in such a way that typing guarantees 
productivity. We refer to~\cite{BM15} for a full model of guarded recursion with clocks,
in particular for how to model types with more than one free clock variable. 

The isomorphism (\ref{eq:forall:later}) arises from a general type isomorphism
$\forall\kappa . \laterclock\kappa A \cong \forall\kappa . A$ holding for all $A$.
The direction from right to left is induced by $\pure{}^\kappa : A \to\laterclock\kappa A$. 
For the direction from left to right, a form of elimination for $\laterclock\kappa$ is needed, but note that 
an unrestricted such of type $\laterclock\kappa A \to A$ in combination with fixed points 
makes the type system inconsistent. 
Instead \gdtt\ allows for a restricted elimination rule for
$\laterclock\kappa$: If $t$ is of type $\laterclock{\kappa}{A}$ in a context
where $\kappa$ does not appear free, then $\prev{\kappa}{t}$ has type
$\forall \kappa. A$.  Using $\prev{\kappa}{}$ we can define a term
$\force$:
\begin{equation}
  \begin{aligned}
    \force &: (\alwaystype{\kappa}{\laterclock{\kappa}{A}}) \to \alwaystype{\kappa}{A}\\
    \force &\eqdef \lambda x.\prev{\kappa}{\alwaysapp{x}{\kappa}}
  \end{aligned}\label{gdtt:force}
\end{equation}
The term $\force$ can be proved to be an isomorphism by the axioms 
\begin{equation} \label{eq:prev:beta:eta}
  \prev\kappa \pure{}^\kappa(t) \judgeq \Lambda\kappa . t \qquad \pure{}^\kappa(\alwaysapp{(\prev\kappa t)}{\kappa}) \judgeq t 
\end{equation}

If $\kappa$ is not free in $A$, the type  $\forall \kappa. A$ is isomorphic to
$A$, justifying the isomorphism (\ref{eq:forall:const}). 
The map $A \to \forall \kappa. A$ is simply $\lambda x \co A. \Lambda\kappa.x$. 
The other direction is given by application to a clock constant $\kappa_0$,
which we assume exists. These can be proved to be inverses of each other
using \emph{the clock irrelevance axiom}, which states that if $t : \forall\kappa. A$ and $\kappa$ does not 
appear free in $A$, then $\alwaysapp t{\kappa'} \judgeq \alwaysapp t{\kappa''}$ for all $\kappa'$ and $\kappa''$. 
Using $\force$ and the isomorphism $\forall \kappa. 0\cong 0$, one can prove that
$\forall \kappa. \laterclock\kappa^n 0$ is isomorphic to $0$, reflecting the fact that there are no global elements of 
$\laterbare^n 0$ in the model, as mentioned earlier. 
We refer to~\cite{BGCMB16} for details. 

The isomorphism (\ref{eq:forall:dist:plus}) is a special case of an isomorphism 
\begin{equation} \label{eq:forall:dist:sum}
\forall\kappa . (B + C) \cong (\forall\kappa . B) + (\forall\kappa . C)
\end{equation}
distributing $\forall\kappa$ over sums for all small types $B$ and $C$. To
describe this isomorphism, encode sum types as 
$B + C \eqdef \Sigma x : (1 + 1) . \copair BC(x)$ where $\copair BC$ is defined by cases by
$\copair BC(\inl(\star)) \judgeq B$ and $\copair BC(\inr(\star)) \judgeq C$. The result of applying the
left to right direction $d$ of the isomorphism to $x : \forall\kappa . (B + C)$ is defined by
cases of $\pi_1(\alwaysapp x{\kappa_0}) : 1 + 1$. If $\pi_1(\alwaysapp x{\kappa_0}) = \inl(\star)$, note that
for any $\kappa$, using the clock irrelevance axiom
\[\pi_1(\alwaysapp x{\kappa}) = \alwaysapp{(\Lambda\kappa . \pi_1(\alwaysapp x{\kappa}))}{\kappa} = 
\alwaysapp{(\Lambda\kappa . \pi_1(\alwaysapp x{\kappa}))}{\kappa_0} = \pi_1(\alwaysapp x{\kappa_0}) = \inl(\star)\]
and so $\Lambda\kappa . \pi_2(\alwaysapp x{\kappa})$ has type
\[\forall\kappa . \copair BC(\pi_1(\alwaysapp x{\kappa})) =  \forall\kappa . \copair BC(\inl(\star)) = \forall\kappa . C
\]
and so we can define in this case $d(x) = \Lambda\kappa . \inl(\pi_2(\alwaysapp x{\kappa}))$. The case of 
$\pi_1(\alwaysapp x{\kappa_0}) = \inr(\star)$ is similar.  
%
%
%
In fact, this construction generalises to an isomorphism 
\begin{equation}
  \forall \kappa. \Sigma (x : A). B  \cong \Sigma (x : A). \forall \kappa. B
  \label{eq:fpc:forallk:sigma:swap}
\end{equation}
valid whenever $\kappa$ is not free in $A$. 

Finally we note the following extensionality rule for quantification over clocks. 
\begin{equation}
\label{eq:forall:eta}
(t \propeq_{\forall\kappa . A} u) \judgeq \forall\kappa . (\alwaysapp t\kappa \propeq_A \alwaysapp u\kappa)
\end{equation}

In most of this paper we will work in a setting of a unique implicit clock $\kappa$, and simply write 
$\laterbare$ for $\laterclock\kappa$ to avoid cluttering all definitions and calculations with clocks. 

For the proof of computational adequacy we will need one more construction from $\gdtt$: The delayed substitutions. 
These will be recalled in Section~\ref{sec:gdtt}.

\section{FPC}
\label{sec:FPC}

This section defines the syntax, typing judgements and operational
semantics of FPC. These are inductive types in guarded type theory,
but, as mentioned earlier, we work informally in type theory, and in
particular remain agnostic with respect to choice of representation of
syntax with binding.

The typing judgements of FPC are defined in an entirely standard
way. The grammar for terms of FPC
\begin{align*}
  L,M,N ::= &\, \fpcunit \mid x \mid \fpcinl{M} \mid \fpcinr{M} 
            \mid \fpccase{L}{M}{N} \mid \fpcpair{M}{N} \\ 
            & \mid \fpcfst{M} \mid \fpcsnd{M} 
             \mid \lambda x : \tau .M \mid M N \mid \fpcfold{M} \mid \fpcunfold{N} 
\end{align*}
should be read as an inductive type of terms in the standard way.
Likewise the grammars for types and contexts and the typing judgements
defined in
Figure~\ref{fig:fpc:syntax}
should be read as defining inductive
types in type theory, allowing us to do proofs by induction over
e.g. typing judgements.

We denote by \FPCTypes, {\FPCTerms} and \FPCValues the types of
\emph{closed} FPC types and terms, and values of FPC and by
{\FPCOTerms} the type of all (also open) terms.  By a value we mean a
closed term matching the grammar
\begin{align*}
  v  ::= &\, \fpcunit  \mid \fpcinl{M} \mid
           \fpcinr{M} \mid \langle M, N \rangle
           \mid \lambda x : \tau . M \mid \fpcfold{M}
\end{align*}

\begin{figure}[tb]
\textbf{Well formed types}
  \begin{gather*}
    \Theta \in \text{Type Contexts} \eqdef \langle \rangle \mid \langle \Theta, \alpha \rangle \\
    \frac{}{\vdash \langle \rangle} \quad
    \frac{\vdash \Theta}{\vdash \Theta, \alpha} \alpha \not\in \Theta\\
    \frac{\vdash \Theta}{\Theta \vdash \Theta_i} 1 \le i \le \mid
    \Theta
    \mid \quad \frac{\vdash \Theta}{\Theta \vdash 1}\\
    \frac{\Theta, \alpha \vdash \tau}{\Theta \vdash \mu \alpha. \tau}
    \quad \frac{\Theta \vdash \tau_1 \quad \Theta \vdash
      \tau_2}{\Theta \vdash \tau_1 \texttt{op } \tau_2} \text{ for }
    \texttt{op} \in \{ +, \times, \to \}
  \end{gather*}
\textbf{Typing rules}
  \begin{gather*}
    \frac{x : \sigma \in \Gamma \qquad \cdot \vdash \Gamma }{\Gamma
      \vdash x : \sigma} \qquad \frac{}{\Gamma \vdash \fpcunit : \fpcunittype} \\
    \frac{\jud{\Gamma, x : \sigma}{M}{\tau}}{\jud{\Gamma}{(\lambda x : \sigma. M)}{\sigma \to \tau}} \quad
    \frac{\jud{\Gamma}{M}{\sigma \to \tau} \quad \jud{\Gamma}{N}{\sigma}}{\jud{\Gamma}{M N}{\tau}}\\
    \frac{\jud{\Gamma}{e}{\tau_1}}{\jud{\Gamma}{\fpcinl{e}}{\tau_1 +
        \tau_2}} \quad
    \frac{\jud{\Gamma}{e}{\tau_2}}{\jud{\Gamma}{\fpcinr{e}}{\tau_1 + \tau_2}}\\
    \frac{ \jud{\Gamma}{L}{\tau_1 + \tau_2} \quad \jud{\Gamma, x_1 :
        \tau_1}{M}{\sigma} \quad
      \jud{\Gamma, x_2 : \tau_2}{N}{\sigma}}{\jud{\Gamma}{\fpccase{L}{M}{N}}{\sigma}} \\
    \frac{\jud{\Gamma}{M}{\tau_1 \times
        \tau_2}}{\jud{\Gamma}{\fpcfst{M}}{\tau_1}} \qquad
    \frac{\jud{\Gamma}{M}{\tau_1 \times \tau_2}}{\jud{\Gamma}{\fpcsnd{e}}{\tau_2}}\quad
    \frac{\jud{\Gamma}{M}{\tau_1} \quad \jud{\Gamma}{N}{\tau_2}}{\jud{\Gamma}{\langle M,N\rangle}{\tau_1 \times \tau_2}}\\
    \frac{\jud{\Gamma}{M}{\mu\alpha.\tau}}{\jud{\Gamma}{\fpcunfold{M}}{\tau
        [ \mu \alpha. \tau / \alpha]}} \quad
    \frac{\jud{\Gamma}{M}{\tau [ \mu \alpha. \tau /
        \alpha]}}{\jud{\Gamma}{\fpcfold{M}}{\mu \alpha. \tau}}
  \end{gather*}
  \caption{Syntax of FPC}
  \label{fig:fpc:syntax}
\end{figure}

\subsection{Operational semantics}

\begin{figure}[tb]\textbf{Big-step semantics}
  \begin{gather*}
    \frac{}{v \bigstep^0 v} \\
    \frac{L \bigstep^k \fpcinl{L'} \quad  M[L'/x_1] \bigstep^m v}{\fpccase{L}{M}{N} \bigstep^{m+k} v}\quad \frac{L \bigstep^k \fpcinr{L'} \quad  N[L'/x_2] \bigstep^m v}{\fpccase{L}{M}{N} \bigstep^{m+k} v} \\
    \frac{L \bigstep^k \fpcpair{M}{N} \quad M \bigstep^l v}{\fpcfst{L}
      \bigstep^{k+l} v} \qquad \frac{L \bigstep^k \fpcpair{M}{N}
      \quad N \bigstep^l v}{\fpcsnd{L} \bigstep^{k+l} v}\\
    \frac{M \bigstep^k \lambda x. L \quad L[N/x] \bigstep^l v}{M N
      \bigstep^{k+l} v} \qquad
    \frac{M \bigstep^k \fpcfold{N} \quad N \bigstep^m v}{\fpcunfold{M}
      \bigstep^{k+m+1} v}
  \end{gather*}
\textbf{Small-step semantics}
    \begin{gather*}
    \inferrule*{}{(\lambda x: \sigma. M) (N) \to^0 M[N/x]} \qquad
    \inferrule*{}{\fpcunfold{(\fpcfold{M})} \to^1 M } \\
    \inferrule*{}{\fpccase{(\fpcinl{L})}{M}{N} \to^0 M [L/x_1]} \\
    \inferrule*{}{\fpccase{(\fpcinr{L})}{M}{N} \to^0 N [L/x_2]} \\
    \inferrule*{}{\fpcfst{\langle M, N \rangle} \to^0 M } \qquad
    \inferrule*{}{\fpcsnd{\langle M, N \rangle} \to^0 N } \\
    \qquad \inferrule*{M_1 \to^k M_2 \hspace{.5cm} k = 0,1}{E[M_1] \to^k E[M_2]} \gnl
    E  ::= \, [ \cdot ] \mid  E M \mid \fpccase{E}{M}{N}
    \mid \fpcfst{E} \mid \fpcsnd{E} \mid \fpcunfold{E} \gnl
        \frac{}{\many{M}{0}{M}} \qquad \frac{M \to^k M' \qquad \many{M'}{m}{N}}{\many{M}{k+m}{N}} 
  \end{gather*}
  \textbf{Guarded transitive closure of the small-step semantics}
   \begin{gather*}
     \inferrule*{M \to_*^0 N}{\manyk{M}{0}{N}} \qquad
     \inferrule*{\zeromany{M}{M'} \qquad M' \to^1 M'' \qquad \later{}{(\manyk{M''}{k}{N})}}{\manyk{M}{k+1}{N}}
   \end{gather*}
  \caption{Operational semantics for FPC.}
  \label{fig:op:sem}
\end{figure}
Figure~\ref{fig:op:sem} defines a big-step and a small-step operational semantics for FPC, as well as 
two transitive closures of the latter. All these definitions should be read as inductive types. 
Since the denotational semantics of FPC is intensional, counting reduction steps, it is necessary to also
count the steps in the operational semantics in order to state the
soundness and adequacy theorems precisely. More precisely, the
semantics counts the number of $\fpcunfoldbare$-$\fpcfoldbare$
reductions in the same fashion in which Escard{\'o} counted fix-point
reduction for PCF. 

The statement
\begin{equation} \label{eq:form:of:big:step} M \Downarrow^k v
\end{equation}
where $M$ is a term, $k$ a natural number, and $v$ a value, should be
read as '$M$ evaluates in $k$ steps to a value $v$. We can define more
standard big-step evaluation predicates as follows
\[
  M \Downarrow v \eqdef \Sigma k. M \bigstep^k v
\]


We note that the semantics is trivially deterministic.
\begin{lemma}
  \label{lem:fpc:small-step:det}
  The small-step semantics is deterministic: if $M\to^k N$ and
  $M\to^{k'} N'$, then $k=k'$ and $N= N'$.
\end{lemma}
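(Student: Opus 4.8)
The plan is to prove this by induction on the derivation of $M \to^k N$, reading the small-step rules of Figure~\ref{fig:op:sem} as an inductive type. The key observation is that the small-step semantics is \emph{syntax-directed}: for each possible shape of a term $M$, at most one reduction rule applies, and the decomposition of a non-value term as $E[R]$ with $R$ a redex is unique. So the argument is essentially a case analysis making this uniqueness precise.

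First I would handle the base cases, the six axiom rules (the $\beta$-rule for $\lambda$, the $\fpcunfoldbare$-$\fpcfoldbare$ rule, the two $\scase$ rules, and the two projection rules). For each of these the left-hand side is a term of a fixed outermost form — e.g. $\fpcfst{\fpcpair{M}{N}}$ — and one checks that no \emph{other} axiom rule and no instance of the evaluation-context rule can reduce a term of that form, because the immediate subterms required to be values/pairs/folds are already fully determined. In each such case $k$ and $N$ are uniquely determined by $M$, and the two $\fpcunfoldbare$-$\fpcfoldbare$/$\beta$ cases are the only ones producing $k=1$ versus $k=0$, but this distinction is also forced by the shape of $M$. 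Then for the inductive step, the evaluation-context rule $E[M_1] \to^k E[M_2]$ from $M_1 \to^k M_2$: here I would argue that if $M = E[M_1]$ with $M_1$ a redex, then $M$ is not itself one of the six redex shapes with its principal arguments being values (one needs a small sublemma that a redex is never a value, and that the hole in an evaluation context is never ``filled by a value'' in a way that creates a redex at the outer level), so the only applicable rule for $M$ is again an evaluation-context rule; the decomposition $M = E[M_1]$ into evaluation context plus redex is unique (by induction on $E$, using that $E$'s grammar is deterministic in which subterm the hole descends into), hence $M_1$, $M_2$, $E$ and $k$ all match, and the inductive hypothesis on $M_1 \to^k M_1'$ versus $M_1 \to^{k'} M_1''$ closes the case.

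The main obstacle — really the only nontrivial point — is the uniqueness-of-decomposition lemma: that a non-value closed term of FPC has at most one way of being written as $E[R]$ where $R$ is one of the six redexes, and more generally that the outermost applicable rule is unique. This is where one actually uses the structure of the evaluation-context grammar $E ::= [\cdot] \mid E\,M \mid \fpccase{E}{M}{N} \mid \fpcfst{E} \mid \fpcsnd{E} \mid \fpcunfold{E}$: in each production the hole sits in a determined position, so two decompositions must agree on where the hole goes, and then induction finishes it. I would prove this as an auxiliary lemma (or fold it into the case analysis), and once it is in hand the determinism statement — $k = k'$ and $N = N'$ — follows immediately, since the reduct and the step-count annotation ($0$ or $1$) are both functions of the unique redex exposed by the unique evaluation context. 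No deep machinery is needed; the care is entirely in organizing the finitely many syntactic cases.
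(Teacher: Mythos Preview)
Your proposal is correct and follows the standard route for proving determinism of a structural operational semantics with evaluation contexts: induction on the derivation together with unique decomposition of a non-value into $E[R]$ with $R$ a redex. The paper itself does not give a proof; it simply remarks that ``the semantics is trivially deterministic'' and states the lemma without further argument. Your sketch is therefore more explicit than what the paper provides, and the unique-decomposition auxiliary lemma you identify is indeed the only point requiring any care.
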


Of the two transitive closures of the small-step semantics defined in Figure~\ref{fig:op:sem} the first is 
a standard one, equivalent to the big-step operational semantics. The second is a guarded version 
which synchronises the steps of FPC with those of the metalogic.  This is needed for the statement of the soundness
and adequacy theorems, and also allows for guarded recursion to be
used in the proofs of these.
The next lemma states the relationship between the big-step semantics and 
the two transitive closures of the small-step semantics
\begin{lemma}
  Let $M$ and $N$ be FPC terms,  $v$ a value and $k$ a natural number. Then
  \begin{enumerate}
  \item $M \bigstep^k v$ iff $\many{M}{k}{v}$
  \item $\many{M}{k}{N}$ iff $\forall \kappa. \manyk{M}{k}{N}$
  \end{enumerate}
  \label{lem:fpc:bigstep:many:manyk:soundness}
\end{lemma}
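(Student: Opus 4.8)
The plan is to treat the two parts independently. Part~(1) is the textbook equivalence between a big-step and a small-step semantics, refined so that both count $\fpcunfoldbare$--$\fpcfoldbare$ reductions; part~(2) turns the guarded structure of $\manyk{}{}{}$ into iterated applications of the clock-quantifier isomorphisms of Section~\ref{sec:clock-variables}.

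For part~(1) I would first record two routine sublemmas about $\many{}{}{}$, each by induction on the reduction: it is closed under concatenation ($\many{M}{j_1}{M'}$ and $\many{M'}{j_2}{M''}$ give $\many{M}{j_1+j_2}{M''}$) and under evaluation contexts ($\many{M}{j}{M'}$ gives $\many{E[M]}{j}{E[M']}$, using the congruence rule). The direction $M \bigstep^k v \Rightarrow \many{M}{k}{v}$ is then a direct induction on the derivation of $M \bigstep^k v$: in each case the premises give, by the induction hypothesis, reduction sequences for the immediate subterms, which are lifted into the appropriate evaluation context and glued together with the relevant head step --- a $\to^0$ step for $\beta$, projections and $\fpccasebare$, and a $\to^1$ step for $\fpcunfoldbare$--$\fpcfoldbare$, which is exactly where the extra $1$ in the $\fpcunfoldbare$ rule comes from. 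For the converse I would prove a head-expansion lemma: if $M \to^j M'$ and $M' \bigstep^m v$ then $M \bigstep^{j+m} v$. Noting that every step arises as $M = E[R] \to^j E[R'] = M'$ with $R \to^j R'$ a head redex (easy induction on the derivation of $M \to^j M'$, since evaluation contexts compose), the head-expansion lemma follows by induction on $E$: the case $E = [\cdot]$ is a case analysis of the head redexes, each time rebuilding the big-step derivation with the matching step count, and each context former uses inversion of the corresponding big-step rule, the induction hypothesis on the smaller context, and re-application of that rule. Then $\many{M}{k}{v} \Rightarrow M \bigstep^k v$ follows by induction on the length of the reduction, using $v \bigstep^0 v$ at the end and head expansion at each step.

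For part~(2) the key is a decomposition lemma for $\many{}{}{}$: $\many{M}{0}{N}$ holds iff $\zeromany{M}{N}$, and $\many{M}{k+1}{N}$ holds iff there are $M', M''$ with $\zeromany{M}{M'}$, $M' \to^1 M''$ and $\many{M''}{k}{N}$. The forward direction is an induction on the derivation of $\many{M}{k}{N}$ that pulls the first $\to^1$ step to the front, past the initial block of $\to^0$ steps; the backward direction just reassembles a reduction sequence (using concatenation). This is precisely the shape of the defining clauses of $\manyk{}{}{}$, except that the tail after the $\to^1$ step is guarded by $\laterbare$. Now $\forall\kappa.\manyk{M}{0}{N}$ is $\zeromany{M}{N}$ since $\manyk{}{0}{}$ is $\kappa$-free; and inverting the inductive definition of $\manyk{}{}{}$, the type $\forall\kappa.\manyk{M}{k+1}{N}$ is $\forall\kappa.\,\Sigma M'.\,\Sigma M''.\,(\zeromany{M}{M'})\times(M'\to^1 M'')\times\laterclock{\kappa}(\manyk{M''}{k}{N})$; pushing $\forall\kappa$ through the $\Sigma$'s over the $\kappa$-free type of FPC terms via (\ref{eq:fpc:forallk:sigma:swap}) and through the product, collapsing it on the $\kappa$-free factors $\zeromany{M}{M'}$ and $M'\to^1 M''$ via (\ref{eq:forall:const}), and using $\forall\kappa.\laterclock{\kappa}A \cong \forall\kappa.A$ from (\ref{eq:forall:later}) on the last factor, this is equivalent to $\Sigma M'.\,\Sigma M''.\,(\zeromany{M}{M'})\times(M'\to^1 M'')\times(\forall\kappa.\manyk{M''}{k}{N})$. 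Thus $\many{M}{k}{N}$ and $\forall\kappa.\manyk{M}{k}{N}$ satisfy the same recursive characterisation in $k$, and a straightforward induction on $k$ shows them logically equivalent.

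The genuinely routine parts are the two sublemmas about $\many{}{}{}$ and the case analyses in part~(1). The step I expect to need the most care is the decomposition lemma for $\many{}{}{}$ together with the bookkeeping of the clock-quantifier isomorphisms in part~(2): one must check that the $\Sigma$-swap (\ref{eq:fpc:forallk:sigma:swap}) applies (the bound terms inhabit the type of FPC terms, which is independent of $\kappa$), that the factors one collapses are really $\kappa$-free, and that the guarded premise appears in exactly the form $\forall\kappa.\laterclock{\kappa}(-)$ demanded by (\ref{eq:forall:later}).
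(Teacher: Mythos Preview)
Your proposal is correct. For part~(1) you spell out the standard argument that the paper simply calls a textbook result and omits. For part~(2) your route is slightly different from the paper's: you first prove a decomposition lemma for $\many{}{}{}$ (isolating the first $\to^1$ step) and then run a single induction on $k$ showing both sides satisfy the same recursion, whereas the paper does the two directions asymmetrically --- left to right by induction on the derivation of $\many{M}{k}{N}$ (handling the cases $M\to^0 M'$ and $M\to^1 M'$ directly), and right to left by induction on $k$ using $\force$ on the guarded tail, exactly as you do. Your organisation is a bit more uniform but costs you the extra decomposition lemma; the paper's left-to-right argument avoids that lemma entirely. The clock-isomorphism bookkeeping you flag (pushing $\forall\kappa$ through $\Sigma$ over $\kappa$-free FPC terms, collapsing it on $\kappa$-free factors, and using $\force$ for $\forall\kappa.\laterclock\kappa A \cong \forall\kappa.A$) is precisely what the paper uses, though it states it more tersely.
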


Note that in particular  $\many{M}{k}{N}$ implies $\manyk{M}{k}{N}$. The opposite implication does not
hold, as we shall see in the examples below. 

\begin{proof}
  The first statement is a essentially a textbook result on operational semantics, and we omit the proof.

  For the second statement the proof from left to right is by induction on $\many{M}{k}{N}$.
  The case of $M = N$ is trivial, so consider the case when $M \to^k M'$ and
  $\many{M'}{m}{N}$. 
  When $k = 0$, by definition
  $\many{M}{0}{M'}$, and by induction hypothesis we know that
  $\forall \kappa. \manyk{M'}{m}{N}$. Thus, $\manyk{M}{m}{N}$ holds for any $\kappa$, and so
  also $\forall \kappa. \manyk{M}{m}{N}$, since $\kappa$ is not free in the assumption $\many{M}{k}{N}$. 
  When $k = 1$ by induction hypothesis
  $\forall \kappa. \manyk{M'}{m}{N}$ and thus, for any $\kappa$, $M \to^1 M' \text{ and } \laterclock\kappa
  (\manyk{M'}{m}{N})$. As before, this allows us to conclude $\forall \kappa. \manyk{M}{m+1}{N}$.

  The right to left implication is proved by induction on
  $k$. When $k = 0$ the clock $\kappa$ is not free in $\manyk{M}{k}{N}$
  and so $\forall \kappa. \manyk{M}{k}{N}$ is isomorphic to $\manyk{M}{k}{N}$, 
  which implies $\many{M}{k}{N}$. When $k = k' + 1$ the assumption $\forall \kappa. \manyk{M}{k}{N}$
  implies that $\zeromany{M}{N'}$, $\toone{N'}{N''}$
  and $\forall\kappa . \laterclock\kappa (\manyk{N''}{k'}{N})$.  By the 
  type isomorphism (\ref{gdtt:force}) the latter implies
  $\forall \kappa. (\manyk{N''}{k'}{N})$, which by the induction hypothesis implies 
  $\many{N''}{k'}{N}$. Thus we conclude $\many{M}{k}{N}$.
\end{proof}

\subsection{Examples}
As an example of a recursive FPC type, one can encode the natural
numbers as
\begin{align*}
  \pcfnattype & \eqdef \mu \alpha. 1 + \alpha \\
  \pcfzero & \eqdef \fpcfold{(\fpcinl{(\fpcunit)})}\\
  \pcfsucc{M} & \eqdef \fpcfold{(\fpcinr{(M)})} 
\end{align*}
Using this definition we can define the term $\pcfifzbare$ of PCF.  If
$L$ is a term of type $\pcfnattype$ and $M$,$N$ are terms of type
$\sigma$ define $\pcfifzbare$ as
\[
  \pcfifzbare\;L\;M\;N \eqdef \fpccase{(\fpcunfold{L})}{M}{N}
\]
where $x_1,x_2$ are fresh.  It is easy to see that
$\manyk{\pcfifzbare\;\pcfzero\;M\;N}{k+1}{v}$ iff
$\laterbare (\manyk{M}{k}{v})$ and that
$\manyk{\pcfifzbare\;(\pcfsucc{L})\;M\;N}{k+1}{v}$ iff
$\laterbare (\manyk{N}{k}{v})$ for any $L$ term of type
$\pcfnattype$.  For example, $\manyk{\pcfifz{1}{0}{1}}{2}{42}$ is
$\laterbare 0$. On the other hand,  $\many{\pcfifz{1}{0}{1}}{2}{42}$ is 
equivalent to $0$, showing that $\manyk{}{}{}$ and $\many{}{}{}$ are not equivalent. 

Recursive types introduce divergent terms. For example, given a type
$A$,
the Turing fixed point combinator on $A$ can be encoded as follows:
\begin{align*}
  & B \eqdef \mu \alpha. (\alpha \to (A \to A) \to A) \\ 
  & \theta : B \to (A \to A) \to A \\ 
  & \theta \eqdef \lambda x \lambda y . y (\fpcunfold{x}\;x\;y ) \\
  & \pcffix{A}{} \eqdef \theta (\fpcfold{\theta})
\end{align*}
An easy induction shows that
$\manyk{(\pcffix{\sigma}{(\lambda x. x)}}{k}{v}) \propeq
\laterbare^{k} 0$, where $0$ is the empty type.

If $\many{M}{k}{v}$ with $v$ a value and $M$ a term, then
\begin{itemize}
\item $\manyk{M}{k}{v}$ is true
\item $\manyk{M}{n}{v}$ is logically equivalent to
  $\laterbare^{\texttt{min}(n,k)} 0 $ if $n \neq k$, where $0$ is the
  empty type
\end{itemize}
If, on the other hand, $M$ is divergent in the sense that for any $k$
there exists an $N$ such that $\many{M}{k}{N}$, then $\manyk{M}{n}{v}$
is equivalent to $\laterbare^{n}0$.

\section{Denotational Semantics}
\label{sec:fpc:den}
We now define the denotational semantics of FPC.  First we recall the
definition of the guarded recursive version of the \emph{lifting
  monad} on types from~\cite{PMB15}.  This is defined as the
\emph{unique} solution to the guarded recursive type equation
\[L A \cong A + \laterbare LA\] which exists because the recursive
variable is guarded by a $\laterbare$. Recall
(Section~\ref{sec:guarded-recursion-intro}) that guarded recursive
types are defined as fixed points of endomaps on the universe, so $LA$
is only defined for small types $A$.  We will only apply $L$ to small
types in this paper.
 
The isomorphism induces a map $\tick{}{LA} : \laterbare LA \to LA$ and
a map $\now{}{} \co A \to LA$. An element of $LA$ is either of the
form $\eta(a)$ or $\tick{}{} (r)$. We think of these cases as values
``now'' or computations that ``tick''.  Moreover, given $f : A \to B$
with $B$ a $\laterbare$-algebra (i.e., equipped with a map
$\tick{}B\co \laterbare B \to B$), we can lift $f$ to a homomorphism
of $\laterbare$-algebras $\hat f \co LA \to B$ as follows
\begin{equation}\begin{aligned}
    \hat f(\eta(a)) & \eqdef f(a) \\
    \hat f(\tick{}{}(r)) & \eqdef \tick{}{B}(\purebare (\hat f)
    \circledast r)
  \end{aligned}
  \label{def:fpc:unique:ext}
\end{equation}
Formally $\hat f$ is defined as a fixed point of a term of type
$\laterbare (LA \to B) \to LA \to B$. Recall that
$\lambda r. \purebare (\hat f) \circledast r$ is the application of
the functor $\laterbare$ to the map $\hat f$, thus $\hat f$ is an
algebra homomorphism.

Intuitively $LA$ is the type of computations possibly returning an
element of $A$, recording the number of steps used in the
computation. We can define the divergent computation as
$\bot \eqdef \fixcombinator(\theta)$ and a ``delay'' map
$\delay{}{LA}$ of type $LA \to LA$ for any $A$ as
$\delay{}{LA} \eqdef \tick{}{LA} \circ \purebare$. The latter can be
thought of as adding a step to a computation.  The lifting $L$ extends
to a functor. For a map $f : A \to B$ the action on morphisms can be
defined using the unique extension as
$L(f) \eqdef \widehat{\eta \circ f}$.

\subsection{Interpretation of types}
A type judgement $\Theta \vdash \tau$ is interpreted as a map of type
$\U{}^{\vert \Theta \vert} \to \U{}$, where $\vert \Theta \vert$ is
the cardinality of the set of variables in $\Theta$.  This
interpretation map is defined by a combination of induction and
\emph{guarded recursion} for the case of recursive types as in
Figure~\ref{fig:fpc:interp}.

\begin{figure}[tb]
  \begin{align*}
    \den{\Theta \vdash \alpha} (\rho) & \eqdef \rho(\alpha) \\
    \den{\Theta \vdash 1} (\rho) & \eqdef L 1 \\
    \den{\Theta \vdash \tau_1 \times \tau_2}(\rho) & \eqdef \den{\Theta
                                                     \vdash
                                                     \tau_1}(\rho) \times \den{\Theta \vdash \tau_2}(\rho) \\
    \den{\Theta \vdash \tau_1 + \tau_2} (\rho) & \eqdef L(\den{\Theta
                                                 \vdash
                                                 \tau_1}(\rho) + \den{\Theta \vdash \tau_2}(\rho))\\
    \den{\Theta \vdash \tau_1 \to \tau_2} (\rho) & \eqdef \den{\Theta
                                                   \vdash  \tau_1}(\rho) \to \den{\Theta \vdash \tau_2} (\rho)\\
    \den{\Theta \vdash \mu \alpha. \tau} (\rho) & \eqdef
                                                  \later{}{(\den{\Theta, \alpha \vdash \tau}(\rho, \den{\Theta
                                                  \vdash \mu \alpha. \tau} (\rho)))}
  \end{align*}
  \caption{Interpretation of FPC types}
  \label{fig:fpc:interp}
  \label{def:fpc:interp}
\end{figure}

More precisely, the case of recursive types is defined to be the fixed
point of a map from $\later{}{(\U{}^{\vert \Theta \vert} \to \U{})}$
to $\U{}^{\vert \Theta \vert} \to \U{}$ defined as follows:
\begin{equation}
  \lambda X. 
  \lambda \rho. \latercode{}  (\pure{}{(\lambda Y : \U{} . \den{\Theta, \alpha \vdash \tau}(\rho, Y))} \app (X \app \purebare(\rho)))
  \label{def:fpc:interp:rec:type}
\end{equation}
ensuring 
\begin{align*}
   \den{\Theta \vdash \mu \alpha. \tau}(\rho)
  & \judgeq
    \latercode{}  (\pure{}{(\lambda Y : \U{} . \den{\Theta, \alpha \vdash \tau}(\rho, Y))} \app (\pure{}(\den{\Theta \vdash \mu \alpha. \tau}) \app \purebare(\rho))) \\
  & \judgeq
    \latercode{}  (\pure{}{(\lambda Y : \U{} . \den{\Theta, \alpha \vdash \tau}(\rho, Y))} \app (\pure{}(\den{\Theta \vdash \mu \alpha. \tau}(\rho)))) \\
  & \judgeq
    \latercode{}  (\pure{}{(\den{\Theta, \alpha \vdash \tau}(\rho, \den{\Theta \vdash \mu \alpha. \tau}(\rho)))}) \\
  & \judgeq
    \laterbare(\den{\Theta, \alpha \vdash \tau}(\rho, \den{\Theta \vdash \mu \alpha. \tau}(\rho))) 
\end{align*}

The first equation is the application of rule (\ref{eq:fixunfold}) for
the guarded fix-point combinator, whereas the second equation is
derived by distributivity over the later application operator
described by rule (\ref{eq:fpc:next:laterappl}). Finally, the last
equation is derived by the fact that the elements of the code of the
later operator is the later operator on types (rule
(\ref{eq:latercode})).

We prove now the substitution lemma for types which states that
substitution behaves as expected, namely that substituting type
variables in the syntax with syntactic types corresponds to applying a
dependent type $\U{} \to \U{}$ to a type $\U{}$. This can be proved
using guarded recursion in the case of recursive types.
\begin{lemma}[Substitution Lemma for Types]
  \label{lem:fpc:subst:types}
  Let $\sigma$ be a well-formed type with variables in $\Theta$ and
  let $\rho$ be of type $\U{}^{\vert \Theta \vert}$. If
  $\Theta, \beta \vdash \tau$ then
  \[\den{\Theta \vdash \tau[\sigma/\beta]}(\rho) = \den{\Theta, \beta
      \vdash \tau} (\rho, \den{\Theta \vdash \sigma} (\rho))\]
\end{lemma}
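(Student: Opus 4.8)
The plan is to argue by induction on the structure of the well-formed type $\tau$, with the case of recursive types resolved by an auxiliary use of guarded (L\"ob) recursion. To make the induction go through smoothly I would first generalise the statement so that the substituted variable need not be the last one in the context: for all type contexts $\Theta_1,\Theta_2$ with $\Theta_1,\beta,\Theta_2 \vdash \tau$, all $\sigma$ with $\Theta_1\vdash\sigma$, and all $\rho_1 : \U{}^{\vert\Theta_1\vert}$ and $\rho_2 : \U{}^{\vert\Theta_2\vert}$,
\[
\den{\Theta_1,\Theta_2\vdash\tau[\sigma/\beta]}(\rho_1,\rho_2) = \den{\Theta_1,\beta,\Theta_2\vdash\tau}\bigl(\rho_1,\den{\Theta_1\vdash\sigma}(\rho_1),\rho_2\bigr).
\]
Here we use that $\den{-}$ depends only on which universe element is assigned to each type variable and not on the order of the variables, so the position of $\beta$ is immaterial; the point of this generalised form is that when we pass under a binder $\mu\alpha$, extending $\Theta_2$ by $\alpha$, the type $\sigma$ is still interpreted over $\Theta_1$ alone, so no separate weakening lemma is needed. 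The statement of the lemma is the instance $\Theta_2 = \langle\rangle$.

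The base cases and the non-recursive inductive cases are routine. For a type variable $\gamma$: if $\gamma=\beta$ then $\tau[\sigma/\beta]=\sigma$ and both sides compute to $\den{\Theta_1\vdash\sigma}(\rho_1)$; if $\gamma\neq\beta$ then $\tau[\sigma/\beta]=\gamma$ and both sides compute to the relevant component of the environment. For $\tau=1$ both sides are $L1$. For $\tau=\tau_1\times\tau_2$, $\tau_1+\tau_2$ and $\tau_1\to\tau_2$, substitution commutes with the type former, and the claim follows by applying the induction hypothesis to $\tau_1$ and $\tau_2$ and using that $\times$, $L(-+-)$ and $\to$ preserve the resulting equalities.

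The real content is the case $\tau=\mu\alpha.\tau'$, which is where guarded recursion enters. Writing $S=\den{\Theta_1\vdash\sigma}(\rho_1)$ and abbreviating $A=\den{\Theta_1,\Theta_2\vdash(\mu\alpha.\tau')[\sigma/\beta]}(\rho_1,\rho_2)$ and $B=\den{\Theta_1,\beta,\Theta_2\vdash\mu\alpha.\tau'}(\rho_1,S,\rho_2)$, the goal is $A=B$. Since $(\mu\alpha.\tau')[\sigma/\beta]=\mu\alpha.(\tau'[\sigma/\beta])$ and, by the unfolding of the interpretation of recursive types derived after (\ref{def:fpc:interp:rec:type}), $A=\laterbare(\den{\Theta_1,\Theta_2,\alpha\vdash\tau'[\sigma/\beta]}((\rho_1,\rho_2),A))$ and $B=\laterbare(\den{\Theta_1,\beta,\Theta_2,\alpha\vdash\tau'}((\rho_1,S,\rho_2),B))$, I would argue by L\"ob induction: assume $h:\laterbare(A=B)$. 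Applying the (structural) induction hypothesis to $\tau'$ in the context extended by $\alpha$, with $A$ assigned to $\alpha$, gives $\den{\Theta_1,\Theta_2,\alpha\vdash\tau'[\sigma/\beta]}((\rho_1,\rho_2),A)=\den{\Theta_1,\beta,\Theta_2,\alpha\vdash\tau'}((\rho_1,S,\rho_2),A)$, hence $A=\laterbare(\den{\Theta_1,\beta,\Theta_2,\alpha\vdash\tau'}((\rho_1,S,\rho_2),A))$. From $h$, using that $\den{\Theta_1,\beta,\Theta_2,\alpha\vdash\tau'}((\rho_1,S,\rho_2),-)$ is a function and that function application may be propagated under $\laterbare$ by (\ref{eq:fpc:next:laterappl}), we obtain $\laterbare\bigl(\den{\Theta_1,\beta,\Theta_2,\alpha\vdash\tau'}((\rho_1,S,\rho_2),A)=\den{\Theta_1,\beta,\Theta_2,\alpha\vdash\tau'}((\rho_1,S,\rho_2),B)\bigr)$; by (\ref{eq:later:id}) followed by an application of $\latercodebare$ this is an identity of the corresponding $\laterbare$-types, so $A=\laterbare(\den{\Theta_1,\beta,\Theta_2,\alpha\vdash\tau'}((\rho_1,S,\rho_2),B))=B$. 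This establishes $\laterbare(A=B)\to(A=B)$, and $\fixcombinator$ then yields $A=B$, completing the case.

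The step I expect to be the main obstacle is precisely this recursive-type case: one must check that the later-equality $\laterbare(A=B)$ is exactly the right L\"ob hypothesis — it is, because $\tau'$ may use $\alpha$ arbitrarily but every such occurrence sits under the explicit $\laterbare$ introduced in (\ref{def:fpc:interp:rec:type}) — and then manipulate identity types under $\laterbare$ via (\ref{eq:later:id}) and (\ref{eq:fpc:next:laterappl}) so that the two unfoldings line up. The only other mild subtlety is the bookkeeping involved in substituting for a variable that is not at the end of the context, which the generalisation above, together with order-invariance of $\den{-}$, takes care of.
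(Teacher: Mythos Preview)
Your proof is correct and follows essentially the same route as the paper: structural induction on $\tau$, with the $\mu\alpha.\tau'$ case handled by guarded (L\"ob) recursion, using the structural hypothesis on the body together with the guarded hypothesis, pushed through via (\ref{eq:later:id}), to align the two unfoldings. The only difference is bookkeeping: you generalise the statement so that $\beta$ may sit anywhere in the context, whereas the paper keeps $\beta$ last, applies the induction hypothesis in context $\Theta,\alpha,\beta$ (silently weakening $\sigma$ to $\Theta,\alpha$), and then exchanges $\alpha$ and $\beta$ at the end---both arguments ultimately rest on the same order-invariance of $\den{-}$.
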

\begin{proof}
  The proof is by induction on $\Theta, \beta \vdash \tau$. Most cases
  are straightforward, and we just show the case of
  $\Theta, \beta \vdash \foldedtype$.  The proof of this case is by
  \emph{guarded recursion}, and thus we assume that
  \begin{equation}
    \later{}{(\den{\Theta \vdash (\mu\alpha.\tau)[\sigma/\beta]}(\rho)
      = \den{\Theta, \beta \vdash \mu \alpha.\tau} (\rho, \den{\Theta
        \vdash \sigma} (\rho)))}
    \label{lem:fpc:subst:types:gr}
  \end{equation}
  Assuming (without loss of generality) that $\alpha$ is not $\beta$ 
  we get the following series of equalities
\begin{align*}
 & \den{\Theta \vdash   (\foldedtype)  [\sigma/\beta]}  (\rho) \\
& \qquad = \den{\Theta\vdash \mu \alpha. (\tau [\sigma/\beta])} (\rho) \\
 &\qquad = \later{}{(\den{\Theta, \alpha \vdash \tau[\sigma/\beta]}(\rho,
    \den{\Theta \vdash \mu \alpha. (\tau[\sigma/\beta])}(\rho)))} \\
 &\qquad = \later{}{(\den{\Theta, \alpha, \beta \vdash \tau}(\rho, \den{\Theta \vdash\mu
      \alpha. (\tau[\sigma/\beta])}(\rho), \den{\Theta,\alpha \vdash
      \sigma} (\rho, \den{\mu
      \alpha. (\tau[\sigma/\beta])}(\rho))))} \\
 &\qquad =     \later{}{(\den{\Theta, \alpha, \beta \vdash \tau}(\rho, \den{\Theta \vdash\mu
        \alpha. (\tau[\sigma/\beta])}(\rho), \den{\Theta \vdash
        \sigma} (\rho)))} 
\end{align*}
  The latter equals 
  \[\latercodebare (\purebare (\lambda X \lambda Y . \den{\Theta, \alpha, \beta \vdash
      \tau}(\rho, X, Y)) \app (\purebare (\den{\Theta \vdash\mu
        \alpha. (\tau[\sigma/\beta])} (\rho))) \app \purebare \den{\Theta \vdash
      \sigma} (\rho)) \]
  By (\ref{eq:later:id}), (\ref{lem:fpc:subst:types:gr}) implies
  \[
  \purebare(\den{\Theta \vdash \mu\alpha.(\tau[\sigma/\beta])}(\rho))
  = \purebare(\den{\Theta, \beta \vdash \mu \alpha.\tau} (\rho,
  \den{\Theta \vdash \sigma} (\rho)))
  \]
and so
\begin{align*}
 \den{\Theta \vdash \foldedtype [\sigma/\beta]} (\rho) 
 & = \later{}{(\den{\Theta, \alpha, \beta \vdash \tau}(\rho,
    \den{\Theta, \beta \vdash \mu \alpha. \tau}(\rho, \den{\Theta
      \vdash \sigma}(\rho)), \den{\Theta \vdash \sigma} (\rho)))} \\
 & = \later{}{(\den{\Theta, \beta, \alpha \vdash \tau}(\rho, \den{\Theta \vdash \sigma}
        (\rho),           \den{\Theta, \beta \vdash \mu
        \alpha. \tau}(\rho, \den{\Theta \vdash \sigma}(\rho))))} \\
 & = \den{\Theta, \beta \vdash \mu \alpha. \tau}(\rho,
    \den{\Theta \vdash \sigma}(\rho))
\end{align*}
\end{proof}

By direct use of the Substitution Lemma we can prove that the
interpretation of the recursive type equals the interpretation of the
unfolding of the recursive type itself, only one step
later. Intuitively, this means that we need to consume one
computational step to look at the data.
\begin{lemma} \label{lem:fpc:fix:eq} For all types $\tau$ and
  environments $\rho$ of type $\U{}^{\vert \Theta \vert}$,
  \[\den{\Theta \vdash \mu \alpha. \tau}(\rho) \propeq
    \later{}{\den{\Theta \vdash \tau[\mu \alpha. \tau/
        \alpha]}}(\rho)\]
\end{lemma}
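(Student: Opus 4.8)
The plan is to derive this directly as a corollary of the Substitution Lemma for Types (Lemma~\ref{lem:fpc:subst:types}) together with the judgemental unfolding of the interpretation of $\mu\alpha.\tau$ already established in the chain of equalities following~(\ref{def:fpc:interp:rec:type}). Recall that that chain gives the judgemental equality
\[
\den{\Theta \vdash \mu \alpha. \tau}(\rho) \judgeq \laterbare\bigl(\den{\Theta, \alpha \vdash \tau}(\rho, \den{\Theta \vdash \mu \alpha. \tau}(\rho))\bigr),
\]
so it remains only to identify the argument of $\laterbare$ with $\den{\Theta \vdash \tau[\mu \alpha. \tau/\alpha]}(\rho)$.

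First I would instantiate Lemma~\ref{lem:fpc:subst:types}, taking the bound variable $\beta$ there to be $\alpha$, the type called $\tau$ there to be our $\tau$ (so that $\Theta, \alpha \vdash \tau$), and the substituted type $\sigma$ there to be $\mu\alpha.\tau$, which is a well-formed type with variables in $\Theta$ since $\Theta \vdash \mu\alpha.\tau$. The lemma then yields the propositional equality
\[
\den{\Theta \vdash \tau[\mu \alpha. \tau/\alpha]}(\rho) \propeq \den{\Theta, \alpha \vdash \tau}\bigl(\rho, \den{\Theta \vdash \mu \alpha. \tau}(\rho)\bigr).
\]
Applying $\laterbare$ (that is, $\latercodebare \circ \purebare$) to both sides — using that propositional equality is a congruence for any map $\U{} \to \U{}$ — and then composing with the judgemental equality above, we obtain
\[
\den{\Theta \vdash \mu \alpha. \tau}(\rho) \propeq \laterbare\,\den{\Theta \vdash \tau[\mu \alpha. \tau/\alpha]}(\rho),
\]
which, reading $\later{}{\den{\Theta \vdash \tau[\mu \alpha. \tau/\alpha]}}(\rho)$ as $\laterbare\bigl(\den{\Theta \vdash \tau[\mu \alpha. \tau/\alpha]}(\rho)\bigr)$, is exactly the claim.

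There is essentially no hard step here; the only points requiring care are bookkeeping ones: checking that the side condition of Lemma~\ref{lem:fpc:subst:types} is met (in particular that $\alpha$ may be assumed distinct from the variables of $\Theta$, by the usual $\alpha$-renaming convention already used in the proof of that lemma), and keeping the implicit uses of $\El$ and the identification of codes with types consistent, exactly as in the displayed calculation after~(\ref{def:fpc:interp:rec:type}). Note also that, unlike Lemma~\ref{lem:fpc:subst:types} itself, this corollary needs no guarded recursion of its own: all of the recursive reasoning has already been discharged inside the substitution lemma and inside the definition of the interpretation.
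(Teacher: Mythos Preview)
Your proposal is correct and is exactly the argument the paper has in mind: the paper states only that the lemma follows ``by direct use of the Substitution Lemma,'' and you have spelled out precisely that derivation, combining the judgemental unfolding of $\den{\Theta \vdash \mu\alpha.\tau}(\rho)$ with the instance of Lemma~\ref{lem:fpc:subst:types} obtained by substituting $\mu\alpha.\tau$ for $\alpha$ in $\tau$.
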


The interpretation of every \emph{closed} type $\tau$ carries a
$\laterbare$-algebra structure, i.e., a map
$\tick{}{\tau} \co \laterbare \den{\tau} \to \den{\tau}$, defined by
guarded recursion and structural induction on $\tau$ as in Figure~\ref{fig:later:alg}. The case of recursive types is welltyped by Lemma~\ref{lem:fpc:fix:eq}, and 
can be formally constructed as a fixed point of a term of type
\[
  G\co \laterbare (\Pi \sigma\co \FPCTypes. (\later{}{\den{\sigma}}
  \to \den{\sigma})) \to \Pi \sigma. (\later{}{\den{\sigma}} \to
  \den{\sigma})
\]
as follows. Suppose
$F\co \laterbare (\Pi \sigma\co \FPCTypes. (\later{}{\den{\sigma}} \to
\den{\sigma}))$, and define $G(F)$ essentially as in
Figure~\ref{fig:later:alg} but with the clause
$G(F)_{\mu \alpha. \tau}$ for recursive types being defined as
\begin{equation}
  \label{sec:tick:welldef}
  \lambda x\co \later{}{\den{\foldedtype}}. (F_{\tau[\mu
    \alpha. \tau/\alpha]} \app x)
\end{equation}
Here $F_{\sigma}$ is defined as $F \app \purebare (\sigma)$ using a
generalisation of $\app$ to dependent products to be defined in
Section~\ref{sec:gdtt}.  Define $\tick{}{}$ as the fixed point of
$G$. Then
\begin{equation}
  \begin{aligned}
    \tick{}{\mu \alpha. \tau}(x) & \judgeq G(\purebare{}(\tick{}{}))_{\mu \alpha. \tau}(x) \\
    & \judgeq \purebare{}(\tick{}{})_{\tau[\mu \alpha. \tau/\alpha]}
    \app (x)
  \end{aligned}
\end{equation}

\begin{figure}[tb]
  \begin{align*}
    \tick{}{\fpcunittype} & \eqdef  \lambda x : \laterbare
                            \den{\fpcunittype}. \tick{}{L\den{1}} (x)\\
    \tick{}{\tau_1 \times \tau_2} & \eqdef  \lambda x : \laterbare
                                    \den{\tau_1 \times
                                    \tau_2}. \langle\tick{}{\tau_1}(\later{}(\pi_1)(x)),
                                    \tick{}{\tau_2} (\later{}(\pi_2)(x))\rangle \\ 
    \tick{}{\tau_1 + \tau_2} & \eqdef  \lambda x : \laterbare
                               \den{\tau_1 + \tau_2}. \tick{}{L\den{\tau_1 + \tau_2}}(x)                              \\ 
    \tick{}{\sigma \to \tau} & \eqdef \lambda f : \later{}{(\den{\sigma} \to
                               \den{\tau})}. \lambda x: \den{\sigma}. \tick{}{\tau}(f \app(\purebare (x))) \\
    \tick{}{\foldedtype} & \eqdef \lambda x\co \later{}{\den{\foldedtype}}. \purebare (\tick{}{\tau[\mu \alpha. \tau/\alpha]}) \app (x)
  \end{align*}
  \caption{Definition of
    $\tick{}{\sigma} : \laterbare\den{\sigma} \to \den{\sigma}$}
  \label{fig:later:alg}
\end{figure}

Using the $\tick{}{}$ we define the delay operation which,
intuitively, takes a computation and adds one step.
\[
  \delay{}{\sigma} \eqdef \tick{}{\sigma} \circ \purebare.
\]

\subsection{Interpretation of terms}

Figure~\ref{fig:fpc:term:interp} defines the interpretation of
judgements $\Gamma \vdash M : \sigma$ as functions from $\den{\Gamma}$
to $\den{\sigma}$ where
$\den{x_1 : \sigma_1 , \cdots, x_n : \sigma_n } \eqdef \den{\sigma_1}
\times \cdots \times \den{\sigma_n}$.  In the case of $\fpccasebare$,
the function $\widehat{f}$ is the extension of $f$ to a homomorphism
defined as in (\ref{def:fpc:unique:ext}) above, using the fact that
all types carry a $\laterbare$-algebra structure.  The interpretation
of $\fpcfold{}$ is welltyped because $\purebare (\den{M}(\gamma))$ has
type $\laterbare\den{\unfoldedtype}$ which by
Lemma~\ref{lem:fpc:fix:eq} is equal to $\den{\foldedtype}$.  In the
case of $\fpcunfoldbare$, since $\den{M}(\gamma)$ has type
$\den{\foldedtype}$, which by Lemma~\ref{lem:fpc:fix:eq} is equal to
$\laterbare\den{\unfoldedtype}$, the type of
$\tick{}{\unfoldedtype}(\den{M}(\gamma))$ is $\den{\unfoldedtype}$.

\begin{figure}[ht]
  \begin{align*}
    \den{\Gamma \vdash t : \sigma} & : \den{\Gamma} \to \den{\sigma}\\
    \den{\Gamma \vdash x} (\gamma) & \eqdef \gamma(x) \\
    \den{\Gamma \vdash \fpcunit}(\gamma) & \eqdef \now{}(\sunit)\\
    \den{\Gamma \vdash \langle M, N \rangle}(\gamma) & \eqdef
                                                       \gttpair{\den{M}(\gamma)}{\den{N}(\gamma)}\\
    \den{\Gamma \vdash \fpcfst{M}}(\gamma) & \eqdef  \gttfst (\den{M}(\gamma))\\
    \den{\Gamma \vdash \fpcsnd{M}}(\gamma) & \eqdef  \gttsnd (\den{M}(\gamma))\\
    \den{\Gamma \vdash \lambda x.M}(\gamma) & \eqdef  \lambda
                                              x. \den{M}(\gamma, x)\\
    \den{\Gamma \vdash M N}(\gamma) & \eqdef \den{M}(\gamma)
                                      \den{N}(\gamma)\\
    \den{\Gamma \vdash \fpcinl{E}}(\gamma) & \eqdef  \now{}{}(\sinl{\den{E}(\gamma)}) \\
    \den{\Gamma \vdash \fpcinr{E}}(\gamma) & \eqdef  \now{}{} (\sinr{\den{E}(\gamma)}) \\
    \den{\Gamma \vdash \fpccase{L}{M}{N} }(\gamma) & \eqdef \widehat f (\den{L}(\gamma))\\
                                   &  \hspace{-6ex} \begin{aligned} \text{ where }  & f (\sinl (x_1)) \eqdef \den{M}(\gamma, x_1) \\ 
                                     & f (\sinr (x_2)) \eqdef \den{N}(\gamma, x_2) \\
                                   \end{aligned} \\
    \den{\Gamma \vdash \fpcfold{M} }(\gamma) & \eqdef \purebare (\den{M}(\gamma))\\
    \den{\Gamma \vdash \fpcunfold{M} }(\gamma) & \eqdef
                                                 \tick{}{\unfoldedtype}
                                                 (\den{M}(\gamma))
  \end{align*}
  \caption{Interpretation of FPC terms}
  \label{fig:fpc:term:interp}
\end{figure}

\begin{lemma} \label{lem:fpc:unfold:fold:delay} If
  $\Gamma \vdash M : \tau[\mu \alpha.\tau/\alpha]$ then
  $\den{\fpcunfold{(\fpcfold{M})}} (\gamma) = \delay{}{\tau[\mu
    \alpha.\tau/\alpha]} \den{M}(\gamma)$.
\end{lemma}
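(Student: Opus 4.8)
The plan is to prove the lemma by simply unfolding the relevant definitions from Figure~\ref{fig:fpc:term:interp} and the definition of $\delay{}{}$, so that both sides become syntactically the same term; no induction or guarded recursion is needed here.

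Concretely, I would proceed as follows. First, by the interpretation clause for $\fpcfoldbare$ we have $\den{\fpcfold{M}}(\gamma) \eqdef \purebare(\den{M}(\gamma))$, which is welltyped since $\den{M}(\gamma)$ has type $\den{\tau[\mu\alpha.\tau/\alpha]}$, hence $\purebare(\den{M}(\gamma))$ has type $\laterbare\den{\tau[\mu\alpha.\tau/\alpha]}$, which by Lemma~\ref{lem:fpc:fix:eq} equals $\den{\mu\alpha.\tau}$. Next, by the interpretation clause for $\fpcunfoldbare$ we have $\den{\fpcunfold{(\fpcfold{M})}}(\gamma) \eqdef \tick{}{\tau[\mu\alpha.\tau/\alpha]}(\den{\fpcfold{M}}(\gamma))$; here $\tick{}{\tau[\mu\alpha.\tau/\alpha]}$ has domain $\laterbare\den{\tau[\mu\alpha.\tau/\alpha]}$, so applying it to $\den{\fpcfold{M}}(\gamma)$ is again welltyped by Lemma~\ref{lem:fpc:fix:eq}. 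Substituting, this is $\tick{}{\tau[\mu\alpha.\tau/\alpha]}(\purebare(\den{M}(\gamma)))$. On the other hand, by definition $\delay{}{\sigma} \eqdef \tick{}{\sigma}\circ\purebare$, so $\delay{}{\tau[\mu\alpha.\tau/\alpha]}\,\den{M}(\gamma) = \tick{}{\tau[\mu\alpha.\tau/\alpha]}(\purebare(\den{M}(\gamma)))$. The two sides are therefore equal (in fact judgementally so), which completes the proof.

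There is essentially no obstacle: the only point requiring a little care is the typing, i.e.\ that $\purebare(\den{M}(\gamma))$ may legitimately be passed both to the $\fpcfoldbare$ clause (whose output lives in $\den{\mu\alpha.\tau}$) and, as the image of $\fpcfold{M}$, to $\tick{}{\tau[\mu\alpha.\tau/\alpha]}$; both uses are justified by the type equality $\den{\mu\alpha.\tau} \propeq \laterbare\den{\tau[\mu\alpha.\tau/\alpha]}$ supplied by Lemma~\ref{lem:fpc:fix:eq}, exactly as already noted in the discussion of the interpretation of $\fpcfoldbare$ and $\fpcunfoldbare$.
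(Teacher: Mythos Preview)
Your proposal is correct and matches the paper's own proof, which simply notes that the result is ``straightforward by definition of the interpretation and by the type equality from Lemma~\ref{lem:fpc:fix:eq}.'' You have spelled out exactly those definitional unfoldings in detail.
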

\begin{proof}
  Straightforward by definition of the interpretation and by the type
  equality from Lemma~\ref{lem:fpc:fix:eq}.
\end{proof}

Next lemma proves substitution is well-behaved for terms. The proof is
standard textbook result from domain theory (e.g.~\cite{Winskel93, Streicher06}).
\begin{lemma}[Substitution Lemma]
  Let $\Gamma \equiv x_1: \sigma_1, \cdots, x_k: \sigma_k$ be a
  context such that $\Gamma \vdash M : \tau$, and let
  $\Delta \vdash N_i : \sigma_i$ be a term for each $i = 1, \ldots
  k$. If further $\gamma \in \den{\Delta}$, then
  \[\den{\Delta \vdash M [ \vec{N} / x ] : \tau} (\gamma) =
    \den{\Gamma \vdash M : \tau} \left(\den{\Delta \vdash \vec{N}:
        \vec{\sigma}} (\gamma)\right)\]
  \label{lem:fpc:subst:terms}
\end{lemma}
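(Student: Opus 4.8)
The plan is to prove the Substitution Lemma by structural induction on the typing derivation $\Gamma \vdash M : \tau$, simultaneously over all substitutions $\vec N / \vec x$ and all $\gamma \in \den\Delta$. Before starting, I would record the obvious auxiliary fact that $\den{\Delta \vdash \vec N : \vec\sigma}(\gamma)$ is precisely the tuple $(\den{N_1}(\gamma),\dots,\den{N_k}(\gamma)) \in \den{\sigma_1}\times\cdots\times\den{\sigma_k} = \den\Gamma$; write this environment as $\gamma^* \eqdef \den{\Delta \vdash \vec N:\vec\sigma}(\gamma)$ so that the goal reads $\den{M[\vec N/\vec x]}(\gamma) = \den{M}(\gamma^*)$.

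The base cases are immediate: for a variable $x_i$, the left side is $\den{N_i}(\gamma)$ and the right side is $\gamma^*(x_i) = \den{N_i}(\gamma)$; for $\fpcunit$ both sides are $\now{}(\sunit)$. For each compound term former — pairing, projections, application, $\inl$, $\inr$, $\fpcfold{}$, $\fpcunfold{}$ — the interpretation in Figure~\ref{fig:fpc:term:interp} is defined by a fixed operation applied to the denotations of immediate subterms, and substitution commutes with the term former syntactically, so the case follows by unfolding both sides and applying the induction hypothesis to each subterm. The cases involving binders, namely $\lambda x.M$ and $\fpccase{L}{M}{N}$, require the standard care: one must rename the bound variable to avoid capture (assume it fresh for $\Delta$ and the $N_i$), and then apply the induction hypothesis to the body in the extended context $\Gamma, x:\sigma$ with the extended substitution $\vec N/\vec x, x/x$ and the extended environment $(\gamma, x)$; since the denotation of the body is functorial in this last coordinate, the $\lambda$ and the $\widehat{(-)}$ extension both commute through. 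For $\fpcunfold{}$ one additionally invokes Lemma~\ref{lem:fpc:fix:eq} to see that $\tick{}{\unfoldedtype}$ has the right domain, but this is a typing remark, not an equational step; the equation itself is just the induction hypothesis applied inside $\tick{}{\unfoldedtype}(-)$.

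I expect no single case to be a genuine obstacle — this is, as the paper notes, a textbook result — but the one place needing attention is the interaction of term substitution with the \emph{type} substitution hidden in $\fpcfold{}$ and $\fpcunfold{}$: the term $\fpcunfold{M}$ has type $\tau[\mu\alpha.\tau/\alpha]$, and one must be sure that Lemma~\ref{lem:fpc:fix:eq} (and, underneath it, the Substitution Lemma for Types, Lemma~\ref{lem:fpc:subst:types}) makes the types on both sides of the desired equation judgementally equal, so that the equation typechecks. Once that bookkeeping is in place, each inductive case is a one-line computation. I would present the variable case, the $\lambda$ case, and the $\fpccase{}{}{}$ case in full, and dispatch the remaining congruence cases with the phrase "similarly, using the induction hypothesis on each subterm."
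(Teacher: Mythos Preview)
Your proposal is correct and follows essentially the same approach as the paper: induction on the typing derivation $\Gamma \vdash M : \tau$, with each case reducing to unfolding the definition of substitution, unfolding the interpretation from Figure~\ref{fig:fpc:term:interp}, and applying the induction hypothesis to the immediate subterms. The paper chooses to spell out the $\fpcinl{}$, $\fpccasebare$, $\fpcunfoldbare$, and $\fpcfoldbare$ cases rather than the $\lambda$ case, but the argument is the same, and your remarks about capture avoidance for binders and the type-level bookkeeping via Lemma~\ref{lem:fpc:fix:eq} are appropriate (indeed slightly more explicit than the paper's own treatment).
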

\begin{proof}

  By induction on the typing judgement $\Gamma \vdash M : \tau$.

  The cases for $\Gamma \vdash \fpcunit : \fpcunittype$,
  $\Gamma \vdash x : \tau$
  , $\Gamma \vdash M \ N : \tau$, $\jud{\Gamma}{\fpcfst{M}}{\tau_1}$,
  $\jud{\Gamma}{\fpcsnd{M}}{\tau_2}$,
  $\jud{\Gamma}{\langle M,N\rangle}{\tau_1 \times \tau_2}$ are
  standard.

  For the case $\jud{\Gamma}{\fpcinl{M}}{\tau_1 + \tau_2}$ we start
  from
  \[\den{\Delta \vdash (\fpcinl{M})[\vec{N}/\vec{x}]: \tau_1 +
    \tau_2}(\gamma)\]
  By substitution $(\fpcinl{M})[\vec{N}/\vec{x}]$ equals
  $\fpcinl{(M[\vec{N}/\vec{x}])}$. We also know that its denotation
  equals $\eta (\sinl \den{ (M[\vec{N}/\vec{x}])}(\gamma))$ by
  induction hypothesis this is equal to
  \[\eta (\sinl \den{\Gamma \vdash (M): \tau_1 + \tau_2}(\gamma,
  \den{\Delta \vdash \vec{N} : \vec{\sigma}}(\gamma)))\]
  which is now by definition what we wanted.  The case for
  $\jud{\Gamma}{\fpcinr{N}}{\tau_1 + \tau_2}$ is similar.

  Now the case for $\jud{\Gamma}{\fpccase{L}{M}{N}}{\sigma}$.  By
  definition we know that
  $\den{\Delta \vdash (\fpccase{L}{M}{N})[\vec{N}/\vec{x}]:
    \tau}(\gamma)$ is equal
  \[\den{\Delta \vdash \fpccase{L [\vec{N}/\vec{x}]}{M
      [\vec{N}/\vec{x}]}{N [\vec{N}/\vec{x}]}: \tau}(\gamma)\]
  which is by definition of the interpretation equal to
  \[\widehat{f} (\lambda
  x_1.\den{M[\vec{N}/\vec{x}]}(\gamma, x_1),\lambda x_2. \den{N [
    \vec{N} / \vec{x} ]}(\gamma, x_2)) (\den{L [ \vec{N} / \vec{x}
    ]}(\gamma))\]
  where $\widehat f$ is as in Figure~\ref{fig:fpc:term:interp}.
  By induction hypothesis we know that this is equal to
  \begin{align*}
    \widehat{f} (\lambda x_1.\den{M} (\gamma, x_1,& 
                                                      \den{\Delta \vdash \vec{N} : \vec{\sigma}}(\gamma)),
                                                      (\lambda x_2.\den{N} (\gamma, x_2, \den{\Delta \vdash \vec{N}
                                                      : \vec{\sigma}}(\gamma))))\\
                                                    &  (\den{L} (\gamma, \den{\Delta \vdash \vec{N} :
                                                      \vec{\sigma}}(\gamma)))\\
  \end{align*}
  which is equal by definition to
  \[\den{\Gamma \vdash \fpccase{L}{M}{N} : \tau} (\gamma, \den{\Delta
    \vdash \vec{N} : \vec{\sigma}}(\gamma))\]

  Now the fixed point cases. For the case
  $\jud{\Gamma}{\fpcunfold{M}}{\tau[\mu \alpha. \tau/\alpha]}$ we know
  that $\den{\Gamma \vdash (\fpcunfold{M})[\vec{N}/\vec{x}]}(\gamma)$
  is equal by definition of the substitution function to
  \[\den{\Gamma \vdash \fpcunfold{(M [\vec{N}/\vec{x}])}}(\gamma)\]
  which by definition of interpretation is
  $\tick{}{\unfoldedtype} (\den{\Gamma \vdash (M [\vec{N}/\vec{x}])}(\gamma))$. By
  induction hypothesis this is equal to 
  \[\tick{}{\unfoldedtype} (\den{\Gamma \vdash M}(\den{\Delta \vdash \vec{N}}(\gamma))\]
  which by definition is
  $\den{\Gamma \vdash \fpcunfold{(M)}} (\den{\Delta \vdash
    \vec{N}}(\gamma))$.
  For the case $\jud{\Gamma}{\fpcfold{M}}{\mu \alpha. \tau}$ we know
  that $\den{\Gamma \vdash (\fpcfold{M})[\vec{N}/\vec{x}]}(\gamma)$ is
  equal by defintion to
  $\den{\Gamma \vdash \fpcfold{(M [\vec{N}/\vec{x}])}}(\gamma)$ which
  is by definition of the interpretation equal to
  $\purebare (\den{\Gamma \vdash (M [\vec{N}/\vec{x}])}(\gamma))$.  By
  induction hypothesis we get
  $\sfold (\den{\Gamma \vdash M}(\den{\Theta \vdash \vec{N}}(\gamma))$
  which is by definition
  \[\den{\Gamma \vdash \fpcfold{(M)}} \den{\Theta \vdash
    \vec{N}}(\gamma)\]
\end{proof}

We now aim to show a soundness theorem for the interpretation of
FPC. We do this by first showing soundness of the single step
reduction as in the next lemma. As usual in denotational semantics,
this proves that the model is agnostic to operational reductions.

\begin{lemma} \label{lem:fpc:to:den} Let $M$ be a closed term of type
  $\tau$. If $M \to^k N$ then $\den{M}(*) = \delta^k \den{N} (*)$
\end{lemma}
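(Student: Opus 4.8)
The statement is that single-step reduction $M \to^k N$ (with $k \in \{0,1\}$) is reflected in the denotational semantics by composing with $\delta^k$. The natural approach is induction on the derivation of $M \to^k N$ from the small-step rules in Figure~\ref{fig:op:sem}. There are two kinds of rules to handle: the base reduction rules (the $\beta$-like rules for application, case, projections, and the one fold-unfold rule) and the congruence rule $M_1 \to^k M_2 \implies E[M_1] \to^k E[M_2]$ for evaluation contexts $E$.

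First I would dispatch the base cases. The $k=0$ rules — $(\lambda x.M)N \to^0 M[N/x]$, $\fpccase{\fpcinl L}{M}{N}\to^0 M[L/x_1]$ and its $\inr$ twin, $\fpcfst{\pair M N}\to^0 M$, $\fpcsnd{\pair M N}\to^0 N$ — are each handled by unfolding the clauses of Figure~\ref{fig:fpc:term:interp} together with the Substitution Lemma (Lemma~\ref{lem:fpc:subst:terms}); for the case-rules one also uses the defining equation $\widehat f(\eta(a)) = f(a)$ from~(\ref{def:fpc:unique:ext}), since $\den{\fpcinl L}(\gamma) = \eta(\inl(\den L(\gamma)))$. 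In each of these $\delta^0$ is the identity, so there is nothing further to check. The single $k=1$ base case, $\fpcunfold{(\fpcfold M)}\to^1 M$, is exactly the content of Lemma~\ref{lem:fpc:unfold:fold:delay}, which already gives $\den{\fpcunfold{(\fpcfold M)}}(\gamma) = \delta^1_{\tau[\mu\alpha.\tau/\alpha]}\den M(\gamma)$.

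The inductive case is the congruence rule, and this is where the real work lies: I would proceed by a sub-induction on the structure of the evaluation context $E ::= [\cdot] \mid E\,M \mid \fpccase{E}{M}{N} \mid \fpcfst E \mid \fpcsnd E \mid \fpcunfold E$. The $[\cdot]$ case is the hypothesis itself. For each other frame I must show that plugging a term whose denotation changed by $\delta^k$ changes the denotation of the whole context by $\delta^k$ as well — i.e.\ that each evaluation-context frame is "$\delta$-homomorphic" in the appropriate sense. For $\fpcunfold{E}$ this amounts to $\tick{}{}\circ \laterbare(\delta^k) \circ \purebare = \delta^k \circ \tick{}{}\circ\purebare$, which follows since $\tick{}{}$ is a $\laterbare$-algebra morphism and $\delta = \tick{}{}\circ\purebare$; for $E\,M$ it is that application commutes with $\delta$ on the function argument, using that $\tick{}{\sigma\to\tau}$ is defined pointwise via $\tick{}{\tau}$ (Figure~\ref{fig:later:alg}); for $\fpcfst E$, $\fpcsnd E$ it is componentwise and immediate; and for $\fpccase{E}{M}{N}$ it is that $\widehat f$ commutes with $\delta$ on its $LA$-argument, which again is precisely the statement that $\widehat f$ is a $\laterbare$-algebra homomorphism $L\den{\tau_1+\tau_2}\to\den\sigma$.

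The main obstacle I anticipate is purely the bookkeeping of these "$\delta$ commutes with each frame" facts: each rests on the algebra-homomorphism property of $\widehat f$ and $\tick{}{}$ and on the fact that $\tick{}{}$ on function and product types is defined structurally, so none is deep, but one must be careful that $k$ can be either $0$ or $1$ and keep the type subscripts on $\delta$ and $\tick{}{}$ straight (the relevant type is the type of the hole of $E$, not of $E[M]$). Once these frame lemmas are in place, the congruence case and hence the whole induction closes routinely.
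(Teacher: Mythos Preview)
Your proposal is correct and follows essentially the same route as the paper: induction on the derivation of $M \to^k N$, with the base redexes handled by the Substitution Lemma and Lemma~\ref{lem:fpc:unfold:fold:delay}, and the congruence cases handled by showing each evaluation-context frame commutes with $\delta$. The paper organises the two non-trivial frame cases as separate lemmas (Lemma~\ref{lem:fpc:case:hom} for $\fpccasebare$ and Lemma~\ref{lem:fpc:unfold:hom} for $\fpcunfoldbare$), and your displayed equation for the $\fpcunfoldbare$ frame is slightly garbled type-wise---the fact actually needed is $\tick{}{\unfoldedtype}\circ\delta_{\foldedtype} = \delta_{\unfoldedtype}\circ\tick{}{\unfoldedtype}$, which does follow from $\tick{}{\unfoldedtype}$ being a $\laterbare$-algebra homomorphism $(\den{\foldedtype},\tick{}{\foldedtype})\to(\den{\unfoldedtype},\tick{}{\unfoldedtype})$ as you say.
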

\begin{proof}
The proof goes by induction on $M \to^k N$. The cases when $k = 0$
  follow straightforwardly from the structure of the denotational
  model.

  The case $\fpcunfold{(\fpcfold{M})} \to^1 M$ follows directly from
  Lemma~\ref{lem:fpc:unfold:fold:delay}.
  
  The case for $(\lambda x: \sigma. M) (N) \to^0 M[N/x]$ is
  straightforward from by Substitution
  Lemma~\ref{lem:fpc:subst:terms}.
  
  The case for $\fpccase{(\fpcinl{L})}{x.M}{x.N} \to^0 M [L/x]$ and
  the case for \[\fpccase{(\fpcinr{L})}{x.M}{x.N} \to^0 N [L/x]\] follow
  directly by definition.

  Also the elimination for the product, namely
  $\fpcfst{\langle M, N \rangle} \to^0 M$ and
  $\fpcsnd{\langle M, N \rangle} \to^0 N$ follow directly from the
  definition of the interpretation.

  Now we prove the inductive cases.  For the case $M_1 N \to^k M_2 N$
  we know that by definition $\den{M_1 N}(*) = \den{M_1}(*) \den{N}(*)$.
  By induction hypothesis we know that 
  $\den{M_1}(*) = \delay{}{\sigma \to \tau}[k](\den{M_2}(*))$, thus
  $\den{M_1}(*) \den{N}(*) = (\delay{}{\sigma \to
    \tau}[k](\den{M_2}(*))) \den{N}(*)$
  By definition of $\delay{}{}$ and $\tick{}{}$ this is equal to 
  $\delay{}{\tau}[k](\den{M_2}(*) \den{N}(*))$.
  
  Now the case for \[\fpccase{L}{M}{N} \to^k \fpccase{L'}{M}{N}\]  The
  induction hypothesis gives
  $\den{L} = \delay{}{\tau_1 + \tau_2} \circ \den{L'}$, and so
  Lemma~\ref{lem:fpc:case:hom} applies proving the case.

  The case for $\fpcfst{M} \to^k \fpcfst{M'}$ and for
  $\fpcsnd{M} \to^k \fpcsnd{M'}$ are similar to the previous case.
  
  Finally, the case for $\fpcunfold{M_1} \to^k \fpcunfold{M_2} $.  By
  definition we know that  \[\den{\fpcunfold{M_1}}(*) = \theta (\den{M_1}(*))\]  By
  induction hypothesis this is equal to
  $\theta(\delay{}{\foldedtype}[k](\den{M_2}(*)))$ which by
  Lemma~\ref{lem:fpc:unfold:hom} is equal to
  $\delay{}{\unfoldedtype}[k] (\theta (\den{M_2}(*)))$ thus
  concluding.
\end{proof}

The two most complicated cases of the proof of
Lemma~\ref{lem:fpc:to:den}, namely the $\fpcunfoldbare$-$\fpcfoldbare$
reductions and $\fpccasebare$, are captured in the following two
lemmas. In particular, the first of these states that the
interpretation of $\fpccasebare$ is a $\laterbare$-algebra
homomorphism. In other words, case analysing over a computation that
perform $n$ ticks and then produces a result $v$ is equal to a
computation that produces $n$ ticks and then performs case analysis
over a terminating computation producing a value $v$.

\begin{lemma}
  \label{lem:fpc:case:hom}
  \begin{enumerate}
  \item The interpretation of $\fpccasebare$ is a homomorphism of
    $\later{}{}{}$-algebras in the first variable, i.e.,
    \begin{align*}
      & \den{\ \lambda x \co \tau_1 + \tau_2 \ld \fpccase{x}{M}{N}}
        (\gamma) (\tick{}{}(r)) \\
      = & \tick{}{}(\purebare(\den{\ \lambda x \co \tau_1 + \tau_2 \ld \fpccase{x}{M}{N}}(\gamma)) \app r)
    \end{align*}
  \item If $\den{L}(\gamma) = \delta (\den{L'}(\gamma))$, then
    \[
      \den{\fpccase{L}{M}{N}}(\gamma) = \delta
      \den{\fpccase{L'}{M}{N}}(\gamma)
    \]
  \end{enumerate}
\end{lemma}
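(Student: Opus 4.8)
The plan is to reduce both parts to the definition of the homomorphic extension $\hat f$ given in~(\ref{def:fpc:unique:ext}): part~(1) is essentially that definition read off at an element in the image of the structure map, and part~(2) then follows from part~(1) together with the applicative-functor law~(\ref{eq:fpc:next:laterappl}). There is no induction or guarded recursion needed here; everything is a direct computation in the model.

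For part~(1), I would first observe that, by the clauses for variables, $\lambda$-abstraction and $\fpccasebare$ in Figure~\ref{fig:fpc:term:interp}, the function $\den{\lambda x \co \tau_1 + \tau_2 \ld \fpccase{x}{M}{N}}(\gamma)$ is exactly $\hat f \co L(\den{\tau_1} + \den{\tau_2}) \to \den{\sigma}$, where $f(\inl(x_1)) \eqdef \den{M}(\gamma, x_1)$ and $f(\inr(x_2)) \eqdef \den{N}(\gamma, x_2)$, and $\hat f$ is the extension of $f$ to a homomorphism of $\laterbare$-algebras built from~(\ref{def:fpc:unique:ext}) using the $\laterbare$-algebra structure $\tick{}{\sigma}$ on the codomain. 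An element $\tick{}{}(r)$ of $\den{\tau_1 + \tau_2} = L(\den{\tau_1} + \den{\tau_2})$ is precisely one coming from the second summand under the isomorphism $LA \cong A + \laterbare LA$, so the asserted equation is nothing but the second defining clause of~(\ref{def:fpc:unique:ext}) read at $r$:
\[
  \hat f(\tick{}{}(r)) \judgeq \tick{}{\sigma}(\purebare(\hat f) \app r).
\]
Thus part~(1) is just unfolding the interpretation and the definition of $\hat f$, keeping track of which $\laterbare$-algebra structure sits on which side.

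For part~(2), suppose $\den{L}(\gamma) = \delta(\den{L'}(\gamma))$. Since $\delta = \tick{}{} \circ \purebare$ on $\den{\tau_1 + \tau_2}$, this says $\den{L}(\gamma) = \tick{}{}(\purebare(\den{L'}(\gamma)))$. Applying $\hat f = \den{\lambda x \ld \fpccase{x}{M}{N}}(\gamma)$ and using part~(1) with $r \eqdef \purebare(\den{L'}(\gamma))$ gives
\[
  \den{\fpccase{L}{M}{N}}(\gamma) = \hat f(\tick{}{}(\purebare(\den{L'}(\gamma)))) \judgeq \tick{}{\sigma}(\purebare(\hat f) \app \purebare(\den{L'}(\gamma))).
\]
By~(\ref{eq:fpc:next:laterappl}) we have $\purebare(\hat f) \app \purebare(\den{L'}(\gamma)) \judgeq \purebare(\hat f(\den{L'}(\gamma))) = \purebare(\den{\fpccase{L'}{M}{N}}(\gamma))$, so the right-hand side equals $\tick{}{\sigma}(\purebare(\den{\fpccase{L'}{M}{N}}(\gamma))) = \delta(\den{\fpccase{L'}{M}{N}}(\gamma))$, which is the claim. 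I do not expect a genuine obstacle in this lemma; the only point requiring mild care is the bookkeeping of the two occurrences of $\tick{}{}$ — the structure map on $L(\den{\tau_1}+\den{\tau_2})$ applied to the argument, and the structure map $\tick{}{\sigma}$ applied to the result — and, correspondingly, the two instances of $\delta$, all of which are fixed by the construction of $\hat f$ in the interpretation of $\fpccasebare$.
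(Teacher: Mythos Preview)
Your proposal is correct and follows essentially the same approach as the paper: both identify $\den{\lambda x.\fpccase{x}{M}{N}}(\gamma)$ with $\hat f$, read off part~(1) from the second clause of~(\ref{def:fpc:unique:ext}), and derive part~(2) by applying part~(1) with $r = \purebare(\den{L'}(\gamma))$ together with the applicative-functor law~(\ref{eq:fpc:next:laterappl}). Your remark about tracking the two different $\tick{}{}$ structures is well placed and matches the care taken in the paper.
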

\begin{proof}
  For the proof of the first part, we use the notation $\widehat f$ as
  in Figure~\ref{fig:fpc:term:interp}. Since $\widehat f$ is a
  homomorphism of $\laterbare$-algebras we get
  \begin{align*}
    \den{\lambda x. \fpccase{x}{M}{N}} (\gamma) (\tick{}{\tau_1 +
    \tau_2} (r))
    & = \widehat{f}(\tick{}{\tau_1 + \tau_2 }(r)) \\
    & = \tick{}{\sigma} (\pure{}{(\widehat{f}}) \app r) \\
    & = \tick{}{\sigma} (\pure{}{\den{\lambda x. \fpccase{x}{M}{N}}
      (\gamma)} \app r)
  \end{align*}

  For the second part, note that $\widehat f$ is
  $\den{\lambda x. \fpccase{x}{M}{N}}(\gamma)$, so
  \begin{align*}
    \den{\fpccase{L}{M}{N}}(\gamma) 
    & = \widehat f (\den{L}(\gamma)) \\
    & = \widehat f
      (\delay{}{\tau_1 + \tau_2} (\den{L'}(\gamma))) \\
    & = \widehat f(\tick{}{\tau_1 +
      \tau_2}(\pure{}{(\den{L'}(\gamma))})) \\
    & = \tick{}{\sigma} (\pure{}{(\widehat f)} \app
      (\pure{}{(\den{L'}(\gamma))})) \\
    & = \tick{}{\sigma} (\pure{}{(\widehat f (\den{L'}(\gamma)))}) \\
    & = \delay{}{\sigma} (\den{\fpccase{L'}{M}{N}}(\gamma))
  \end{align*}
\end{proof}

We now prove the same for the interpretation of $\fpcunfoldbare$. The
key point here is to observe that the tick operation for a folded
recursive type , namely $\tick{}{\foldedtype}$, is precisely the tick
of the unfolded recursive type after one step of computation, namely
$\laterbare (\tick{}{\unfoldedtype})$.
\begin{lemma}\label{lem:fpc:unfold:hom}
  If $\foldedtype$ is a closed FPC type then
  \begin{enumerate}
  \item
    $\den{\ \lambda x \co \foldedtype \ld \fpcunfold{x}}
    (\tick{}{\foldedtype}(r)) = \tick{}{\unfoldedtype} (\purebare
    (\tick{}{\unfoldedtype}) \app r)$
  \item If
    $\den{M}(\gamma) = \delay{}{\foldedtype}(\den{M'}(\gamma))$, then
    \[\den{\fpcunfold{M}}(\gamma) =
      \delay{}{\unfoldedtype}(\den{\fpcunfold{M'}}(\gamma))\]
  \end{enumerate}
\end{lemma}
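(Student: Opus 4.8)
The plan is to unfold definitions and chase equalities, much as in the proof of Lemma~\ref{lem:fpc:case:hom}, but using the defining clause for $\tick{}{\foldedtype}$ from Figure~\ref{fig:later:alg} in place of the unique-extension property of $\widehat f$. Throughout I would silently coerce along the type identity $\den{\foldedtype}\propeq\laterbare\den{\unfoldedtype}$ of Lemma~\ref{lem:fpc:fix:eq}, which is harmless since $\gdtt$ is extensional.

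For the first part I would begin by observing that, by the interpretation clause for $\fpcunfoldbare$ in Figure~\ref{fig:fpc:term:interp}, the function $\den{\lambda x\co\foldedtype\ld\fpcunfold x}$ is nothing but $\tick{}{\unfoldedtype}$. Applying it to $\tick{}{\foldedtype}(r)$ and then expanding $\tick{}{\foldedtype}\eqdef\lambda x.\,\purebare(\tick{}{\unfoldedtype})\app x$ immediately yields $\tick{}{\unfoldedtype}(\purebare(\tick{}{\unfoldedtype})\app r)$, which is exactly the right-hand side. There is nothing further to do beyond checking well-typedness: $r\co\laterbare\den{\foldedtype}\propeq\laterbare\laterbare\den{\unfoldedtype}$, hence $\purebare(\tick{}{\unfoldedtype})\app r\co\laterbare\den{\unfoldedtype}$, so the outer $\tick{}{\unfoldedtype}$ applies.

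For the second part I would start from $\den{\fpcunfold M}(\gamma)=\tick{}{\unfoldedtype}(\den M(\gamma))$, substitute the hypothesis and rewrite $\delay{}{\foldedtype}$ as $\tick{}{\foldedtype}\circ\purebare$, obtaining $\tick{}{\unfoldedtype}(\tick{}{\foldedtype}(\purebare(\den{M'}(\gamma))))$, and then invoke the first part with $r\eqdef\purebare(\den{M'}(\gamma))$. This gives $\tick{}{\unfoldedtype}(\purebare(\tick{}{\unfoldedtype})\app\purebare(\den{M'}(\gamma)))$; the applicative-functor law (\ref{eq:fpc:next:laterappl}) collapses $\purebare(\tick{}{\unfoldedtype})\app\purebare(\den{M'}(\gamma))$ to $\purebare(\tick{}{\unfoldedtype}(\den{M'}(\gamma)))$, so the expression becomes $\tick{}{\unfoldedtype}(\purebare(\tick{}{\unfoldedtype}(\den{M'}(\gamma))))=\delay{}{\unfoldedtype}(\tick{}{\unfoldedtype}(\den{M'}(\gamma)))=\delay{}{\unfoldedtype}(\den{\fpcunfold{M'}}(\gamma))$, which is the claim.

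I do not expect a genuine obstacle here. Unlike the $\fpccasebare$ case, this lemma needs no appeal to uniqueness of homomorphic extensions, only the definitional equation for $\tick{}{\foldedtype}$ and the $\purebare$-$\app$ law. The one point to be careful about is that the coercion along $\den{\foldedtype}\propeq\laterbare\den{\unfoldedtype}$ from Lemma~\ref{lem:fpc:fix:eq} is inserted consistently, so that $\purebare(\den{M'}(\gamma))$ really lands in $\laterbare\den{\foldedtype}$ and the nested applications of $\tick{}{}$ compose; being an extensional type theory, $\gdtt$ lets us leave this coercion implicit, so in the write-up it amounts to bookkeeping only.
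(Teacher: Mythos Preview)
Your proposal is correct and follows essentially the same approach as the paper's own proof: both parts are proved by unfolding the interpretation of $\fpcunfoldbare$ to $\tick{}{\unfoldedtype}$, expanding the defining clause $\tick{}{\foldedtype}(x) = \purebare(\tick{}{\unfoldedtype})\app x$, and (for part~2) applying the applicative-functor law~(\ref{eq:fpc:next:laterappl}) to collapse $\purebare(\tick{}{\unfoldedtype})\app\purebare(\den{M'}(\gamma))$. Your remarks on the type coercion along Lemma~\ref{lem:fpc:fix:eq} mirror the paper's own type-checking discussion.
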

\begin{proof}
  The interpretation for
  $\den{\ \lambda x \co \foldedtype \ld
    \fpcunfold{x}}(\tick{}{\foldedtype}(r))$ yields
  $\tick{}{\unfoldedtype} (\tick{}{\foldedtype}(r))$.  This type
  checks as $r$ has type $\laterbare \den{\foldedtype}$, thus
  $(\tick{}{\foldedtype}(r))$ has type $\den{\foldedtype}$ which -- by
  Lemma~\ref{lem:fpc:fix:eq} -- is equal to
  $\laterbare \den{\unfoldedtype}$. Thus the term
  $\tick{}{\unfoldedtype} (\tick{}{\foldedtype}(r))$ has type
  $\den{\unfoldedtype}$.  Now by definition of $\tick{}{\foldedtype}$
  this is equal to
  $\tick{}{\unfoldedtype} (\purebare (\tick{}{\unfoldedtype}) \app
  (r))$ which is what we wanted.

    For the second statement, we compute  
\begin{align*}
\den{\fpcunfold{M}}(\gamma) 
& = \tick{}{\unfoldedtype}(\den{M}(\gamma)) \\
    & = \tick{}{\unfoldedtype}( \delay{}{\foldedtype}(\den{M'}(\gamma))) \\
    & = \tick{}{\unfoldedtype}(\tick{}{\foldedtype}(\purebare
  (\den{M'}(\gamma)))) \\
   & = \tick{}{\unfoldedtype} (\purebare (\tick{}{\unfoldedtype}) \app
  (\purebare (\den{M'}(\gamma)))) && \text{(statement 1)} \\
  & = \tick{}{\unfoldedtype} (\purebare (\tick{}{\unfoldedtype}
  (\den{M'}(\gamma)))) && \text{(rule (\ref{eq:fpc:next:laterappl}))} \\
  & = \tick{}{\unfoldedtype} (\purebare 
  (\den{\fpcunfold{M'}}(\gamma)))  \\
   & = \delay{}{\unfoldedtype}(\den{\fpcunfold{M'}}(\gamma))
\end{align*}

%
\end{proof}

We can now prove Lemma~\ref{lem:fpc:to:den}. As stated above, this is
soundness of the model w.r.t. the small-step operational
semantics. For the proof, it is crucial that the interpretation of
every term is an homomorphism of tick$\tickbare$-algebras.  This falls
out in many cases. For the cases of the interpretation of
$\fpcunfoldbare$ and the inductive case of $\fpccasebare$ we use the
lemmas we just proved above.
\begin{proofof}{Lemma~\ref{lem:fpc:to:den}}
  The proof is by induction on $M \to^k N$. Most of the cases are
  straightforward, some ($\beta$-reductions for function and sum
  types) using the substitution lemma
  (Lemma~\ref{lem:fpc:subst:terms}).  The case
  $\fpcunfold{(\fpcfold{M})} \to^1 M$ follows directly from
  Lemma~\ref{lem:fpc:unfold:fold:delay}.
  
%
%
%
  Now we prove the inductive cases.  For the case $M_1 N \to^k M_2 N$
  we know that by definition
  $\den{M_1 N}(*) = \den{M_1}(*) \den{N}(*)$.  By induction hypothesis
  we know that
  $\den{M_1}(*) = \delay{}{\sigma \to \tau}[k](\den{M_2}(*))$, thus
  $\den{M_1}(*) \den{N}(*) = (\delay{}{\sigma \to
    \tau}[k](\den{M_2}(*))) \den{N}(*)$ By definition of $\delay{}{}$
  and $\tick{}{}$ this is equal to
  $\delay{}{\tau}[k](\den{M_2}(*) \den{N}(*))$.
  
  In the case of \[\fpccase{L}{M}{N} \to^k \fpccase{L'}{M}{N}\] the
  induction hypothesis gives
  $\den{L}(*) = \delay{}{\tau_1 + \tau_2} \den{L'}(*)$, and so
  Lemma~\ref{lem:fpc:case:hom} applies proving the case.

  
  Finally, the case for $\fpcunfold{M} \to^k \fpcunfold{M'}$. If $k=0$
  the case follows trivially from the induction hypothesis. If $k=1$,
  the step from the induction hypothesis to the case is exactly the
  second statement of Lemma~\ref{lem:fpc:unfold:hom}.
\end{proofof}

We now state and prove soundness of our model w.r.t. the operational
semantics. We use the transitive closure over $\to^k$, namely
$\Rightarrow^k$, which is \emph{synchronised} with the $\laterbare$
operator in the type theory. Using $\Rightarrow$ (rather than
$\to_*$) is not essential to prove soundness,
but it is crucial to prove computational adequacy, which will be
presented in the next section. On the other hand, the explicit
step-indexing $k$ in $\Rightarrow^k$ is necessary to relate the number
of operational steps with the number of delays (or ticks) in the
denotational semantics.
\begin{proposition}[Soundness]
  \label{prop:fpc:soundness}
  Let $M$ be a closed term of type $\tau$. If $M \Rightarrow^k N $
  then $\den{M}(*) = \delta^k \den{N} (*)$.
\end{proposition}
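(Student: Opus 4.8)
The plan is to prove the statement by induction on the derivation of $M \Rightarrow^k N$, using the guarded transitive closure rules from Figure~\ref{fig:op:sem}. There are exactly two cases, matching the two inference rules defining $\manyk{\cdot}{\cdot}{\cdot}$, and the second case is the one that requires care; the first is essentially immediate from Lemma~\ref{lem:fpc:to:den} via the standard transitive closure.

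First I would handle the base case: if $\manyk M0N$ then by definition $\tozeromany MN$, i.e., $M$ reduces to $N$ by finitely many $\to^0$-steps. Iterating Lemma~\ref{lem:fpc:to:den} (each step contributing a factor $\delta^0 = \id{}$) gives $\den M(*) = \den N(*) = \delta^0 \den N(*)$, as required. For the inductive step, suppose $\manyk M{k+1}N$ because $\zeromany M{M'}$, $\toone{M'}{M''}$, and $\later{}{(\manyk{M''}{k}{N})}$. Applying the $k=0$ case of Lemma~\ref{lem:fpc:to:den} repeatedly along $\zeromany M{M'}$ gives $\den M(*) = \den{M'}(*)$, and applying the $k=1$ case of Lemma~\ref{lem:fpc:to:den} to $\toone{M'}{M''}$ gives $\den{M'}(*) = \delta^1 \den{M''}(*)$, hence $\den M(*) = \delta \den{M''}(*)$. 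It remains to show $\den{M''}(*) = \delta^k \den N(*)$, and here is where the guarded recursion does its work: the hypothesis is only $\later{}{(\manyk{M''}{k}{N})}$, so I would set up the whole proposition as a statement proved by guarded recursion, with the guarded induction hypothesis $\later{}{\bigl(\forall M,N,k.\ \manyk MkN \to \den M(*) = \delta^k\den N(*)\bigr)}$. Applying this under the $\laterbare$ to the hypothesis $\later{}{(\manyk{M''}{k}{N})}$ yields $\later{}{(\den{M''}(*) = \delta^k\den N(*))}$, which by the distributivity of $\laterbare$ over identity types (\ref{eq:later:id}) means $\purebare(\den{M''}(*)) = \purebare(\delta^k\den N(*))$ in $\laterbare\den\tau$.

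The remaining obstacle is then purely a calculation bridging this "later" equality to the "now" equality we want, and it is exactly here that $\delta$ is designed to help. Since $\delta{}{\tau} = \tick{}{\tau}\circ\purebare$, applying $\tick{}{\tau}$ to both sides of $\purebare(\den{M''}(*)) = \purebare(\delta^k\den N(*))$ gives $\delta{}{\tau}(\den{M''}(*)) = \delta{}{\tau}(\delta^k\den N(*)) = \delta^{k+1}\den N(*)$. Combined with $\den M(*) = \delta(\den{M''}(*))$ from the previous paragraph, this gives $\den M(*) = \delta^{k+1}\den N(*)$, completing the induction. I expect the main subtlety to be the bookkeeping in invoking the guarded induction hypothesis correctly — making sure the hypothesis is stated in a form universally quantified over $M$, $N$, $k$ before taking $\laterbare$, so that the term $M''$ produced in the inductive step (which is not fixed in advance) can be substituted in — and in tracking that the single $\to^1$ step is what licenses the extra $\delta$, so the step counts on both sides match exactly.
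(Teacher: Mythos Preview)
Your proof is correct and follows essentially the same route as the paper's. The one difference is in how you obtain the hypothesis under $\laterbare$: the paper proceeds by ordinary induction on $k$ and, at the inductive step, simply applies $\purebare$ to the already-established statement for $k$ to get $\laterbare(\den{M''}(*) = \delta^k\den N(*))$ from $\laterbare(\manyk{M''}{k}{N})$, whereas you set the whole proposition up as a guarded fixed point and draw the same conclusion from the guarded hypothesis. Both are valid; the paper's variant is slightly lighter since plain induction on a natural number suffices and no fixed-point principle is needed. The remaining calculation---using Lemma~\ref{lem:fpc:to:den} for the $\to^0_*$ and $\to^1$ steps, invoking (\ref{eq:later:id}), and post-composing with $\tick{}{\tau}$ to turn the ``later'' equality into the desired $\delta^{k+1}$---is identical to the paper's argument.
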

\begin{proof}
  By induction on $k$. When $k = 0$ Lemma~\ref{lem:fpc:to:den} applies
  concluding the case. When $k = n + 1$ by definition we have
  $\zeromany{M}{M'}$, $M' \to^1 M''$ and
  $\later{}{(M'' \Rightarrow^{n} N)}$ .  By repeated application of
  Lemma~\ref{lem:fpc:to:den} we get $\den{M}(*) = \den{M'}(*)$ and
  $\den{M'}(*) = \delay{}{}(\den{M''}(*))$.  By induction hypothesis
  we get $\later{}{(\den{M''}(*) = \delta^{n} \den{N}(*))}$ which
  implies
  $\purebare(\den{M''}(*)) = \purebare(\delay{}{}[n] \den{N}(*)))$ and
  since $\delay{}{} = \tick{}{} \circ \purebare$, this implies
  $\delay{}{}(\den{M''}(*)) = \delay{}{}[k](\den{N}(*))$.  By putting
  together the equations we get finally
  $\den{M}(*) = \delta^k \den{N}(*)$.
\end{proof}

\section{Computational Adequacy}
\label{sec:fpc:adequacy}

Computational adequacy is the opposite implication of Proposition~\ref{prop:fpc:soundness}
in the case of terms of unit type. It is
proved by constructing a (proof relevant) logical relation
between syntax and semantics. The relation cannot be constructed just by 
induction on the structure of types, since in the case of recursive types,
the unfolding can be bigger than the recursive type. Instead, the relation is 
constructed by guarded recursion: we assume the relation exists \emph{later},
and from that assumption construct the relation \emph{now} by structural induction
on types. Thus the well-definedness of the logical relation is ensured by the
type system of $\gdtt$, more specifically by the rules for guarded recursion. 
This is in contrast to the classical proof in domain theory~\cite{Pit96}, where 
existence requires a separate argument. 

The logical relation uses a lifting of relations on values 
available now, to relations on values available later. To define this lifting, we need 
\emph{delayed substitutions}, an advanced feature of $\gdtt$. 

\subsection{Delayed substitutions}
\label{sec:gdtt}

In $\gdtt$, if $\wftype{}{\Gamma,x\co A}{B}$ is a well formed type and
$t$ has type $\later{} A$ in context $\Gamma$, one can form the type
$\later{}[\hrt{x\gets t}]B$. Intuitively, one time step from now, $t$ delivers
an element in $A$, and $\later{}[\hrt{x\gets t}]B$ is the type of elements 
that one time step from now delivers something in $B$ with $x$ substituted by 
the element delivered at that time by $t$.  
One motivation for this construction is to generalise
$\app$ (described in Section~\ref{sec:guarded-recursion-intro}) to a
dependent version: if $f\co \laterbare(\depprod xAB)$, then
$f\app t \co \later{}[\hrt{x\gets t}]B$.  The idea is that $t$ will
eventually reduce to a term of the form $\pure{}u$, and then
$\later{}[\hrt{x\gets t}]B$ will be equal to $\later{}B\subst ux$. But
if $t$ is open, we may not be able to do this reduction yet.

More generally, we define the notion of \emph{delayed substitution} as
follows. Suppose $\wfctx{}{\Gamma,\Gamma'}$ is a
wellformed context, and suppose $\Gamma'$ is on the form 
$\Gamma' = x_1\co A_1\dots x_n\co A_n$ with all $A_i$ independent, i.e., no $x_j$
appears in an $A_i$. A delayed substitution
$\xi \co {\Gamma} \rightarrowtriangle {\Gamma'}$ is
a vector of terms $\xi = \hrt{x_1\gets t_1, \dots, x_n\gets t_n}$ such
that $\hastype{}{\Gamma}{t_i}{A_i}$ for each $i$. \cite{BGCMB16} gives a more general
definition of delayed substitution allowing dependencies between the
$A_i$'s, but for this paper we just need the definition above.

If $\xi\co \Gamma \rightarrowtriangle \Gamma'$ is a delayed
substitution and $\wftype{}{\Gamma,\Gamma'}{B}$ is a wellformed type,
then the type $\later{}[\xi]B$ is wellformed in context $\Gamma$.  The
introduction form states $\pure{}[\xi]u\co \later{}[\xi]B$ if
$\hastype{}{\Gamma,\Gamma'}{u}{B}$.

In Figure~\ref{fig:fpc:gdtt:rules} we recall some rules
from~\cite{BGCMB16} needed below.
\begin{figure}[tb]
  \begin{align}
    \pure{}\xi\esubst{x}{\pure{}\xi\ld t}\ld u & \judgeq \pure{}\xi\ld (u[t/x]) \label{eq:fpc:delayed:next:beta} \\
    \pure{}\xi\esubst{x}{t}\ld x & \judgeq t  \label{eq:fpc:delayed:next:eta} \\
    \pure{}\xi\esubst{x}{t}\ld u & \judgeq \pure{}\xi\ld u  &  \label{eq:fpc:delayed:next:weakening} \\
    \pure{}[\xi\hrt{x \gets t, y \gets u}\xi']{v} & \judgeq \pure{}[\xi\hrt{y
                                                    \gets u,x \gets
                                                    t}\xi']{u} \label{eq:subst:comm}\\
    \pure{}[\xi]{\pure{}[\xi']{u}} & \judgeq
                                     \pure{}[\xi']{\pure{}[\xi]{u}} \label{eq:next:comm}\\
    (\pure{}[\xi]{t} \propeq_{\later{}[\xi]{A}} \pure{}[\xi]{s}) &\judgeqty \later{}[\xi]{(t \propeq_A s)}\label{eq:tyeq:later}\\
    \El(\latercode{} (\purebare\xi\ld A)) & \judgeqty \laterbare\xi\ld \El(A) \label{eq:El:later:subst}
  \end{align}
\caption{The notation $\xi\esubst{x}{t}$ means the extension of the
delayed substitution $\xi$ with $\esubst{x}{t}$. 
Rule
(\ref{eq:fpc:delayed:next:weakening}) requires $x$ not free in $u$. 
Rule (\ref{eq:next:comm}) requires that none of the variables in the
codomains of $\xi$ and $\xi'$ appear in the type of $u$, and that the
codomains of $\xi$ and $\xi'$ are independent.}
\label{fig:fpc:gdtt:rules}
\end{figure}
Of these, (\ref{eq:fpc:delayed:next:beta}) and (\ref{eq:fpc:delayed:next:eta})
can be considered $\beta$ and $\eta$ laws, and
(\ref{eq:fpc:delayed:next:weakening}) is a weakening principle. Rules
(\ref{eq:fpc:delayed:next:beta}), (\ref{eq:fpc:delayed:next:weakening}) and
(\ref{eq:subst:comm}) also have obvious versions for types, e.g.,
\begin{equation} \label{eq:fpc:delayed:later:beta}
  \later{}\xi\esubst{x}{\pure{}\xi\ld t}\ld B \judgeq \later{}\xi\ld
  (B[t/x])
\end{equation}

%

Rather than be taken as primitive, later application $\app$ can be
defined using delayed substitutions as
\begin{equation}\label{eq:app:def}
  g \app y \eqdef \pure{}[\hrt{f \gets g, x \gets y}]{f(x)}
\end{equation}
Note that if $g : \laterbare(A \to B)$ and $y : \laterbare A$, the type of $g\app y$
is $\laterbare{}[f \gets g, x \gets y]. B$ which reduces to $\laterbare B$ since $f$ and $x$ 
do not appear in $B$. 
With this definition, the rule
$\pure{}{(f (t))} \judgeq \pure{}{f} \circledast \pure{}{t}$ from
Section~\ref{sec:guarded-recursion-intro} generalises to
\begin{align}
  \purebare\xi\ld (f\,t) & \judgeq (\purebare\xi\ld f) \circledast
                           (\purebare\xi\ld t)
                           \label{eq:later:appl}
\end{align}
which follows from (\ref{eq:fpc:delayed:next:beta}). In fact, later application
generalises to the setting of delayed substitutions: if $g : \later{}[\xi] \Pi x : A . B$ and 
$y : \later{}[\xi] A$ define 
\begin{equation}\label{eq:app:gen:def}
  g \app y \eqdef \pure{}[\xi\hrt{f \gets g, x \gets y}]{f(x)} : \later{}[\xi\hrt{x \gets y}]B
\end{equation}
Note that in the special case where $y = \pure{}[\xi] u$ we get
\[
 g\app \pure{}[\xi] u \co \later{}[\xi]B[u/x]
\]

Rules (\ref{eq:fpc:delayed:next:eta}), (\ref{eq:fpc:delayed:next:weakening})
and (\ref{eq:next:comm}) imply
\begin{align*}
\pure{\kappa}[\xi \hrt{x \gets t}]{\pure{\kappa}{x}} 
& \equiv \pure{\kappa}{(\pure{\kappa}[\xi \hrt{x \gets t}]{x})} \\
& \equiv
  \pure{\kappa}{(t)} \\
& \equiv
  \pure{\kappa}[\xi \hrt{x \gets t}]{t} 
\end{align*}
which by (\ref{eq:tyeq:later}) gives an inhabitant of
\begin{equation}
  \later{}[\xi \hrt{x \gets t}]{(\pure{\kappa}{x} \propeq t)}
\label{eq:eta}
\end{equation}

\subsection{A logical relation between syntax and semantics}
\label{sec:logical:relation}

\rasmus{Rewrote a bit in this section. There were some things I did not understand.}

Our strategy to prove computational adequacy is by logical relation
argument. We construct a logical relation $\R{}$ as in
Figure~\ref{fig:fpc:adequacy:relation} between syntax and
semantics. This is done using first guarded recursion and then
induction on the FPC types.

%

Figure~\ref{fig:fpc:adequacy:relation} uses an operation lifting relations $\R{}$ from $A$ to
$B$ to relations $\laterR{}$ from $\laterbare A$ to $\laterbare B$ defined as
\begin{equation}
  t \laterR{} u \eqdef \later{}[\hrt{x \gets t,y \gets u}]{(x \R{} y)}
  \label{def:fpc:rel}
\end{equation}
As a consequence of (\ref{eq:fpc:delayed:later:beta}) the following
statement holds:
\begin{equation}\label{eq:fpc:laterR:next}
  (\pure{}[\xi]{t}) \laterR{}{} (\pure{}[\xi]{u}) \judgeqty \later{}[\xi]{(t \R{}{} u)}
\end{equation}

The lifting on relations is used, e.g., in the second case of $\R{\fpcunittype}$ where 
$x$ is assumed to have type $\laterbare L1$. In that case $\tick{}{\fpcunittype}(x)$
is a semantic computation that takes a step, and so should only be related to $M$,
if $M$ can also reduce in one step to an $M''$, that should be \emph{later} related to $x$.
Note that $x$ is not necessarily of the form $\purebare(y)$ for some $y$, but we can
still related $x$ to $\purebare(M'')$ using delayed substitutions as in the definition of 
$\laterR{\fpcunittype}$. 

Most of the definition of the logical relation is standard, e.g., in the case of function
types, where related functions are required to map related input to related output. 
The case of recursive type deserves some attention. On the right hand
side, we have $x$ of type $\den{\foldedtype}$, which means it is a
piece of data which later will be unfolded and therefore
available. More precisely, it has also type
$\laterbare\den{\unfoldedtype}$.  This semantic program is related to
a syntactic program $M$ if and only if the unfolding of $M$ reduces in
one computational step to an $M''$ which is later related to $x$.

The logical relation is an example of a guarded recursive definition.
To see this, note first that the lifting operation can be expressed on codes mapping
$\R{} \co A \to B \to \U{}$ to
\[
  \lambda x\co \later{}A, y\co \later{}B \ld
  \latercode{}(\pure{}[\hrt{x' \gets x,y' \gets y}](x'\R{}y'))
\]
and this operation factors as $F \circ \purebare$, 
for $F\co \laterbare (A \to B \to \U{})\to A \to B \to \U{}$ defined
as
\[
  \lambda S . \lambda x\co \later{}A, y\co \later{}B \ld
  \latercode{}(\pure{}[\hrt{x' \gets x,y' \gets y, \R{} \gets
    S}](x'\R{}y'))
\]
Using this, one can formally define the logical relation as a fixed
point of a function of type
\begin{align*}
  \laterbare & (\depprod{\tau}{\FPCTypes}{\den\tau \times \FPCTerms \to \U{}})
               \to 
               (\depprod{\tau}{\FPCTypes}{\den\tau \times \FPCTerms \to \U{}})
\end{align*}
similarly to the formal definition of $\tick{}{}$ in the equation
(\ref{sec:tick:welldef}).

\begin{figure}[tb]
  \begin{align*}
    \now{}{}(*) \R{\fpcunittype} M & \eqdef  \manyk{M}{0}{\fpcunit} \\
    \tick{}{\fpcunittype}(x) \R{\fpcunittype} M & \eqdef  
                                                        \Sigma
                                                        M', M'' \co \FPCTerms \ld M \to_*^0 M'\to^1 M''
                                                        \text{ and } x
                                                        \laterR{\fpcunittype}
                                                        \pure{}{(M'')} \\
    x \R{\tau_1 \times \tau_2} M & \eqdef  \pi_1(x)
                                   \R{\tau_1} \fpcfst{(M)} 
                                   \text{ and } \pi_2(x)
                                   \R{\tau_2} \fpcsnd{(M)}
    \\
    \now{}{}(\inl (x)) \R{\tau_1 + \tau_2} M & \eqdef
                                               \Sigma L. \manyk{M}{0}{\fpcinl{L}} \text{
                                               and } x
                                               \R{\tau_1} L \\
    \now{}{}(\inr (x)) \R{\tau_1 + \tau_2} M & \eqdef \Sigma L.
                                               \manyk{M}{0}{\fpcinr{L}} \text{
                                               and } x
                                               \R{\tau_2} L \\
    \tick{}{\tau_1 + \tau_2}(x) \R{\tau_1 + \tau_2} M & \eqdef
     \Sigma M',M''\co \FPCTerms \ld M \to_*^0 M' \to^1 M''   \text{ and } x \laterR{\tau_1 + \tau_2} \pure{}{(M'')}
    \\
    f \R{\tau \to \sigma} M & \eqdef \Pi x \co
                              \den{\tau},N\co\FPCTerms \ld x
                              \R{\tau} N \to f ( x ) \R{\sigma} (M N) \\
    x \R{\foldedtype} M & \eqdef \Sigma M' M''.
                          \fpcunfold{M} \to_*^0{M'}\toone{}{M''}
                          \text{ and } x \laterR{\unfoldedtype} \pure{}(M'')
  \end{align*}
  \caption{The logical relation
    $\mathcal{R}_\tau : \den{\tau} \times \FPCTerms \to \U{}$.}
  \label{fig:fpc:adequacy:relation}
\end{figure}

\subsection{Proof of computational adequacy}
Before proving computational adequacy we need to show some key
properties about the logical relation $\Rbare$. 
The first of these is that the relation respects the \emph{applicative}
structure  of the $\laterbare$ operator which is that we can
apply an argument that will be available later to a function that will
also be available later. 

\begin{lemma} \label{lem:fpc:r:app} If
  $f \laterR{\tau \to \sigma} \pure{}{(M)}$ and
  $r \laterR{\tau} \pure{}{(L)}$ then
  \[(f \circledast r) \laterR{\sigma} \pure{}{ (M L)}\]
\end{lemma}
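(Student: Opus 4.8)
The plan is to bring both hypotheses and the goal into the common form $\later{}[\xi]{(\dots)}$ with $\xi$ a sub-substitution of $\hrt{g \gets f, s \gets r}$, and then to combine the hypotheses using the generalised applicative structure of $\laterbare$ recorded in (\ref{eq:app:gen:def}).

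First I would rewrite the goal. By (\ref{eq:app:def}) we have $f \circledast r \judgeq \pure{}[\hrt{g \gets f, s \gets r}]{g(s)}$, and by weakening (\ref{eq:fpc:delayed:next:weakening}) we may write $\pure{}{(ML)}$ as $\pure{}[\hrt{g \gets f, s \gets r}]{ML}$; so (\ref{eq:fpc:laterR:next}) gives
\[
  (f \circledast r) \laterR{\sigma} \pure{}{(ML)} \;\judgeq\; \later{}[\hrt{g \gets f, s \gets r}]{(g(s) \R{\sigma} ML)} .
\]
Next, unfolding the definition (\ref{def:fpc:rel}) of $\laterR{}$ and discharging the $\pure{}{(-)}$ entries with the $\beta$-rule (\ref{eq:fpc:delayed:later:beta}), the hypotheses become inhabitants $h_1 \co \later{}[\hrt{g \gets f}]{(g \R{\tau \to \sigma} M)}$ and $h_2 \co \later{}[\hrt{s \gets r}]{(s \R{\tau} L)}$; by weakening (\ref{eq:fpc:delayed:next:weakening}) and the commutation rule (\ref{eq:subst:comm}) both can be regarded as living over the common delayed substitution $\hrt{g \gets f, s \gets r}$.

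It then remains to produce an inhabitant of $\later{}[\hrt{g \gets f, s \gets r}]{(g(s) \R{\sigma} ML)}$. By the clause for $\R{\tau \to \sigma}$ in Figure~\ref{fig:fpc:adequacy:relation}, $h_1$ is an element of $\later{}[\hrt{g \gets f}]{\bigl(\Pi x \co \den{\tau}, N \co \FPCTerms \ld (x \R{\tau} N) \to g(x) \R{\sigma} (M N)\bigr)}$. Feeding this, via the generalised later-application (\ref{eq:app:gen:def}), first the argument $r \co \laterbare\den{\tau}$ and then $\pure{}{(L)} \co \laterbare\FPCTerms$ — and using the special case of (\ref{eq:app:gen:def}) for the second argument, which is of the form $\pure{}[\xi]{L}$ — produces an element of $\later{}[\hrt{g \gets f, x \gets r}]{\bigl((x \R{\tau} L) \to g(x) \R{\sigma} (M L)\bigr)}$. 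Applying $\circledast$ once more to the weakened $h_2 \co \later{}[\hrt{g \gets f, x \gets r}]{(x \R{\tau} L)}$ yields an element of $\later{}[\hrt{g \gets f, x \gets r}]{(g(x) \R{\sigma} (M L))}$, which after renaming $x$ to $s$ is precisely what we want.

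The main obstacle is not mathematical depth but the bookkeeping with delayed substitutions: because $f$ and $r$ need not be of the form $\pure{}{(-)}$, one cannot appeal to (\ref{eq:fpc:laterR:next}) for the hypotheses and must carry the explicit substitutions $\hrt{g \gets f, s \gets r}$ through the argument, repeatedly using (\ref{eq:fpc:delayed:later:beta}), (\ref{eq:fpc:delayed:next:weakening}) and, where the order of entries matters, (\ref{eq:subst:comm}), so as to bring all three statements over the same delayed substitution before the applicative combination.
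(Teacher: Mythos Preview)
Your proof is correct and follows essentially the same approach as the paper's: unfold $\laterR{}$ and the clause for $\R{\tau\to\sigma}$, use the generalised later application (\ref{eq:app:gen:def}) to feed $r$ and $\pure{}{(L)}$ into the first hypothesis, then apply the result to the second hypothesis. Your version is somewhat more explicit about the weakening and commutation of delayed substitutions needed to bring everything over the common substitution $\hrt{g\gets f, s\gets r}$, whereas the paper leaves this implicit, but the argument is the same.
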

\begin{proof}
  By definition $f \laterR{\tau \to \sigma} \pure{}{(M)}$ is type
  equal to \[\later{}[\hrt{x \gets f}]{(x \R{\tau \to \sigma} M)}\]
  which by definition is
  \[
  \later{}[\hrt{x \gets f}]{}(\Pi (y : \den{\tau}) (L:\FPCTerms) .  y
  \R{\tau} L \to x (y) \R{\sigma} M L)
  \]
  By applying the latter to $r$ and $\purebare L$ using the generalised later application of (\ref{eq:app:gen:def})
  we get an element of 
\begin{align*}
  & \later{}[\hrt{x \gets f,y \gets r, L \gets \purebare L}]{}( y \R{\tau} L \to x (y) \R{\sigma} M L) \\
  & \judgeq  \later{}[\hrt{x \gets f,y \gets r}]{}( y \R{\tau} L \to x (y) \R{\sigma} M L)
\end{align*}
 By further applying this to the hypothesis 
  $r \laterR{\tau} \pure{}{(L)}\judgeq \later{}[\hrt{y \gets r}]{}(y \R{\tau} L)$ we get
  \[\later{}[\hrt{x \gets f, y \gets r}]{}(x(y) \R{\sigma} M L)\]
  which is equivalent to
  $(f \circledast r) \laterR{\sigma} \pure{}{(M L)}$, thus concluding
  the case.
\end{proof}
Next we show that the relation is agnostic to $0$-step reduction in
the operational semantics. 
\begin{lemma} \label{lem:fpc:r:zeromany:right}\label{lem:fpc:r:zeromany:left}
  If ${M}\to^0{N}$ then $x \R{\sigma} M$ iff $x \R{\sigma} N$.
\end{lemma}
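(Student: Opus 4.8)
The plan is to prove the equivalence by structural induction on the type $\sigma$, unfolding the defining clause of $\R{\sigma}$ from Figure~\ref{fig:fpc:adequacy:relation} in each case. Even though $\Rbare$ is a guarded recursive definition, plain induction on the structure of $\sigma$ suffices here: the only clause that mentions a type which may be larger than $\sigma$ itself is the one for $\foldedtype$, where $\unfoldedtype$ occurs; but in that clause the relation at $\unfoldedtype$ is invoked only through $x \laterR{\unfoldedtype} \pure{}{(M'')}$ with $M''$ identical on both sides of the equivalence, so no induction hypothesis at $\unfoldedtype$ is needed. It is convenient to record first two elementary facts about the hypothesis $M \to^0 N$. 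First, a value has no reduction, so $M$ is not a value. Second, by determinism (Lemma~\ref{lem:fpc:small-step:det}), $M \to^0 N$ is the \emph{only} reduction out of $M$; hence for every term $P$ we have $\tozeromany{M}{P}$ if and only if $M = P$ or $\tozeromany{N}{P}$, and there is no $M'$ with $\toone{M}{M'}$. Applying the evaluation-context rule of Figure~\ref{fig:op:sem}, the reduction $M \to^0 N$ lifts to $\fpcfst{M} \to^0 \fpcfst{N}$, $\fpcsnd{M} \to^0 \fpcsnd{N}$, $\fpcunfold{M} \to^0 \fpcunfold{N}$ and $M\,N' \to^0 N\,N'$, so the same two facts hold of these compound terms as well.

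The cases $\fpcunittype$, $\tau_1 + \tau_2$ and $\foldedtype$ all have the same form: $x \R{\sigma} M$ asserts that some reduction out of $M$ (out of $\fpcunfold{M}$ for $\foldedtype$) lands in a configuration of a prescribed shape --- either it reaches a value $\fpcunit$, $\fpcinl{L}$ or $\fpcinr{L}$, or it reaches a term $M'$ with $\toone{M'}{M''}$ --- together with a further condition that does not mention $M$ (a condition on the sub-component $L$, or $x \laterR{\unfoldedtype}\pure{}{(M'')}$, etc.). By the two facts above, such a reduction exists out of $M$ exactly when it exists out of $N$: in the sub-cases reaching a value because $M$ is not a value (so $M$ differs from that value), and in the $\tick{}{}$ sub-cases and the $\foldedtype$ case because a leading $\toone{}{}$-step is impossible (so $M$, resp.\ $\fpcunfold{M}$, differs from the term $M'$ preceding the $\toone{}{}$-step). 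The residual side conditions are literally unchanged, so $x \R{\sigma} M$ and $x \R{\sigma} N$ are equivalent. (In the value sub-cases one first unfolds $\manyk{M}{0}{v}$ to $\tozeromany{M}{v}$, its definition.) For $\sigma = \tau_1 \times \tau_2$ we apply the induction hypotheses at $\tau_1$ and $\tau_2$ to $\fpcfst{M} \to^0 \fpcfst{N}$ and $\fpcsnd{M} \to^0 \fpcsnd{N}$ and take the conjunction; for $\sigma = \tau_1 \to \tau_2$ we apply the induction hypothesis at the codomain $\tau_2$ to $M\,N' \to^0 N\,N'$ for each argument $N'$, which makes the two universally quantified formulas defining $f \R{\tau_1 \to \tau_2} M$ and $f \R{\tau_1 \to \tau_2} N$ equivalent.

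The only step calling for any care is the recursive-type case: one must notice that structural induction genuinely suffices there, as explained above, and that it is exactly determinism that excludes the degenerate reduction in which the $\toone{}{}$-step fires immediately out of $\fpcunfold{M}$. Everything else is routine bookkeeping with evaluation contexts and the step-counting congruence rule.
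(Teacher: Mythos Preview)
Your proof is correct and follows the same overall approach as the paper: structural induction on $\sigma$, using determinism of the small-step semantics (Lemma~\ref{lem:fpc:small-step:det}) to transfer reduction prefixes between $M$ and $N$, and lifting $M\to^0 N$ through the evaluation contexts $\fpcfst{[\cdot]}$, $\fpcsnd{[\cdot]}$, $[\cdot]\,N'$, $\fpcunfold{[\cdot]}$.

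There is one minor point where your argument is actually cleaner than the paper's. In the coproduct case with $x=\now{}(\inl(y))$, the paper additionally treats the sub-case $M=\fpcinl{N'}$, invoking the induction hypothesis at $\tau_1$ after observing that then $N=\fpcinl{N''}$ with $N'\to^0 N''$. Your observation that $M$ cannot be a value (since values do not reduce and $\fpcinl{[\cdot]}$ is not an evaluation context) shows this sub-case is vacuous, so the witness $L$ is literally the same on both sides and no induction hypothesis at $\tau_1$ is needed. The same remark applies to your treatment of $\foldedtype$: the paper does not invoke an induction hypothesis there either, but you make explicit why plain structural induction suffices even though $\unfoldedtype$ is not a structural subterm of $\foldedtype$.
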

%
\begin{proof}
  We prove first the left to right implication by induction on $\sigma$, and show just a few cases.
%

  In the case of coproducts, we proceed by case
  analysis on $x$.  In the case of
  $x = \now{}{}(\sinl (y))$, by the assumption we have that
  $\zeromany{M}{\fpcinl{(N')}}$ and $y \R{\tau_{1}} N'$. If 
  $M = \fpcinl{(N')}$, then by $N$ must be of the form $\fpcinl{(N'')}$ for 
  some $N''$, such that ${N'}\to^0{N''}$. In this case, by induction 
  hypothesis $y \R{\tau_{1}} N''$ and so $x \R{\tau_1+\tau_2} N$. 
  If the reduction $\zeromany{M}{\fpcinl{(N')}}$ has positive length,
  by determinancy of the operational semantics 
  (Lemma~\ref{lem:fpc:small-step:det}) we get
  $\zeromany{N}{\fpcinl{N'}}$, and thus 
  $x \R{\tau_1 + \tau_2} N$. The case where
  $x = \now{}{}(\sinr (y))$ is similar.  When $x = \tick{}{\tau_1 + \tau_2} (y)$, by
  the assumption $x \R{\tau_1 + \tau_2} M$ there exist
  $M'$ and $M''$ such that $\zeromany{M}{M'}$ and $\toone{M'}{M''}$
  and $y \laterR{\fpcunittype} \purebare (M'')$. Again by determinancy of 
  the operational semantics, $N \to^0_* M'$ and thus we conclude 
  $x \R{\tau_1 + \tau_2} N$.

%
  Now we consider the case for recursive types. By assumption we know
  there exists $M'$ and $M''$ such that $\zeromany{\fpcunfold{M}}{M'}$
  and $\toone{M'}{M''}$ and
  $x \laterR{\unfoldedtype} \purebare (M'')$.  Since $\tozero{M}{N}$
  then also $\tozero{\fpcunfold{M}}{\fpcunfold{N}}$.  Therefore, from
  the assumption and the fact that the operational semantics is
  deterministic (Lemma~\ref{lem:fpc:small-step:det}) we get
  $\zeromany{\fpcunfold{N}}{M'}$.  By definition of the logical
  relation we get $x \R{\foldedtype} N$, which concludes the proof.
%

The proof of the right to left implication is also by induction on the structure of $\sigma$. 
Again we just show a few cases. 

In the case of the unit type, we proceed by case analysis on $x$.  When
  $x = \now{}{}(\sunit)$ we have that $\zeromany{N}{\fpcunit}$. Since
  $M \to^0 N$ we get $\zeromany{M}{\fpcunit}$ as required. When $x$ is
  $\tick{}{\fpcunittype} (x')$ by assumption $x \R{\fpcunittype} N$
  implies that there exists $N'$ and $N''$ such that
  $\zeromany{N}{N'}$ and $\toone{N'}{N''}$ and
  $x' \laterR{\fpcunittype} \purebare(N'')$. Since also $\zeromany{M}{N'}$
  this implies $x \R{\fpcunittype} M$.

  In the case of recursive types, by assumption we have that
  $x \R{\foldedtype} N$ and $\tozero{M}{N}$.  From the former we
  derive that there exists $M'$ and $M''$ such that
  $\zeromany{\fpcunfold{N}}{M'}$,$\toone{M'}{M''}$ and
  $x \R\unfoldedtype \purebare(M'')$.  Since $\tozero{M}{N}$ then also
  $\tozero{\fpcunfold{M}}{\fpcunfold{N}}$.  Therefore, we know that
  $\zeromany{\fpcunfold{M}}{M'}$, thus by definition of the logical
  relation we conclude.
\end{proof}

%

Now we show a key property of the logical relation. This states that
for programs that are related later after a $1$-step operational
reduction are related now. Note that in the interpretation of the unit
type we used the lifting monad. This was not strictly necessary to get
a ``tick'' algebra structure, but it is crucial to make the following
lemma to work.
\begin{lemma} \label{lem:fpc:r:tick} If $x \laterR{\tau} \pure{}{(M)}$
  and $\toone{M'}{M}$ then $\tick{}{\tau}(x) \R{\tau} M'$.
\end{lemma}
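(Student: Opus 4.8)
The plan is to proceed by case analysis on the FPC type $\tau$, following the structure of the definition of $\R{\tau}$ in Figure~\ref{fig:fpc:adequacy:relation}. In each case the hypothesis $x \laterR{\tau} \pure{}{(M)}$ unfolds, via (\ref{eq:fpc:laterR:next}), to $\later{}[\hrt{y \gets x}]{(y \R{\tau} M)}$ — that is, morally $x$ is related to $M$ "later" — and the hypothesis $\toone{M'}{M}$ is available now; we must produce $\tick{}{\tau}(x) \R{\tau} M'$. The guiding intuition is that $\tick{}{\tau}(x)$ is a semantic computation that takes one step and then behaves like $x$, while $M' \to^1 M$ means $M'$ takes one operational step (a fold-unfold reduction, possibly after some zero-steps — here in fact exactly the $\to^1$ step) to reach $M$; so the "one tick later" on the semantic side is synchronised with the "one step" on the syntactic side.

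First I would handle the three cases where $\tick{}{\tau}$ is literally a lifting-monad tick wrapped around $x$: the unit type $\fpcunittype$ and the sum type $\tau_1+\tau_2$ (and, modulo unfolding $\tick{}{\fpcunittype}$ and $\tick{}{\tau_1+\tau_2}$ via their definitions in Figure~\ref{fig:later:alg}, this is immediate). For $\tau = \fpcunittype$: we must show $\tick{}{\fpcunittype}(x) \R{\fpcunittype} M'$, which by definition means exhibiting $M'', M'''$ with $M' \to^0_* M'' \to^1 M'''$ and $x \laterR{\fpcunittype} \pure{}(M''')$. Take $M'' = M'$ (zero-step reduction of length $0$), the given $\to^1$ step $M' \to^1 M$, and $M''' = M$; then the remaining obligation is exactly the hypothesis $x \laterR{\fpcunittype} \pure{}(M)$. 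The sum-type case is identical in shape, using the $\tick{}{\tau_1+\tau_2}(x)$ clause of $\R{}$. The recursive-type case $\tau = \foldedtype$ is analogous but one must be slightly careful: we need $M', M''$ with $\fpcunfold{M'} \to^0_* M' \cdots$ — wait, rather we need witnesses $N', N''$ with $\fpcunfold{M'} \to^0_* N' \to^1 N''$ and $x \laterR{\unfoldedtype} \pure{}(N'')$, and since $M' \to^1 M$ is a fold-unfold reduction in an evaluation context, $\fpcunfold{(-)}$ is itself an evaluation context, so $\fpcunfold{M'} \to^1 \fpcunfold{M}$; but that would mismatch the type ($\unfoldedtype$ versus $\foldedtype$), so instead I expect the correct move is to use that $M'$, having $M' \to^1 M$, is of a shape compatible with unfolding — actually the cleanest path here is to take $N' = \fpcunfold{M'}$, observe $\fpcunfold{M'} \to^1 \fpcunfold{M}$ is itself the needed $\to^1$-step is wrong on types, so here one reduces $x \laterR\foldedtype \pure{}(M)$ using $\tick{}{\foldedtype} = \purebare(\tick{}{\unfoldedtype}) \app (-)$ together with Lemma~\ref{lem:fpc:r:app}.

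The genuinely non-routine cases are the \emph{product} and \emph{function} types, where $\tick{}{\tau}$ is not a monadic tick but is defined componentwise (Figure~\ref{fig:later:alg}), and the relation $\R{\tau}$ is not of the "$M \to^0_* M' \to^1 M''$"-shape. For products, $\tick{}{\tau_1\times\tau_2}(x) = \langle \tick{}{\tau_1}(\later{}(\pi_1)(x)), \tick{}{\tau_2}(\later{}(\pi_2)(x)) \rangle$, and $x \laterR{\tau_1\times\tau_2}\pure{}(M)$ unfolds (later) to $\pi_1 \R{\tau_1}\fpcfst{M}$ and $\pi_2 \R{\tau_2}\fpcsnd{M}$; one projects the later-relation through $\app$, uses $\fpcfst{M'} \to^1 \fpcfst{M}$ (valid since $\fpcfst{(-)}$ is an evaluation context), and applies the induction hypothesis at $\tau_1$ and $\tau_2$ — together with commuting $\later{}(\pi_i)$ past $\purebare$. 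For functions, we must show $\tick{}{\sigma\to\tau'}(x) = \lambda z.\,\tick{}{\tau'}(x \app \pure{}(z))$ is related to $M'$ at $\sigma\to\tau'$, i.e., for all $z \R{\sigma} N$ we get $\tick{}{\tau'}(x\app\pure{}(z)) \R{\tau'} (M'\,N)$. From $x \laterR{\sigma\to\tau'}\pure{}(M)$ and $z \R\sigma N$, an application of Lemma~\ref{lem:fpc:r:app} (with $r := \pure{}(z)$, using $\pure{}(z) \laterR{\sigma}\pure{}(N)$ which holds by (\ref{eq:fpc:laterR:next}) from the reflexive-ish fact that $z\R\sigma N$ trivially gives $\later{}(z\R\sigma N)$ via $\purebare$) yields $(x\app\pure{}(z)) \laterR{\tau'} \pure{}(M\,N)$; since $M'\,N \to^1 M\,N$ (application is an evaluation context), the induction hypothesis at $\tau'$ closes the case. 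The main obstacle I anticipate is bookkeeping in the function case — getting the delayed-substitution manipulations in Lemma~\ref{lem:fpc:r:app} to line up, and making sure the single $\to^1$ step is correctly threaded through the evaluation context at each type — rather than any deep difficulty; the proof is essentially a synchronisation argument, one induction on $\tau$, with Lemma~\ref{lem:fpc:r:app} doing the work in the two structural cases.
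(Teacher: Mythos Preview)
Your treatment of the unit, sum, product, and function cases is essentially correct and matches the paper's argument. The gap is in the recursive-type case, and it stems from a missing structural ingredient: the proof must be set up by \emph{guarded recursion} in addition to induction on $\tau$, and you have only the latter.

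Concretely, for $\tau = \foldedtype$ you must show $\tick{}{\foldedtype}(x) \R{\foldedtype} M'$, which by definition of $\R{\foldedtype}$ requires witnesses $N',N''$ with $\fpcunfold{M'}\to^0_* N' \to^1 N''$ and $\tick{}{\foldedtype}(x) \laterR{\unfoldedtype} \purebare(N'')$. (Note the left-hand side is $\tick{}{\foldedtype}(x)$, not $x$; your stated obligation ``$x \laterR{\unfoldedtype} \purebare(N'')$'' does not typecheck, since $x : \laterbare\den{\foldedtype}$.) Taking $N'' = \fpcunfold{M}$ via $\fpcunfold{M'}\to^1\fpcunfold{M}$ is fine --- there is no type mismatch, $\fpcunfold{M}$ has type $\unfoldedtype$ --- but the remaining obligation, using $\tick{}{\foldedtype}(x) = \purebare(\tick{}{\unfoldedtype})\app x$ and (\ref{eq:fpc:laterR:next}), unwinds to
\[
\later{}[\hrt{y \gets x}]{(\tick{}{\unfoldedtype}(y) \R{\unfoldedtype} \fpcunfold{M})}.
\]
To discharge this you need the present lemma at type $\unfoldedtype$, applied \emph{inside} the $\laterbare$: from the hypothesis $x \laterR{\foldedtype}\purebare(M)$ one unfolds $\R{\foldedtype}$ under the $\laterbare$ to obtain (later) $\fpcunfold{M}\to^0_*\fpcunfold{N}\to^1 N''$ and $y \laterR{\unfoldedtype}\purebare(N'')$, whence the lemma at $\unfoldedtype$ together with Lemma~\ref{lem:fpc:r:zeromany:right} yields $\tick{}{\unfoldedtype}(y)\R{\unfoldedtype}\fpcunfold{M}$. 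But $\unfoldedtype$ is not structurally smaller than $\foldedtype$, so a plain induction on $\tau$ gives you nothing here, and Lemma~\ref{lem:fpc:r:app} (which concerns later application at function types) does not supply this either. The paper closes the gap by assuming $\laterbare$ of the entire lemma statement up front --- guarded recursion --- which makes the instance at $\unfoldedtype$ available under the delayed substitution exactly where it is needed.
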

\begin{proof}
  The proof is by guarded recursion, so we assume
  that the lemma is ``later true'', i.e., that we have an inhabitant
  of the type obtained by applying $\laterbare$ to the statement of
  the lemma.  We proceed by induction on $\tau$.

  The cases for the unit type and for the coproduct are straightforward
  by definition. In the case for products, by assumption we have
  \[y \laterR{\tau_1 \times \tau_2} \purebare (M) \, .\]
  Unfolding definitions we get
  \[ \later{}[\hrt{x \gets y}]{(\pi_1(x) \R{\tau_1} (\fpcfst{M}))
    \text{ and }(\pi_2(x) \R{\tau_2} \fpcfst{(M)})}
  \]
  which implies
  \[
    (\pi_1(y)) \laterR{\tau_1}   \purebare (\fpcfst{M}) \qquad \text{and} \qquad  \pi_2(y) \laterR{\tau_2}
                                                   \purebare (\fpcsnd{M})
  \]
  Since $\toone{M'}{M}$ then also $\toone{\fpcfst{M'}}{\fpcfst{M}}$
  and $\toone{\fpcsnd{M'}}{\fpcsnd{M}}$, thus we can use the induction
  hypothesis on $\tau_1$ and $\tau_2$ and get
  \[
     \tick{}{\tau_1}(\pi_1(y)) \R{\tau_1}
      \fpcfst{M'} \qquad \text{and} \qquad \tick{}{\tau_2}(\pi_2(y)) \R{\tau_2}
                                        \fpcsnd{M'}
  \]
  by definition $\tick{}{\tau_1 \times \tau_2}$ commutes with $\pi_1$
  and $\pi_2$.  Thus, we obtain
  \[
     \pi_1(\tick{}{\tau_1 \times \tau_2}(y)) \R{\tau_1}
      \fpcfst{M'} \qquad \text{and} \qquad  \pi_2(\tick{}{\tau_1 \times \tau_2}(y)) \R{\tau_2}
                                        \fpcsnd{M'}
  \]
  which is by definition what we wanted.

  Now the case for the function space. Assume
  $f \laterR{\tau_1 \to \tau_2} \purebare (M)$ and $M' \to^1 M$.
  We must show that if $y : \den{\tau_1}$, $N : \FPCTerms$ and
  $y \R{\tau_1} N$ then
  $(\tick{}{\tau_1 \to \tau_2} (f)) (y) \R{\tau_2} (M N)$.
  So suppose $y \R{\tau_1} N$, and thus also
  $\later{}{(y \R{\tau_1} N)}$ which is equal to
  $\purebare (y) \laterR{\tau_1} \purebare (N)$.  By applying
  Lemma~\ref{lem:fpc:r:app} to this and
  $f \laterR{\tau_1 \to \tau_2} \purebare (M)$ we get
  \[
  f \app(\purebare (y)) \laterR{\tau_2} \purebare (M N)
  \]
  Since $M' \to^1 M$ also $M' N \to^1 M N$, and thus, by the induction
  hypothesis for $\tau_2$,
  $\tick{}{\tau_2}(f \app(\purebare(y))) \R{\tau_2} M' N$.
  Since by definition
  $\tick{}{\tau_1\to\tau_2}(f) (y) = \tick{}{\tau_2}(f
  \app\purebare(y))$, this proves the case.

  The interesting case is the one of $\foldedtype$.  Assume
  $x \laterR{\foldedtype} \purebare (M)$ and $\toone{M'}{M}$.  By
  definition of $\laterR{}$ this implies
  $\later{}[\hrt{y \gets x}]{(y \R{\foldedtype} M)}$ which by
  definition of $\R{\foldedtype}$ is
  \[
    \later{}[\hrt{y \gets x}]{}  \Sigma N'
                                  N''. \zeromany{\fpcunfold{M}}{N'} \text{ and } 
                                 \toone{N'}{N''}
                                  \text { and } (y \laterR{\unfoldedtype} \purebare (N''))
  \]
  Since zero-step reductions cannot eliminate outer
  $\fpcunfoldbare$'s, $N'$ must be on the form $\fpcunfold{N}$ for
  some $N$, such that $\zeromany{M}{N}$.  Thus, we can apply the
  guarded induction hypothesis to get
  \[
    \later{}[\hrt{y \gets x}]{}  ( \Sigma N. \zeromany{M}{N} \text{ and }
                                         (\tick{}{\unfoldedtype}(y) \R{\unfoldedtype} \fpcunfold{N})) 
  \]
  Since $\zeromany{\fpcunfold{M}}{\fpcunfold{N}}$, by
  Lemma~\ref{lem:fpc:r:zeromany:right} we get
  \[
  \later{}[\hrt{y \gets x}]{(\tick{}{\unfoldedtype}(y)
    \R{\unfoldedtype} \fpcunfold{M})}
  \]
  which by (\ref{eq:fpc:laterR:next}) is
  \[
  \pure{}[\hrt{y \gets x}]{(\tick{}{\unfoldedtype}(y))}
  \laterR{\unfoldedtype} \purebare (\fpcunfold{M})
  \]
  By (\ref{eq:app:def}) this implies
  \[
  \purebare (\tick{}{\unfoldedtype}) \app x \laterR{\unfoldedtype}
  \purebare (\fpcunfold{M})
  \]
  Since by assumption $\toone{M'}{M}$ also
  $\toone{\fpcunfold{M'}}{\fpcunfold{M}}$ thus, by definition of the
  logical relation
  \[
  \purebare(\tick{}{\unfoldedtype}) \app x \R{\foldedtype} M'
  \]
  By definition $\purebare(\tick{}{\unfoldedtype}) \app x$ is equal to
  $\tick{}{\foldedtype} (x)$ thus we can derive
  \[
  \tick{}{\foldedtype} (x) \R{\foldedtype} M'
  \]
  as we wanted.
\end{proof}

We can now finally state and prove the fundamental lemma stating that 
any term is related to its denotation in the logical relation of 
Figure~\ref{fig:fpc:adequacy:relation}. As we shall see below, this will imply computational
adequacy. 

\begin{lemma}[Fundamental Lemma]
  Suppose $\Gamma \vdash M : \tau$, for
  $\Gamma \equiv x_1 : \tau_1, \cdots, x_n : \tau_n$ and
  $\vdash N_i : \tau_i$, $\gamma_i : \den{\tau_i}$ and
  $\gamma_i \R{\den{\tau_i}} N_i$ for $i \in \{1 ,\dots, n\}$, then
  $\den{M}(\vec{\gamma}) \R{\tau} M [\vec{N}/\vec{x}]$
  \label{lem:fpc:fundamental}
\end{lemma}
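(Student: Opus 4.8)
The plan is to prove the Fundamental Lemma by induction on the typing derivation $\Gamma \vdash M : \tau$ rather than on the structure of types: because the logical relation $\R{}$ of Figure~\ref{fig:fpc:adequacy:relation} is already well defined by guarded recursion, for each subterm it suffices to appeal to the induction hypothesis for its subderivation. The routine cases are the variable, $\fpcunit$, pairing, projections, injections, $\lambda$-abstraction and application. For a variable $x_i$ both sides unfold to $\gamma_i$ and $N_i$, related by hypothesis; for each of the others I would unfold the relevant clause of the interpretation (Figure~\ref{fig:fpc:term:interp}) and of $\R{}$, feed in the induction hypotheses (for $\lambda$-abstraction, after extending the environment and the substitution by a pair $z \R{\sigma} P$), and absorb the $\to^0$ reductions $\fpcfst{\fpcpair{M}{N}} \to^0 M$, $(\lambda x.M)P \to^0 M[P/x]$, $\fpccase{\fpcinl{L'}}{M}{N}\to^0 M[L'/x_1]$, etc., using that $\R{}$ is stable under $\to^0$ on the syntactic side (Lemma~\ref{lem:fpc:r:zeromany:right}); the substitution bookkeeping $M[\vec{N}/\vec{x}][P/x] = M[\vec{N},P/\vec{x},x]$ is purely syntactic. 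None of these cases needs guarded recursion.

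The guarded structure enters in the fold, unfold and case clauses. For $\Gamma \vdash \fpcfold{M} : \foldedtype$ with $\Gamma \vdash M : \unfoldedtype$, the interpretation is $\purebare(\den{M}(\gamma))$, and $\fpcunfold{(\fpcfold{(M[\vec{N}/\vec{x}])})} \to^1 M[\vec{N}/\vec{x}]$; so by the definition of $\R{\foldedtype}$ and (\ref{eq:fpc:laterR:next}) it suffices to produce an element of $\later{}(\den{M}(\gamma) \R{\unfoldedtype} M[\vec{N}/\vec{x}])$, which one gets by applying $\purebare$ to the induction hypothesis for $M$. For $\Gamma \vdash \fpcunfold{M} : \unfoldedtype$ with $\Gamma \vdash M : \foldedtype$, the induction hypothesis gives $\den{M}(\gamma) \R{\foldedtype} M[\vec{N}/\vec{x}]$, hence there are $M',M''$ with $\fpcunfold{(M[\vec{N}/\vec{x}])} \to_*^0 M' \to^1 M''$ and $\den{M}(\gamma) \laterR{\unfoldedtype} \purebare(M'')$; Lemma~\ref{lem:fpc:r:tick} then yields $\tick{}{\unfoldedtype}(\den{M}(\gamma)) \R{\unfoldedtype} M'$, and iterating Lemma~\ref{lem:fpc:r:zeromany:right} along the $\to_*^0$ reductions gives $\tick{}{\unfoldedtype}(\den{M}(\gamma)) \R{\unfoldedtype} \fpcunfold{(M[\vec{N}/\vec{x}])}$, which is the claim since $\den{\fpcunfold{M}}(\gamma) = \tick{}{\unfoldedtype}(\den{M}(\gamma))$ (well typed by Lemma~\ref{lem:fpc:fix:eq}).

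I expect the case of $\fpccase{L}{M}{N}$ to be the main obstacle. Its interpretation is $\widehat{f}(\den{L}(\gamma))$, where $\widehat{f}$ is the $\laterbare$-algebra extension (\ref{def:fpc:unique:ext}) of $f$ with $f(\inl(y)) = \den{M}(\gamma,y)$ and $f(\inr(y)) = \den{N}(\gamma,y)$, and $\den{L}(\gamma)$ is an element of the lifting monad carrying a possibly unbounded stack of ticks, so a plain case analysis does not terminate. Instead I would prove, by a second, nested guarded recursion, the auxiliary statement that for all $\ell$ and all closed terms $L_0$, if $\ell \R{\tau_1+\tau_2} L_0$ then $\widehat{f}(\ell) \R{\sigma} \fpccase{L_0}{M[\vec{N}/\vec{x}]}{N[\vec{N}/\vec{x}]}$; the Fundamental Lemma in this case then follows by instantiating at $\ell = \den{L}(\gamma)$, $L_0 = L[\vec{N}/\vec{x}]$ using the induction hypothesis for $L$. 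In the proof of the auxiliary statement one case-splits on $\ell$: if $\ell = \now{}{}(\inl(y))$ then the hypothesis gives $L_0 \to_*^0 \fpcinl{L'}$ with $y \R{\tau_1} L'$, and since $\widehat{f}(\now{}{}(\inl(y))) = \den{M}(\gamma,y)$ the induction hypothesis for $M$ (with the pair $y \R{\tau_1} L'$) gives $\den{M}(\gamma,y) \R{\sigma} M[\vec{N}/\vec{x}][L'/x_1]$, which transports along $\fpccase{L_0}{\ldots}{\ldots} \to_*^0 M[\vec{N}/\vec{x}][L'/x_1]$ by Lemma~\ref{lem:fpc:r:zeromany:right} (the $\inr$ subcase is symmetric); and if $\ell = \tick{}{\tau_1+\tau_2}(r)$ then the hypothesis gives $L_0 \to_*^0 M' \to^1 M''$ with $r \laterR{\tau_1+\tau_2} \purebare(M'')$, the guarded hypothesis of the auxiliary statement combined with the applicative structure of $\laterbare$ (delayed substitutions, (\ref{eq:app:gen:def}), (\ref{eq:fpc:laterR:next})) gives $(\purebare(\widehat{f}) \app r) \laterR{\sigma} \purebare(\fpccase{M''}{\ldots}{\ldots})$, and then, since $\widehat{f}(\tick{}{\tau_1+\tau_2}(r)) = \tick{}{\sigma}(\purebare(\widehat{f}) \app r)$ by (\ref{def:fpc:unique:ext}) and $\fpccase{M'}{\ldots}{\ldots} \to^1 \fpccase{M''}{\ldots}{\ldots}$, Lemma~\ref{lem:fpc:r:tick} followed by Lemma~\ref{lem:fpc:r:zeromany:right} along $L_0 \to_*^0 M'$ finishes it. The delicate points will be the delayed-substitution bookkeeping in the $\tick{}{}$ subcase and keeping the two layers of recursion — structural induction on the typing derivation on the outside, guarded recursion inside the $\fpccasebare$ case — cleanly separated; the fold and unfold cases, by contrast, go through directly once Lemma~\ref{lem:fpc:r:tick} is available.
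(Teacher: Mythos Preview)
Your proposal is correct and follows essentially the same route as the paper. One small organisational difference: the paper wraps the whole proof in an outermost layer of guarded recursion and appeals to that hypothesis in the $\fpcfoldbare$ case, whereas you (correctly) note that the structural induction hypothesis for the subderivation $\Gamma \vdash M : \unfoldedtype$ already gives $\den{M}(\vec\gamma) \R{\unfoldedtype} M[\vec N/\vec x]$, so applying $\purebare$ produces the required $\laterbare$-statement without the outer recursion. Both the paper and you use a nested guarded recursion for $\fpccasebare$; your auxiliary statement is the unfolding of the paper's function-type statement (\ref{lem:fpc:fundamental:case:gr}), and your tick-subcase via (\ref{def:fpc:unique:ext}) is exactly the paper's use of Lemma~\ref{lem:fpc:r:app} together with Lemma~\ref{lem:fpc:case:hom}.
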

\begin{proof}
The proof is by guarded recursion, and so we assume $\laterbare$ 
applied to the statement of the lemma. This implies that 
for all well-typed terms $M$ with context $\Gamma$ and type $\tau$ the
following holds:
  \[
  \later{}{(\den{M}(\vec{\gamma}) \R{\tau} M [\vec{N}/\vec{x}])}
  \]
  Then we proceed by induction on the typing derivation
  $\Gamma \vdash M : \tau$, showing only the interesting cases.  

  Consider first the case of $\Gamma \vdash \lambda x. M : \sigma \to \tau$.
  Assuming $\gamma_{n+1} \R{\sigma} M_{n+1}$, we must show 
  $\den{\lambda x.M} (\vec{\gamma}) (\gamma_{n+1}) \R\tau \den{M}(\vec{\gamma}, \gamma_{n+1})$. 
  Since
  \begin{align*}
    \den{\lambda x.M} (\vec{\gamma}) (\gamma_{n+1}) 
    & = \den{M}(\vec{\gamma}, \gamma_{n+1}) \\
    (\lambda x.M)[\vec{M}/\vec{x} ](M_{n+1}) & =  \lambda
                                                 x.(M[
                                                 \vec{M}/\vec{x}])(M_{n+1})
  \end{align*}
  and $\tozero{\lambda x.(M[ \vec{M}/\vec{x}])(M_{n+1})}{(M[\vec{M}/\vec{x}])[M_{n+1}/x]}$, by 
  Lemma~\ref{lem:fpc:r:zeromany:right} it suffices to prove
  \[\den{M} (\vec{\gamma}, \gamma_{n+1})
  \R{\tau}{M[\vec{M}/\vec{x}][M_{n+1}/x]}\]
  which follows from the induction hypothesis.
%
%
%

  For the case
  $\jud{\Gamma}{\fpcunfold{M}}{\tau[\mu\alpha.\tau/\alpha]}$ we must show that
  \[\den{\fpcunfold{M}}(\vec{\gamma}) \R{\unfoldedtype}
  (\fpcunfold{M})[\vec{N}/\vec{x}]\]
  By induction hypothesis we know that
  $\den{M}(\vec{\gamma}) \R{\foldedtype} (M [\vec{N}/\vec{x}])$
  which means that there exists $M'$ and $M''$ such that $\zeromany{\fpcunfold{(M [\vec{N}/\vec{x}])}}{M'}$ 
  and $\toone{M'}{M''}$ and
  $\den{M}(\vec{\gamma}) \laterR{\unfoldedtype} \purebare (M'')$.  By
  Lemma~\ref{lem:fpc:r:tick} then
  $\tick{}{\unfoldedtype}(\den{M}(\vec{\gamma})) \R{\unfoldedtype}
  M'$
  and since $\zeromany{\fpcunfold{(M [\vec{N}/\vec{x}])}}{M'}$ by
  repeated application of Lemma~\ref{lem:fpc:r:zeromany:right} we get
  \[\tick{}{\unfoldedtype}(\den{M}(\vec{\gamma})) \R{\unfoldedtype}
  \fpcunfold{(M [ \vec{N}/\vec{x}])}\]
  Since by definition $\den{\fpcunfold{M}}(\vec{\gamma}) = 
  \tick{}{\unfoldedtype}(\den{M}(\vec{\gamma}))$ this finishes the proof of the case. 

  For the case $\jud{\Gamma}{\fpcfold{M}}{\mu \alpha. \tau}$ we want
  to show that
  \[\den{\fpcfold{M}}(\vec{\gamma}) \R{\foldedtype}
  (\fpcfold{M})[ \vec{N}/\vec{x}]\]
  By definition of the
  logical relation we have to show that there exist $M'$ and $M''$
  such that
  \[\zeromany{\fpcunfold{(\fpcfold{(M [ \vec{N}/\vec{x}])})}}{M'}\]
  $\toone{M'}{M''}$ and that
  $\den{\fpcfold{M}}(\vec{\gamma}) \laterR{\unfoldedtype} \purebare(
  M'' )$.
  Setting $M''$ to be $(M [ \vec{N}/\vec{x}])$, we are left to show
  that
  \[ \den{\fpcfold{M}}(\vec{\gamma}) \laterR{\unfoldedtype} \purebare
  (M [ \vec{N}/\vec{x}]) \]
  which is equal by definition of the interpretation function to
  \[\purebare( \den{M}(\vec{\gamma} )) \laterR{\unfoldedtype}
  \purebare( (M [ \vec{N}/\vec{x}]) )\]
  The latter is equal by (\ref{eq:fpc:laterR:next}) to
  $\later{}{(\den{M}(\vec{\gamma}) \R{\unfoldedtype} (M [
    \vec{N}/\vec{x}]))}$ which is true by the guarded recursive
  hypothesis.
  
  For the case $\jud{\Gamma}{\fpcinl{M}}{\tau_1 + \tau_2}$ we have to
  prove that
  \[
  \den{\fpcinl{M}}(\vec{\gamma}) \R{\tau_1 + \tau_2} \fpcinl{M}[
  \vec{M}/\vec{x}]
  \]
  By definition of the interpretation function
  $\den{\fpcinl{M}}(\vec{\gamma})$ is equal to
  $\now{}{} (\sinl (\den{M}(\vec{\gamma})))$. By definition of the
  logical relation we have to prove that there exists $M'$ such that
  \[\manyk{(\fpcinl{M})[ \vec{M}/\vec{x}]}{0}{\fpcinl{M'}} \text{ and } \den{M}(\vec{\gamma}) \R{\tau_1} M'.\]  
  The former is trivially true
  with $M' = M [ \vec{M}/\vec{x}]$ and the latter is by induction
  hypothesis.  The case for
  $\jud{\Gamma}{\fpcinr{N}}{\tau_1 + \tau_2}$ is similar.
  
  For the case $\jud{\Gamma}{\fpccase{L}{M}{N}}{\sigma}$ we have to
  prove that
  \[
  \den{\fpccase{L}{M}{N}} (\vec{\gamma}) \R{\sigma}
  (\fpccase{L}{M}{N})[ \vec{M}/\vec{x}]
  \]
  For this it suffices to prove
  \begin{equation}
       \den{\lambda x. \fpccase{x}{M}{N}} (\vec{\gamma}) \R{\tau_1 +
        \tau_2
        \to \sigma}  (\lambda x. \fpccase{x}{M}{N}) [ \vec{M}/\vec{x}]
    \label{lem:fpc:fundamental:case:gr}
  \end{equation}
  and then applying this to
  $\den{L}(\vec{\gamma}) \R{\tau_1 + \tau_2} L[ \vec{M}/\vec{x}]$.  We
  prove (\ref{lem:fpc:fundamental:case:gr}) by guarded recursion thus
  assuming the statement is true later.

  Assume $y$ of type $\den{\tau_1 + \tau_2}$, $L$ a term, and $y \R{\tau_1 + \tau_2} L$.
  We proceed by case analysis on $y$ which is of type
  $\den{\tau_1 + \tau_2}$ which by definition is
  $L(\den{\tau_1} + \den{\tau_2})$.  In the case
  $y = \now{}{}(\sinl (z))$, where $z$ is of type
  $\den{\tau_1}$ we know by assumption that there exists $L'$ s.t.
  $\manyk{L}{0}{\fpcinl{(L')}}$ and $z \R{\tau_1} L'$. Since
  \[
  \den{\lambda x. \fpccase{x}{M}{N}}
  (\vec{\gamma})(\now{}{}(\sinl(z))) =\den{M}(\vec{\gamma},
  z)
  \] and
  \[
  \fpccase{L}{M [ \vec{M}/\vec{x}]}{N[ \vec{M}/\vec{x}]} \Rightarrow^0
  M[ \vec{M}/\vec{x}] [L'/x_1]
  \]
  by Lemma~\ref{lem:fpc:r:zeromany:left} we are left to prove
  \[
  \den{M}(\vec{\gamma}, \gamma) \R{\sigma} \ M[ \vec{M}/\vec{x}] [
  L'/x_1]
  \]
  which is true by induction hypothesis.
  The case $y = \now{}{}(\sinr (z))$ where $z$ is of
  type $\den{\tau_2}$ is similar.

  Now consider the case of 
  $y = \tick{}{\tau_1 + \tau_2} (z)$, where $z$ is of
  type $\later{}{\den{\tau_1 + \tau_2}}$.  By induction hypothesis we
  know that
  $\theta_{\tau_1 + \tau_2} (z) \R{\tau_1 + \tau_2} L $, thus
  there exist $L'$ and $L''$ of type $\FPCTerms$ such that
  $L \to_*^0 L' $, $L' \to^1 L''$ and
  $z \laterR{\tau_1 + \tau_2} \pure{}{( L'' )}$.

  Recall that we have assumed $\laterbare$ of (\ref{lem:fpc:fundamental:case:gr}), i.e.,
  \[
     \laterbare(\den{\lambda x. \fpccase{x}{M}{N}} (\vec{\gamma}) 
     \R{\tau_1 + \tau_2 \to \sigma}  
     (\lambda x. \fpccase{x}{M}{N}) [ \vec{M}/\vec{x}] ) 
  \]
  which is type equal to
  \[
     \pure{}{(\den{\lambda x. \fpccase{x}{M}{N}} (\vec{\gamma}))}
     \laterR{\tau_1 + \tau_2 \to \sigma} 
     \pure{}{((\lambda x. \fpccase{x}{M}{N}) [ \vec{M}/\vec{x}])}
  \]
  By Lemma~\ref{lem:fpc:r:app} we can apply this to the assumption $z \laterR{\tau_1 + \tau_2} \pure{}{( L'' )}$
  thus getting
  \[
     \purebare(\den{\lambda x. \fpccase{x}{M}{N}}(\vec{\gamma}))\circledast z 
     \laterR{\sigma} 
     \purebare(((\lambda x. \fpccase{x}{M}{N})[ \vec{M}/\vec{x}])(L'')) 
  \]  
  Since $\toone{L'}{L''}$ we can apply Lemma~\ref{lem:fpc:r:tick} and
  obtain
  \begin{align*}
     \theta_\sigma (\pure{}{( \den{\lambda x. \fpccase{x}{M}{N}}  (\vec{\gamma}))} \circledast z) 
     \R{\sigma} 
     \fpccase{L'}{M [ \vec{M}/\vec{x}]}{N [ \vec{M}/\vec{x}]} 
  \end{align*}
  By Lemma~\ref{lem:fpc:r:zeromany:left} with the fact that
  $L \to^0_* L'$ we get
  \begin{align*}
     \theta_\sigma (\pure{}{( \den{\lambda x. \fpccase{x}{M}{N}}
      (\vec{\gamma}))} \circledast z)
    \R{\sigma} 
    \fpccase{L}{M [ \vec{M}/\vec{x}]}{N [ \vec{M}/\vec{x}]} 
  \end{align*}
  And finally by simplifying the left-hand side using
  Lemma~\ref{lem:fpc:case:hom}:
  \begin{align*}
     \theta_\sigma (\pure{}{( \den{\lambda x. \fpccase{x}{M}{N}}
      (\vec{\gamma}))} \circledast z)  
    & =  \den{\lambda  x. \fpccase{x}{M}{N}} (\vec{\gamma}) (y)
  \end{align*}
  thus getting
  \[
      \den{\lambda x. \fpccase{x}{M}{N}} (\vec{\gamma})(y)
     \R{ \sigma}
     (\lambda x. \fpccase{x}{M}{N}) [ \vec{M}/\vec{x}]  (L)
  \]
  as we wanted.
\end{proof}
From the Fundamental lemma we can now prove computational adequacy.
\begin{theorem}[Intensional Computational Adequacy]\label{thm:fpc:adequacy}
  If $M\co \fpcunittype$ is a closed term then $\manyk{M}{k}{\fpcunit}$
  iff $\den{M}(*) \propeq \delaybare^k(\eta(\sunit))$.
\end{theorem}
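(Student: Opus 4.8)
The plan is to prove the two implications separately. The forward direction --- $\manyk{M}{k}{\fpcunit}$ implies $\den{M}(*) = \delaybare^k(\eta(\sunit))$ --- I would obtain immediately from Proposition~\ref{prop:fpc:soundness} (Soundness) instantiated with $N = \fpcunit$, using that $\den{\fpcunit}(*) = \eta(\sunit)$.

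For the converse the essential ingredient is the Fundamental Lemma (Lemma~\ref{lem:fpc:fundamental}). Applying it in the empty context to the closed term $M : \fpcunittype$ gives $\den{M}(*) \R{\fpcunittype} M$, and combining this with the hypothesis $\den{M}(*) = \delaybare^k(\eta(\sunit))$ yields $\delaybare^k(\eta(\sunit)) \R{\fpcunittype} M$. So I would reduce the theorem to the auxiliary claim: \emph{for every $k$ and every closed $M : \fpcunittype$, if $\delaybare^k(\eta(\sunit)) \R{\fpcunittype} M$ then $\manyk{M}{k}{\fpcunit}$.}

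I would prove this claim by induction on $k$. For $k = 0$ the hypothesis $\eta(\sunit) \R{\fpcunittype} M$ is by definition (first clause of Figure~\ref{fig:fpc:adequacy:relation}) exactly $\manyk{M}{0}{\fpcunit}$. For $k = k'+1$, I would unfold $\delaybare^{k'+1}(\eta(\sunit)) = \tick{}{\fpcunittype}(\purebare(\delaybare^{k'}(\eta(\sunit))))$ using $\delay{}{\fpcunittype} = \tick{}{\fpcunittype} \circ \purebare$, so that the second clause of $\R{\fpcunittype}$ supplies closed terms $M', M''$ with $\zeromany{M}{M'}$, $\toone{M'}{M''}$ and $\purebare(\delaybare^{k'}(\eta(\sunit))) \laterR{\fpcunittype} \purebare(M'')$. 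By (\ref{eq:fpc:laterR:next}) this last fact is type-equal to $\later{}{(\delaybare^{k'}(\eta(\sunit)) \R{\fpcunittype} M'')}$; applying $\laterbare$ to the induction hypothesis (via $\purebare$ and the dependent later application $\app$, fed with $\purebare(M'')$ and the transported proof) gives $\later{}{(\manyk{M''}{k'}{\fpcunit})}$, and then the second rule defining the guarded transitive closure $\manyk{}{}{}$ (Figure~\ref{fig:op:sem}) closes the case, producing $\manyk{M}{k'+1}{\fpcunit}$ with witnesses $M'$ and $M''$.

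The main obstacle will be the inductive step of the auxiliary claim: getting the $\laterbare$'s to line up, i.e.\ chaining the isomorphism (\ref{eq:fpc:laterR:next}) with the applicative structure of $\laterbare$ to feed $\later{}{(\delaybare^{k'}(\eta(\sunit)) \R{\fpcunittype} M'')}$ into $\laterbare$ of the induction hypothesis. This is bookkeeping rather than conceptual difficulty --- notably, unlike the classical domain-theoretic argument, no separate proof of the existence of a recursive-domain solution is required, since that was already absorbed into the guarded-recursive definition of the logical relation in Section~\ref{sec:logical:relation}.
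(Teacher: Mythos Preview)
Your proposal is correct and follows essentially the same route as the paper: soundness for the forward direction, and for the converse the Fundamental Lemma followed by the auxiliary claim that $\delaybare^k(\eta(\sunit)) \R{\fpcunittype} M$ implies $\manyk{M}{k}{\fpcunit}$, proved by case analysis on $k$. The only minor difference is in how you obtain $\later{}{(\manyk{M''}{k'}{\fpcunit})}$ in the successor case: the paper phrases the auxiliary claim as proved ``by guarded recursion and induction on $k$'' and invokes the guarded recursion hypothesis directly, whereas you use the ordinary induction hypothesis at $k'$ and push it under $\laterbare$ via $\purebare$ and $\app$. Both are valid here; your version is arguably cleaner since it avoids an appeal to guarded recursion that is not really needed (the natural-number induction already provides the decrease).
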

\begin{proof}
  The left to right implication is soundness
  (Proposition~\ref{prop:fpc:soundness}). For the right to left implication note
  first that the Fundamental Lemma (Lemma~\ref{lem:fpc:fundamental})
  implies $\delta^k(\eta(\sunit)) \R{\fpcunittype} M$.  To complete the
  proof it suffices to show that
  $\delta^k_\fpcunittype (\eta(\sunit)) \R{\fpcunittype} M$ implies
  $\manyk{M}{k}{\fpcunit}$.

  This is proved by guarded recursion and induction on $k$: the case of $k = 0$ is
  immediate by definition of $\R{\fpcunittype}$. If $k = k' + 1$ first
  assume $\delta^k_\fpcunittype (\eta(\sunit)) \R{\fpcunittype} M$.  By
  definition of $\R{}$ there exist $M'$ and $M''$ such that
  $M \to^0_* M'$, $M' \to^1 M''$ and
  $\pure{}{(\delta^{k'}_{\fpcunittype}(\eta(\sunit)))} \laterR{\fpcunittype}
  \pure{}{(M'')}$
  which is type equal to
  $\later{}{(\delta_{\fpcunittype}^{k'}(\eta(\sunit)) \R{\fpcunittype}
    M'')}$.
  By the guarded recursion assumption we get
  $\later{}{(\manyk{M''}{k'}{\fpcunit})}$ which by definition implies
  $\manyk{M}{k}{\fpcunit}$.
\end{proof}

From Theorem~\ref{thm:fpc:adequacy} one can deduce that whenever two
terms have equal denotations they are contextually equivalent in a
very intensional way, as we now describe. By a context, we mean a term
$C[-]$ with a hole, and we say that $C[-]$ has type
$\Gamma,\tau \to (-, \fpcunittype)$ if $C[M]$ is a closed term of type
$\fpcunittype$, whenever $\hastype{}{\Gamma}{M}{\tau}$.

\begin{corollary}
  Suppose $\hastype{}{\Gamma}{M}{\tau}$ and $\den M \propeq \den N$.
  If $C[-]$ has type $\Gamma,\tau \to (-, \fpcunittype)$ and
  $ \manyk{C[M]}{k}{\fpcunit}$ also $\manyk{C[N]}{k}{\fpcunit}$.
\end{corollary}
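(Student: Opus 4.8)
The plan is to combine computational adequacy (Theorem~\ref{thm:fpc:adequacy}) with \emph{compositionality} of the denotational semantics. The key observation is that the denotation of a closed term $C[M] : \fpcunittype$ depends on $M$ only through its denotation $\den{\Gamma \vdash M : \tau}$. Concretely, I would first prove the following compositionality lemma: whenever $C[-]$ has type $\Gamma,\tau \to (-,\fpcunittype)$ and $\den{\Gamma \vdash M : \tau} = \den{\Gamma \vdash N : \tau}$ as maps $\den\Gamma \to \den\tau$, then $\den{\cdot \vdash C[M] : \fpcunittype} = \den{\cdot \vdash C[N] : \fpcunittype}$.

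This lemma is proved by induction on the structure of the context $C[-]$. The base case $C[-] = [-]$ forces $\Gamma = \cdot$ and $\tau = \fpcunittype$ and is immediate. For each way of building a larger context, the corresponding clause of Figure~\ref{fig:fpc:term:interp} expresses the denotation of the compound term as a fixed function of the denotations of its immediate subterms; since the hole occurs in exactly one subterm, applying the induction hypothesis to that subterm (with an appropriately extended context when the hole lies under a $\lambda$ or inside a branch of a $\fpccasebare$, which is why the statement is phrased for arbitrary $\Gamma$) and leaving the denotations of the other subterms untouched yields the claim. This step is entirely routine but is the only place where real work happens; the mild subtlety is that the interpretation is defined on typing derivations rather than raw terms, so one should read the induction as being over the derivation of $\cdot \vdash C[M] : \fpcunittype$, with the hole tracked through the subderivations.

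Given the lemma, the corollary follows directly: assume $\manyk{C[M]}{k}{\fpcunit}$. By Theorem~\ref{thm:fpc:adequacy} (the soundness direction, i.e.\ Proposition~\ref{prop:fpc:soundness} specialised to unit type) we get $\den{C[M]}(*) \propeq \delaybare^k(\eta(\sunit))$. Since $\den M \propeq \den N$, the compositionality lemma gives $\den{C[M]} \propeq \den{C[N]}$, hence $\den{C[N]}(*) \propeq \delaybare^k(\eta(\sunit))$. Applying the right-to-left implication of Theorem~\ref{thm:fpc:adequacy} to $C[N]$ yields $\manyk{C[N]}{k}{\fpcunit}$, as required.

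I expect no genuine obstacle here; the statement is a standard consequence of adequacy plus compositionality. The only thing to be careful about is handling contexts whose hole sits under binders, which is dealt with by stating the compositionality lemma uniformly in the typing context $\Gamma$ of the hole.
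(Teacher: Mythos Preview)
Your approach is correct and is exactly what the paper intends: the corollary is stated without proof, introduced only by ``From Theorem~\ref{thm:fpc:adequacy} one can deduce\ldots'', and your combination of compositionality with both directions of Theorem~\ref{thm:fpc:adequacy} is the standard unpacking.

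One small point: the compositionality lemma as you state it fixes the outer judgement to $(\cdot,\fpcunittype)$, but for the induction on $C$ to go through you need it for arbitrary $(\Delta,\sigma)$, since going under a $\lambda$ or into a $\fpccasebare$ branch extends the \emph{outer} context $\Delta$, not the hole's context $\Gamma$ (which stays fixed throughout, as the typing rules in Figure~\ref{fig:fpc:context:typing} show). Your parenthetical suggests you have this in mind but attributed the extension to the wrong context. Once the lemma is stated as ``for $C : (\Gamma,\tau)\to(\Delta,\sigma)$, if $\den{\Gamma\vdash M:\tau}=\den{\Gamma\vdash N:\tau}$ then $\den{\Delta\vdash C[M]:\sigma}=\den{\Delta\vdash C[N]:\sigma}$'', the induction is routine.
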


As stated above, this is a very intensional result in the sense that
whenever two FPC-denotable programs are equal we can derive that, under
any context, they reduce to the same value with the same number of
computational steps. This means that our model distinguishes programs
whose input-output behaviour is the same, but the way in which the
result is computed is computationally different. More specifically, two different
algorithms implementing the same specification, but with a different
computational complexity, will be considered different in the model. 
We explain how to recover this extensionality via a logical relation
in the next section.

\section{Extensional Computational Adequacy}
\label{sec:extensional}

Our model of FPC is intensional in the sense that it distinguishes between
computations computing the same value in a different number of steps. 
In this section we define a logical relation which relates elements of the model
if they differ only by a finite number of computation steps. In particular, this
also means relating $\bot$ to $\bot$. 

Such a relation must be defined on the types of the form $\forall \kappa . \den \sigma$ 
rather than directly on the types $\den\sigma$. To see why, consider the case of 
$\sigma = \fpcunittype$, in which case $\den\sigma = L1$. 
Recall from Section~\ref{sec:fpc:ToT} that in the topos of trees model $L1$
is interpreted as the family of sets
\[L1(n) = \{\bot, 0, 1, \dots, n-1\}\]
which describes computations terminating in at most $n-1$ steps or using at
least $n$ steps (corresponding to $\bot$). It cannot distinguish between termination in
more than $n-1$ steps and real divergence.
Our relation should relate 
a terminating value $x$ in $L1(n)$ to any other terminating value, but not real divergence,
which is impossible, if divergence cannot be distinguished from slow termination.
Another, more semantic, way to phrase the problem is that termination as
described by the subsets $\{0, 1, \dots, n-1\}$ of $L1(n)$ for each $n$ does not 
form a subobject of $L1$. 

On the other hand, if $L1 \cong 1 + \laterclock\kappa 1$ then, as we saw in section 
the type 
\begin{equation*}
  L^{\glob}A  \eqdef \forall \kappa. LA
  \label{eq:fpc:co-inductive:lifting-monad}
\end{equation*}
is a coinductive solution to the type equation
\[
  L^{\glob}1 \cong 1 + L^{\glob}1
\]
Semantically $L^{\glob}1$ is modelled as 
the set $\N + \{\bot\}$, and termination is the subset of this corresponding to the left 
inclusion of $\N$. So on the global level we can, at least semantically, distinguish between
termination and non-termination. This is reflected syntactically in Lemma~\ref{lem:fpc:wb:den:gen}.

We refer to $\forall\kappa . \den \fpcunittype \judgeq L^{\glob}1$ as the \emph{global interpretation} of
the type $\fpcunittype$ because it captures the global behaviour (computable in \emph{any} number 
of steps) of terms of type $\fpcunittype$. We now extend this to the global interpretation of all types and terms 
and give the definition of the logical relation.
 
\subsection{Global interpretation of types and terms}

Recall that the developments above should be read as taking place in a context of an
implicit clock $\kappa$. 
To be consistent with the notation 
of the previous sections, $\kappa$ will remain implicit in the denotations
of types and terms, although one might choose to write e.g. $\den\sigma^\kappa$ 
to make the clock explicit. 

We define global interpretations of types and terms as
follows:
\begin{equation*}
  \begin{aligned}
    \denglobal{\sigma} & \eqdef\forall \kappa. \den{\sigma}\\
    \denglobal{M} &\eqdef \Lambda \kappa. \den{M}
  \end{aligned}
\end{equation*}
such that if $\hastype{}\Gamma M\tau$, then
\[
\denglob M\co \alwaystype{\kappa}{(\den\Gamma \to \den \tau)}
\] 
Note that $\denglobal{\sigma}$ is a wellformed type, because $\den\sigma$ is a wellformed
type in context $\sigma\co \FPCTypes$ and $\FPCTypes$ is an inductive
type formed without reference to clocks or guarded recursion, thus
$\kappa$ does not appear in $\FPCTypes$. By a similar argument $\denglobal{M}$
is welltyped.

Define for all $\sigma$ the \emph{delay} operator
$\delayglob{\sigma}\co \denglobal{\sigma} \to \denglobal{\sigma}$ as follows
\begin{equation}
  \delayglob{\sigma} (x) \eqdef \Lambda
  \kappa. \delay{}{\sigma} (x[\kappa])
  \label{def:delay:glob}
\end{equation}
Similarly for $LA$, $\delayglob{LA}(x) \eqdef \Lambda
\kappa. \delay{}{LA} (x[\kappa])$.

%

With these definitions we can lift the adequacy theorem to the global
points. To prove the denotational model
is computationally adequate w.r.t. the standard big-step operational
semantics $\bigstep^n$ we take the global view points of the the
denotational semantics in order to be able to remove the occurrences
of the $\laterbare$ operator. 
 \begin{corollary}[Computational adequacy]
\label{cor:fpc:gl:adequacy}
  If $M\co \fpcunittype$ is a closed term and $n$ is a natural number, then 
  $M \bigstep^{n} \fpcunit$ iff
  $\forall \kappa. \den{M}(*) \propeq \delaybare^n(\eta(\sunit))$.
\end{corollary}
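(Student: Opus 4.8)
The plan is to reduce the statement directly to the intensional adequacy theorem (Theorem~\ref{thm:fpc:adequacy}), rewriting the operational side by means of Lemma~\ref{lem:fpc:bigstep:many:manyk:soundness}. Concretely, I would chain the following logical equivalences:
\[
M \bigstep^{n} \fpcunit
\;\iff\; \many{M}{n}{\fpcunit}
\;\iff\; \forall\kappa.\,\manyk{M}{n}{\fpcunit}
\;\iff\; \forall\kappa.\,\bigl(\den{M}(*) \propeq \delaybare^{n}(\eta(\sunit))\bigr),
\]
where the first equivalence is Lemma~\ref{lem:fpc:bigstep:many:manyk:soundness}(1), the second is Lemma~\ref{lem:fpc:bigstep:many:manyk:soundness}(2), and the third is obtained by applying Theorem~\ref{thm:fpc:adequacy} under the quantifier $\forall\kappa$.

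For the last step I would recall that Theorem~\ref{thm:fpc:adequacy}, being a statement in the context of the implicit clock $\kappa$, supplies a pair of maps witnessing $\manyk{M}{n}{\fpcunit} \iff \bigl(\den{M}(*) \propeq \delaybare^{n}(\eta(\sunit))\bigr)$ in that context. Since $\forall\kappa$ is functorial — more precisely, using the canonical map $\forall\kappa.(A \to B) \to (\forall\kappa.A \to \forall\kappa.B)$ together with the fact that $\forall\kappa$ commutes with conjunction — universally quantifying over $\kappa$ preserves this logical equivalence, yielding $\forall\kappa.\,\manyk{M}{n}{\fpcunit} \iff \forall\kappa.\,\bigl(\den{M}(*) \propeq \delaybare^{n}(\eta(\sunit))\bigr)$. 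No guarded recursion or further clock-specific manipulation is needed. (If one prefers, the right-hand side can additionally be rewritten via the clock extensionality rule~(\ref{eq:forall:eta}) as an equality between the global elements $\denglob{M}(*)$ and $\Lambda\kappa.\delaybare^{n}(\eta(\sunit))$, but this reformulation is not required for the statement as stated.)

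There is no substantial obstacle here: the corollary is a bookkeeping consequence of intensional adequacy, once the ``global'' small-step relation $\many{M}{n}{\fpcunit}$ is recognised — via Lemma~\ref{lem:fpc:bigstep:many:manyk:soundness} — as the $\forall\kappa$-collapse of the guarded relation $\manyk{M}{n}{\fpcunit}$. The only point that deserves a line of care is the passage of Theorem~\ref{thm:fpc:adequacy} under $\forall\kappa$, which is legitimate precisely because logical equivalence is a conjunction of two implications and $\forall\kappa$ preserves both implication and conjunction.
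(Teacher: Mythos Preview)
Your proposal is correct and matches the paper's proof essentially verbatim: the paper also invokes the functoriality of $\forall\kappa$ to pass Theorem~\ref{thm:fpc:adequacy} under the clock quantifier, and then appeals to Lemma~\ref{lem:fpc:bigstep:many:manyk:soundness} to identify $\forall\kappa.\,\manyk{M}{n}{\fpcunit}$ with $M \bigstep^{n} \fpcunit$. The only cosmetic difference is that you split the use of Lemma~\ref{lem:fpc:bigstep:many:manyk:soundness} into its two parts explicitly, whereas the paper cites it once.
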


\begin{proof}
Since $\forall\kappa.(-)$ is functorial, Theorem~\ref{thm:fpc:adequacy}
gives $\forall \kappa. \den{M}(\sunit) \propeq \delaybare^n(\eta(\sunit))$ 
iff $\forall \kappa. \manyk{M}{n}{\fpcunit}$, which by 
Lemma~\ref{lem:fpc:bigstep:many:manyk:soundness}
holds iff $M \bigstep^{n} \fpcunit$.
%
\end{proof}

We have now a semantics that implies the standard operational semantics.
However, we are still not able to prove that if two programs are equal
they are going to be contextually equivalent w.r.t. the input-output
behaviour. To achieve so, we need to lift the explicit step-indexing
as well. 
\subsection{A weak bisimulation relation for the lifting monad}
Before defining the logical relation on the interpretation of types,
we define a relational version of the guarded recursive lifting monad
$L$.  If applied to the identity relation on a type $A$ in which
$\kappa$ does not appear, we obtain a weak bisimulation relation
similar to the one defined by Capretta~\cite{Cap05} for the
coinductive partiality monad.

\begin{definition}
  For a relation $R : A \times B \to \U{}$ define the lifting
  $\liftrel: LA \times LB \to \U{}$ by guarded recursion 
  and case analysis on the elements of $LA$ and $LB$:
  \begin{equation*} 
    \begin{aligned}
      \now{\kappa}(x)\liftrel\now{\kappa} (y) & \eqdef x\;R\;y \\
      \now{\kappa}(x)\liftrel\tick{\kappa}{LB}(y) & \eqdef \Sigma
      n,y'. \tick{\kappa}{LB}(y) \propeq \delay{\kappa}{LB}[n](\now{\kappa}(y'))
      \text{ and } x\;R\;y'\\
      \tick{\kappa}{LA}(x)\liftrel\now{\kappa}(y) & \eqdef \Sigma
      n,x'. \tick{\kappa}{LA}(x) \propeq
      \delay{\kappa}{LA}[n](\now{\kappa}(x')) \text{ and } x'\;R\;y \\
      \tick{\kappa}{LA}(x)\liftrel\tick{\kappa}{LB}(y) & \eqdef
      x\laterliftrel y
    \end{aligned}
  \end{equation*}
  \label{def:fpc:liftrel}
\end{definition}
Intuitively, $LR$ relates two elements if they either both diverge, or both both converge
to elements related in $R$.
For example, $\bot$ as defined in Section~\ref{sec:fpc:den} 
is always related to itself which can be shown by guarded recursion 
as follows. Suppose $\laterbare (\bot \liftrel \bot)$. Since $\bot = 
\tickbare (\purebare (\bot))$, to prove $\bot \liftrel \bot$, we must prove 
$\purebare (\bot) \laterliftrel \purebare (\bot)$. But, this type is equal to 
the assumption $\laterbare (\bot \liftrel \bot)$ 
by (\ref{eq:fpc:laterR:next}).

By the intuition given for $LR$ below, it should be possible to add or remove ticks on either side without breaking relatedness in $LR$. The next lemma shows half of this. 
\begin{lemma}
  If $R : A \times B \to \U{}$, and $x \liftrel y$ then
  $x \liftrel \delay{}{LB}(y)$ and $\delay{}{LA}(x) \liftrel y$. 
  \label{lem:fpc:liftrel:delay:add:sx:add:dx}
\end{lemma}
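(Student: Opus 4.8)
The plan is to prove both parts, $x \liftrel \delay{}{LB}(y)$ and $\delay{}{LA}(x) \liftrel y$, simultaneously by guarded recursion together with a case analysis on $x \co LA$ and $y \co LB$; the two parts are mirror images of each other, so I will only spell out the first. Thus I assume $\laterbare$ applied to the statement of the lemma and distinguish four cases, according to whether $x$ is $\now{}{}(a)$ or $\tick{}{LA}(s)$ and whether $y$ is $\now{}{}(b)$ or $\tick{}{LB}(r)$. If $x = \now{}{}(a)$ and $y = \now{}{}(b)$, then the hypothesis $x \liftrel y$ says $a \; R \; b$; since $\delay{}{LB}(\now{}{}(b)) \judgeq \delay{}{LB}[1](\now{}{}(b))$, the $\now$-$\tick$ clause of Definition~\ref{def:fpc:liftrel} is met with $n \eqdef 1$ and witness $b$. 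If $x = \now{}{}(a)$ and $y = \tick{}{LB}(r)$, the hypothesis provides $n$ and $b$ with $\tick{}{LB}(r) \propeq \delay{}{LB}[n](\now{}{}(b))$ and $a \; R \; b$; applying $\delay{}{LB}$ on both sides gives $\delay{}{LB}(\tick{}{LB}(r)) \propeq \delay{}{LB}[n+1](\now{}{}(b))$, so $n+1$ and $b$ witness $\now{}{}(a) \liftrel \delay{}{LB}(\tick{}{LB}(r))$.

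The case $x = \tick{}{LA}(s)$, $y = \now{}{}(b)$ takes a little more work. The hypothesis yields $n$ and $a'$ with $\tick{}{LA}(s) \propeq \delay{}{LA}[n](\now{}{}(a'))$ and $a' \; R \; b$. Since the two coprojections of $LA \judgeq A + \laterbare LA$ are disjoint, $n \geq 1$; writing $n = n'+1$ and using that $\tick{}{LA}$ is injective together with $\delay{}{LA}[n'+1](\now{}{}(a')) \judgeq \tick{}{LA}(\purebare(\delay{}{LA}[n'](\now{}{}(a'))))$, we get $s \propeq \purebare(\delay{}{LA}[n'](\now{}{}(a')))$. As $\delay{}{LB}(\now{}{}(b)) \judgeq \tick{}{LB}(\purebare(\now{}{}(b)))$, the goal $\tick{}{LA}(s) \liftrel \delay{}{LB}(\now{}{}(b))$ falls under the $\tick$-$\tick$ clause, and unfolding that clause together with the delayed-substitution rules (\ref{eq:fpc:delayed:next:weakening}) and (\ref{eq:fpc:delayed:later:beta}) reduces it to $\later{}{(\delay{}{LA}[n'](\now{}{}(a')) \liftrel \now{}{}(b))}$. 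It therefore suffices to prove $\delay{}{LA}[m](\now{}{}(a')) \liftrel \now{}{}(b)$ for every $m \co \NAT$ and then apply $\purebare$; this follows by induction on $m$, the base case being $a'\;R\;b$ and the step case an instance of the $\tick$-$\now$ clause of Definition~\ref{def:fpc:liftrel} with witness $a'$.

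The remaining case $x = \tick{}{LA}(s)$, $y = \tick{}{LB}(r)$ is the crux. Here the hypothesis is $\later{}[\hrt{x' \gets s, y' \gets r}]{(x' \liftrel y')}$, and, after rewriting $\delay{}{LB}(\tick{}{LB}(r)) \judgeq \tick{}{LB}(\purebare(\tick{}{LB}(r)))$ and unfolding the $\tick$-$\tick$ clause (again using (\ref{eq:fpc:delayed:next:weakening}) and (\ref{eq:fpc:delayed:later:beta})), the goal becomes $\later{}[\hrt{x' \gets s, y' \gets r}]{(x' \liftrel \tick{}{LB}(r))}$. Applying the guarded recursion hypothesis under this delayed substitution turns the hypothesis into $\later{}[\hrt{x' \gets s, y' \gets r}]{(x' \liftrel \delay{}{LB}(y'))}$. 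Finally, rule (\ref{eq:eta}) gives an inhabitant of $\later{}[\hrt{y' \gets r}]{(\purebare(y') \propeq r)}$, hence, after applying $\tick{}{LB}$, of $\later{}[\hrt{y' \gets r}]{(\delay{}{LB}(y') \propeq \tick{}{LB}(r))}$; transporting the previous statement along this identity (under $\laterbare$, and then weakening away $y'$) yields exactly the goal. I expect the main obstacle to be precisely this last case: keeping the delayed substitutions straight, and recognising that (\ref{eq:eta}) supplies exactly the propositional equality $\purebare(y') \propeq r$ needed to replace $\delay{}{LB}(y')$ by $\tick{}{LB}(r)$ one step from now.
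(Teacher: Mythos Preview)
Your proof is correct and follows essentially the same approach as the paper's: guarded recursion together with case analysis on the shapes of $x$ and $y$, and the crucial use of rule~(\ref{eq:eta}) in the $\tick$--$\tick$ case to replace $\delay{}{LB}(y')$ by $\tick{}{LB}(r)$ under a delayed substitution. The only cosmetic differences are that the paper treats the two $x=\now{}{}(a)$ subcases together, and in the $\tick$--$\now$ case works directly with the representation $x=\delay{}{LA}[n](\now{}(w))$ rather than invoking injectivity and disjointness of the coprojections to unpack $s$; your version is if anything slightly more explicit there (and the ``induction on $m$'' you set up is harmless but not actually needed---a case split on $n'=0$ versus $n'\geq 1$ suffices, since the $\tick$--$\now$ clause is witnessed directly by $a'$).
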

\begin{proof}
  Assume $x \liftrel y$. We show $x \liftrel \delay{}{LB}(y)$.  
  The proof is by guarded recursion, hence we
  first assume:
  \begin{equation}
    \later{\kappa}{(\Pi x : LA, y : LB.
      x \liftrel y \Rightarrow x \liftrel  \delay{\kappa}{LB} (y))}.
    \label{lem:fpc:liftrel:delay:dx:gr}
  \end{equation}
  We proceed by case analysis on $x$ and $y$.  If 
  $x \propeq \now{}(x')$, then, since $x \liftrel y$, there exist $n$ and $y'$ 
  such that $y \propeq \delay{\kappa}{LB}[n](\now{\kappa}(y'))$ and $x'\;R\;y'$. 
  So then  $\delay{\kappa}{LB} (y) \propeq \delay{\kappa}{LB}[n+1](\now{\kappa}(y'))$,
  from which it follows that $x \liftrel \delay{}{LB}(y)$.
      
  
  For the case
  where $x \propeq \tick{\kappa}{LA} (x')$ and $y \propeq \now{\kappa} (v)$,
  it suffices to show that
  $\delay{\kappa}{LA}[n](\now{}(w)) \liftrel \now{}(v)$ implies
  $\delay{\kappa}{LA}[n](\now{}(w)) \liftrel \delay{\kappa}{LB}(\now{}(v))$.
  The case of $n\propeq 0$ was proved above. For $n \propeq m+1$ we know that if 
  $\delay{\kappa}{LA}[n](\now{}(w)) \liftrel \now{}(v)$ also 
  $\delay{\kappa}{LA}[m](\now{}(w)) \liftrel \now{}(v)$ holds by
  definition, and this implies  \[\later{}{(\delay{\kappa}{LA}[m](\now{}(w)) \liftrel
    \now{}(v))}\] But this type can be rewritten as follows
\begin{align*} 
 \later{}{(\delay{\kappa}{LA}[m](\now{}(w)) \liftrel \now{}(v))}
 & \judgeq \purebare(\delay{\kappa}{LA}[m](\now{}(w)) \laterliftrel \purebare(\now{}(v))) \\
 & \judgeq \tick{\kappa}{LA}(\purebare(\delay{\kappa}{LA}[m](\now{}(w)))) \liftrel \tick{\kappa}{LB}(\purebare(\now{}(v)))) \\
 & \judgeq \delay{\kappa}{LA}[n] (\now{}(w))\liftrel \delay{\kappa}{LB}(\now{}(v))
\end{align*}
proving the case. 
  
  Finally, the case when $x \propeq \tick{\kappa}{LA} (x')$ and
  $y \propeq \tick{}{LB} (y')$. The assumption in this case is $x' \laterliftrel y'$, 
   which means by (\ref{def:fpc:rel}),
\[\later{}[\hrt{x'' \gets x' , y'' \gets y'}]{x'' \liftrel
    y''}\]
  By the guarded recursion hypothesis (\ref{lem:fpc:liftrel:delay:dx:gr}) we
  get
  \[\later{\kappa}[\hrt{x'' \gets x' , y'' \gets y'}]{x'' \liftrel
    \delay{\kappa}{LB} (y'')}\] which can be rewritten to
  \begin{equation}
    \later{}[\hrt{x'' \gets x' , y'' \gets y'}]{x'' \liftrel
      \tick{}{LB}(\pure{}{(y'')})}
    \label{eq:fpc:liftrel:delay:dx:next}
  \end{equation}
  By (\ref{eq:eta}) there is an inhabitant of the type
  \[\later{\kappa}[\hrt{x'' \gets x' , y'' \gets y'}](\pure{}{(y'')} \propeq y')\]
  and thus (\ref{eq:fpc:liftrel:delay:dx:next}) implies
  $\later{\kappa}[\hrt{x'' \gets x'}]{x'' \liftrel
    \tick{\kappa}{LB}(y')}$, which, by (\ref{eq:fpc:laterR:next}) 
    and since $y \propeq \tick{\kappa}{LB} (y')$ equals  
    $x' \laterliftrel \pure{\kappa}{(y)}$. By definition, this is  
    \[\tick{\kappa}{LA} (x') \liftrel \tick{\kappa}{LB}
  (\pure{\kappa}{(y)})\]
  which since $x \propeq \tick{\kappa}{LA} (x')$ is $x \liftrel \delay{\kappa}{LB} (y)$.
\end{proof}

We can lift this result to $\Lglob$ as follows. Suppose $R : A \times B \to \U{}$
and $\kappa$ not in $A$ or $B$. Define $\liftrelglob : \Lglob A \times \Lglob B \to \U{}$ as
\[x \liftrelglob y \eqdef \forall \kappa. x[\kappa] \liftrel y[\kappa] \]

\begin{lemma}  
  Let $x : \Lglob A$ and $y : \Lglob B$. If $x \liftrelglob y$
  then $x \liftrelglob \delayglob{} (y)$ and $\delayglob{} (x)
  \liftrelglob y$.
\label{lem:fpc:liftrelglob:delay:add:sx:add:dx}
\end{lemma}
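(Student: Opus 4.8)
The plan is to reduce the statement directly to its guarded counterpart, Lemma~\ref{lem:fpc:liftrel:delay:add:sx:add:dx}, by working underneath the clock quantifier. Unfolding the definitions, $x \liftrelglob y$ means $\forall \kappa.\, x[\kappa] \liftrel y[\kappa]$, and from (\ref{def:delay:glob}) together with the computation rule for clock application we have $\delayglob{}(y)[\kappa] \judgeq \delay{}{LB}(y[\kappa])$ for every $\kappa$, and symmetrically $\delayglob{}(x)[\kappa] \judgeq \delay{}{LA}(x[\kappa])$. Hence proving $x \liftrelglob \delayglob{}(y)$ amounts to producing, for each $\kappa$, an element of $x[\kappa] \liftrel \delay{}{LB}(y[\kappa])$, and proving $\delayglob{}(x) \liftrelglob y$ amounts to producing $\delay{}{LA}(x[\kappa]) \liftrel y[\kappa]$ for each $\kappa$.

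So, given a proof $h : \forall \kappa.\, x[\kappa] \liftrel y[\kappa]$, applying it to a clock variable yields $h[\kappa] : x[\kappa] \liftrel y[\kappa]$. Since $A$, $B$ and $R$ do not mention $\kappa$, Lemma~\ref{lem:fpc:liftrel:delay:add:sx:add:dx} — whose statement is parametric in the implicit clock — applies with that clock instantiated to $\kappa$, giving both $x[\kappa] \liftrel \delay{}{LB}(y[\kappa])$ and $\delay{}{LA}(x[\kappa]) \liftrel y[\kappa]$. Abstracting over $\kappa$ with $\Lambda$ then produces $\forall \kappa.\, x[\kappa] \liftrel \delayglob{}(y)[\kappa]$ and $\forall \kappa.\, \delayglob{}(x)[\kappa] \liftrel y[\kappa]$ respectively, i.e., $x \liftrelglob \delayglob{}(y)$ and $\delayglob{}(x) \liftrelglob y$, as required.

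No guarded recursion is needed at this level, since all the real work has already been carried out in Lemma~\ref{lem:fpc:liftrel:delay:add:sx:add:dx}. The only point requiring a little care — and it is entirely routine — is the clock bookkeeping: one must check that the guarded lemma, stated in a context containing the implicit clock, may legitimately be used pointwise inside $\forall \kappa$ (justified precisely because the data $A, B, R$ are clock-independent), and that $\delayglob{}$ commutes with clock application as noted above. I do not expect any genuine obstacle here.
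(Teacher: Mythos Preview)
Your proposal is correct and takes essentially the same approach as the paper, which simply says the result follows directly from Lemma~\ref{lem:fpc:liftrel:delay:add:sx:add:dx}. You have merely spelled out in detail the unfolding of $\liftrelglob$ and $\delayglob{}$ and the clock bookkeeping that the paper leaves implicit.
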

\begin{proof}
Follows directly from Lemma~\ref{lem:fpc:liftrel:delay:add:sx:add:dx}.
\end{proof}
One might expect that $\delay{}{LA}(x) \liftrel \delay{}{LB}(y)$
implies $x \liftrel y$. This is not true, it only implies $\laterbare(x \liftrel y)$.
In the case of $\Lglob$, however, we can use $\force$ to remove 
the $\laterbare$. 
\begin{lemma}
  For all $x : \Lglob A$ and $y : \Lglob B$ and for all $R: A \times B
  \to \U{}$, if $\delayglob{LA} (x) \liftrelglob \delayglob{LB} (y)$ then
  $x \liftrelglob y$.
  \label{lem:fpc:liftrelglob:delay:rm:sx:dx}
\end{lemma}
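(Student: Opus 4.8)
The plan is to massage the hypothesis into the shape $\forall\kappa.\,\laterclock\kappa(\dots)$ and then strip the $\laterbare$ using the $\force$ isomorphism (\ref{gdtt:force}); this is exactly the device that the passage to $\Lglob$ makes available and that is not available for the plain guarded relation $\liftrel$ (which is why Lemmas~\ref{lem:fpc:liftrel:delay:add:sx:add:dx}--\ref{lem:fpc:liftrelglob:delay:add:sx:add:dx} stop short of this cleaner conclusion).

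First I would unfold the hypothesis. By the definitions of $\liftrelglob$ and $\delayglob{}$ together with the clock $\beta$-rule, the assumption $\delayglob{LA}(x) \liftrelglob \delayglob{LB}(y)$ is precisely $\forall\kappa.\,\bigl(\delay{}{LA}(x[\kappa]) \liftrel \delay{}{LB}(y[\kappa])\bigr)$. Recalling that $\delay{}{LA} = \tick{}{LA}\circ\purebare$ and $\delay{}{LB} = \tick{}{LB}\circ\purebare$, and that $\tick{}{L(-)}\circ\purebare$ lands in the right injection of $LA \cong A + \laterbare LA$, the body $\delay{}{LA}(x[\kappa]) \liftrel \delay{}{LB}(y[\kappa])$ is, for each $\kappa$, an instance of the last clause of Definition~\ref{def:fpc:liftrel}, hence judgementally equal to $\purebare(x[\kappa]) \laterliftrel \purebare(y[\kappa])$; by (\ref{eq:fpc:laterR:next}) instantiated at the empty delayed substitution (and the relation $\liftrel$), this in turn equals $\laterbare(x[\kappa] \liftrel y[\kappa])$. (This computation is the same one used to show $\bot \liftrel \bot$ just after Definition~\ref{def:fpc:liftrel}.) So the hypothesis becomes $\forall\kappa.\,\laterclock\kappa(x[\kappa]\liftrel y[\kappa])$.

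Finally, I would apply $\force$. The ambient term context is $x : \Lglob A$, $y : \Lglob B$ and $R : A\times B\to\U{}$, none of which mentions $\kappa$, so the side condition of (\ref{gdtt:force}) is met, just as in the proof of Lemma~\ref{lem:fpc:bigstep:many:manyk:soundness}. Hence $\force$ sends the inhabitant of $\forall\kappa.\,\laterclock\kappa(x[\kappa]\liftrel y[\kappa])$ to one of $\forall\kappa.\,(x[\kappa]\liftrel y[\kappa])$, which is by definition $x \liftrelglob y$.

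I do not expect a genuine obstacle here; the only care needed is bookkeeping: checking that $\delay{}{LA}(x[\kappa])$ really does fall in the ``$\tick$-against-$\tick$'' case of Definition~\ref{def:fpc:liftrel} (it does, since $\tick{}{LA}\circ\purebare$ is a right injection), that the chain of judgemental equalities lines up, and that the clock $\kappa$ is absent from the term context so that $\force$ applies. Conceptually the content is simply that, at the global level, one extra tick on each side can be cancelled outright, whereas guardedly one only recovers the relation \emph{later}.
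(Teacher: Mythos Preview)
Your proposal is correct and follows essentially the same approach as the paper: unfold the global lifted relation and the global delay, recognise the $\tick$-against-$\tick$ case of Definition~\ref{def:fpc:liftrel} to rewrite the body as $\laterclock\kappa(x[\kappa]\liftrel y[\kappa])$ via (\ref{eq:fpc:laterR:next}), and then apply $\force$ to strip the $\laterbare$ under the clock quantifier. The paper's proof is exactly this chain of judgemental equalities followed by an application of $\force$.
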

\begin{proof}
  Assume $\delayglob{LA} (x) \liftrelglob \delayglob{LB} (y)$. We can
  rewrite this type by unfolding definitions and (\ref{eq:fpc:laterR:next}) as follows.
\begin{align*}
 \delayglob{LA} (x) \liftrelglob \delayglob{LB} (y) & \judgeq
   \forall \kappa. (\delayglob{LA} (x))[\kappa] \liftrel
  (\delayglob{LB} (y))[\kappa] \\
  & \judgeq \forall \kappa. (\delay{\kappa}{LA} (x[\kappa])) \liftrel
  (\delay{\kappa}{LB} (y[\kappa])) \\
  & \judgeq \forall \kappa. (\pure\kappa(x[\kappa]) \laterliftrel \pure\kappa(y[\kappa])) \\
  & \judgeq \forall \kappa. \later{\kappa}{(x[\kappa] \liftrel (y[\kappa]))}
\end{align*}
  Using $\force$ this implies $\forall \kappa. (x[\kappa] \liftrel (y[\kappa]))$
  which is equal to 
  $x \liftrelglob y$.
\end{proof}

\begin{lemma}
  For all $x$ of type $\Lglob A$ and $y$ of type $\Lglob B$, if
  $\delayglob{LA} (x) \liftrelglob y$ then $x \liftrelglob y$.
  \label{lem:fpc:liftrelglob:delay:rm:sx}
\end{lemma}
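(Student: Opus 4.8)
The plan is to reduce to the two already-established facts about $\liftrelglob$ by case analysis on $y$, using the isomorphism $\Lglob B \cong B + \Lglob B$ obtained, as in Section~\ref{sec:clock-variables}, from $LB \cong B + \laterclock\kappa LB$ via the isomorphisms (\ref{eq:forall:dist:plus})--(\ref{eq:forall:later}). Writing $\eta^{\glob}(v) \eqdef \Lambda\kappa.\now{\kappa}(v)$ for the preimage of $\inl(v)$, one checks by unwinding those component isomorphisms that the preimage of $\inr(y')$ is precisely $\delayglob{LB}(y')$. Thus $y$ is either of the form $\eta^{\glob}(v)$ for some $v : B$, or of the form $\delayglob{LB}(y')$ for some $y' : \Lglob B$.

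In the case $y \propeq \delayglob{LB}(y')$ we are given $\delayglob{LA}(x) \liftrelglob \delayglob{LB}(y')$, so Lemma~\ref{lem:fpc:liftrelglob:delay:rm:sx:dx} yields $x \liftrelglob y'$, and then Lemma~\ref{lem:fpc:liftrelglob:delay:add:sx:add:dx} yields $x \liftrelglob \delayglob{LB}(y') \propeq y$, as required.

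The interesting case is $y \propeq \eta^{\glob}(v)$. Unfolding $\delayglob{LA}(x) \liftrelglob \eta^{\glob}(v)$ pointwise in $\kappa$, and using that $\delay{\kappa}{LA}(x[\kappa])$ is by definition $\tick{\kappa}{LA}(\purebare(x[\kappa]))$, the $\tick$-versus-$\now$ clause of Definition~\ref{def:fpc:liftrel} gives $\forall\kappa.\ \Sigma n, x'.\ \delay{\kappa}{LA}(x[\kappa]) \propeq \delay{\kappa}{LA}[n](\now{\kappa}(x'))$ and $x'\;R\;v$. Since neither $\N$ nor $A$ mentions $\kappa$, two applications of (\ref{eq:fpc:forallk:sigma:swap}) pull $n$ and $x'$ out of the $\forall\kappa$; splitting off the $\kappa$-independent conjunct $x'\;R\;v$ and applying (\ref{eq:forall:eta}) turns the remaining family of equalities into a single equation $\delayglob{LA}(x) \propeq (\delayglob{LA})^{n}(\eta^{\glob}(x'))$, with $x'\;R\;v$ still in hand. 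Now $\delayglob{LA}$ is injective: if $\delayglob{LA}(a) \propeq \delayglob{LA}(b)$ then $\forall\kappa.\ \purebare(a[\kappa]) \propeq \purebare(b[\kappa])$ by injectivity of $\tick{\kappa}{LA}$ (which is $\inr$ post-composed with the defining isomorphism of $LA$), hence $\forall\kappa.\ \later{\kappa}{(a[\kappa] \propeq b[\kappa])}$ by (\ref{eq:later:id}), hence $a \propeq b$ using $\force$ and (\ref{eq:forall:eta}). Moreover $n \neq 0$, since $\delayglob{LA}(x)$ lies in the right summand of $\Lglob A \cong A + \Lglob A$ whereas $\eta^{\glob}(x') \propeq (\delayglob{LA})^{0}(\eta^{\glob}(x'))$ lies in the left one; so $n \propeq m+1$, and cancelling a single $\delayglob{LA}$ gives $x \propeq (\delayglob{LA})^{m}(\eta^{\glob}(x'))$ with $x'\;R\;v$. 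Finally $x \liftrelglob \eta^{\glob}(v)$ is obtained by unfolding at each $\kappa$: when $m \propeq 0$ we have $x[\kappa] \propeq \now{\kappa}(x')$ and $\now{\kappa}(x') \liftrel \now{\kappa}(v)$ is just $x'\;R\;v$; when $m \geq 1$ the element $x[\kappa] \propeq \delay{\kappa}{LA}[m](\now{\kappa}(x'))$ is a $\tick$, and the $\tick$-versus-$\now$ clause is satisfied with witnesses $n := m$ and $x'' := x'$ together with $x'\;R\;v$.

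The step I expect to be the main obstacle is exactly this ``peel off one delay'' move in the $\eta^{\glob}(v)$ case: it is where the global viewpoint is genuinely needed. One must commute $\forall\kappa$ with the existentials over the clock-free types $\N$ and $A$, use that $\tick{\kappa}{LA}$ is injective so that $\delayglob{LA}$ becomes injective at the global level, and use that $\Lglob A$ really splits as a coproduct $A + \Lglob A$, so that a ``now'' can never be equal to a ``tick'' (this is what rules out $n \propeq 0$). The remaining manipulations are routine bookkeeping with Definition~\ref{def:fpc:liftrel} and with Lemmas~\ref{lem:fpc:liftrelglob:delay:rm:sx:dx} and~\ref{lem:fpc:liftrelglob:delay:add:sx:add:dx}.
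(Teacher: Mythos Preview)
Your argument is correct, but it is far more elaborate than necessary. The paper's proof is two lines and avoids case analysis on $y$ entirely: from $\delayglob{LA}(x)\liftrelglob y$, apply Lemma~\ref{lem:fpc:liftrelglob:delay:add:sx:add:dx} to add a delay on the right, obtaining $\delayglob{LA}(x)\liftrelglob\delayglob{LB}(y)$, and then apply Lemma~\ref{lem:fpc:liftrelglob:delay:rm:sx:dx} to strip the delay from both sides, yielding $x\liftrelglob y$. Your $\delayglob{LB}(y')$ case is essentially this argument specialised, but your $\eta^{\glob}(v)$ case---with the commutation of $\forall\kappa$ past existentials, injectivity of $\delayglob{}$, and the $n\neq 0$ argument---is entirely avoidable. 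The trick you missed is that Lemma~\ref{lem:fpc:liftrelglob:delay:add:sx:add:dx} lets you \emph{manufacture} a delay on the right regardless of the shape of $y$, reducing everything to the symmetric situation handled by Lemma~\ref{lem:fpc:liftrelglob:delay:rm:sx:dx}.
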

\begin{proof}
  Assume $\delayglob{LA} (x) \liftrelglob y$.  Then by applying
  Lemma~\ref{lem:fpc:liftrelglob:delay:add:sx:add:dx} we get
  $\delayglob{LA} (x) \liftrelglob \delayglob{LB} (y)$ and by applying
  Lemma~\ref{lem:fpc:liftrelglob:delay:rm:sx:dx} we get $x \liftrelglob y$.
\end{proof}


With this machinery in place we can now define a relation on the
semantics that relates programs that produce the same value (or both
diverge) and that discards the information about the number of delays
used.

\subsection{Relating terms up to extensional equivalence}
Figure~\ref{fig:fpc:wbisim} defines for each FPC type $\tau$
the logical relation $\wbisim{\kappa}{\tau} : \den\tau \times \den\tau \to
\U{}$. The definition is by guarded recursion, and well-definedness
can be formalised using an argument similar to that used 
for well-definedness of $\tick{}{}$ in equation (\ref{sec:tick:welldef}).
The case of recursive types is well typed by Lemma~\ref{lem:fpc:fix:eq}.
The figure uses the following
lifting of relations to sum types. 
\begin{definition}
  Let $R : A \times B \to \U{}$ and $R' : A' \times B' \to \U{}$.
  Define $(R + R') : (A + A') \times (B + B') \to \U{}$ by case
  analysis as follows (omitting false cases)
  \begin{equation*}\label{def:fpc:liftrelplus}
    \begin{aligned}
      \sinl (x)\;(R+R')\;\sinl(y) & \eqdef x\;R\;y\\
      \sinr (x)\;(R+R')\;\sinr(y) & \eqdef x\;R'\;y
    \end{aligned}
  \end{equation*}
\end{definition}

\begin{figure}[tb]
    \begin{align*}
      x \wbisim{\kappa}{\fpcunittype} y & \eqdef x\;L (\propeq_{\fpcunittype})\;y\\
      x \wbisim{\kappa}{\tau_1 + \tau_2} y & \eqdef x \;
      L(\wbisim{\kappa}{\tau_1}
      + \wbisim{\kappa}{\tau_2})\;y\\
      x \wbisim{\kappa}{\tau_1 \times \tau_2} y & \eqdef \pi_1 (x)
      \wbisim{\kappa}{\tau_1} \pi_1 (y) \text{ and } \pi_2 (x)
      \wbisim{\kappa}{\tau_2} \pi_2
      (y)\\
      f \wbisim{\kappa}{\sigma \to \tau} g & \eqdef \Pi (x, y :
      \den{\sigma}). x \wbisim{\kappa}{\sigma} y \to f ( x )
      \wbisim{\kappa}{\tau} g (
      y )\\
      x \wbisim{\kappa}{\foldedtype} y & \eqdef x \laterWBisim{\kappa}{\unfoldedtype} y
    \end{align*}
    \caption{The logical relation $\wbisim{\kappa}{\tau}$ is a predicate over denotations of
      $\tau$ of type $\den{\tau} \times \den{\tau} \to \U{}$}
 \label{fig:fpc:wbisim}
\end{figure}

The logical relation can be generalised to open terms and the global interpretation
of terms as in the next two definitions.
\begin{definition}\label{def:fpc:wbisim:open}
  For $\Gamma \equiv x_1 : \sigma_1, \cdots, x_n: \sigma_n$ and for
  $f$, $g$ of type $\den{\Gamma} \to \den{\tau}$ define
  \[
  f \wbisim{\kappa}{\Gamma,\tau} g \eqdef \depprod{\vec{x},
  \vec{y}}{\den{\vec{\sigma}}} {\vec{x}\wbisim{\kappa}{\vec{\sigma}} \vec{y} \to f(\vec{x})
  \wbisim{\kappa}{\tau} g(\vec{y})}
  \]
  For $x, y$ of type $\forall\kappa . (\den\Gamma \to \den\tau)$ define
  \[
  x \wbglob{\Gamma, \tau} y \eqdef 
  \forall \kappa. x[\kappa] \wbisim{\kappa}{\Gamma, \tau} y[\kappa]
  \]
\end{definition}

Perhaps surprisingly, this relation is not reflexive. For example the function $f \co L1 \to L1$ defined by
$f(\eta(\sunit)) = \eta(\sunit)$ and $f(\tick{}{L1}(x)) = \bot$ does not satisfy $f \wbisim{\kappa}{1 \to 1} f$. 
On the other hand, the denotation of any term is always related to itself, as the following
proposition states.
\begin{proposition} 
  \label{prop:fpc:wbisim:refl}
  If $\hastype{}{\Gamma}{M}{\sigma}$, then
  $\den{M} \wbisim{}{\Gamma, \sigma} \den{M}$.
\end{proposition}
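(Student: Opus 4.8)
The plan is to prove Proposition~\ref{prop:fpc:wbisim:refl} by a combination of guarded recursion and induction on the typing derivation $\hastype{}{\Gamma}{M}{\sigma}$, in essentially the same style as the Fundamental Lemma (Lemma~\ref{lem:fpc:fundamental}). More precisely, unfolding Definition~\ref{def:fpc:wbisim:open}, the statement $\den{M} \wbisim{}{\Gamma,\sigma}\den{M}$ asks that for all $\vec x, \vec y : \den{\vec\sigma}$ with $\vec x \wbisim{}{\vec\sigma}\vec y$ we have $\den{M}(\vec x)\wbisim{}{\sigma}\den{M}(\vec y)$. So I would state the guarded recursion hypothesis as $\laterbare$ applied to the full statement ``for all $\Gamma \vdash M : \sigma$, $\den M \wbisim{}{\Gamma,\sigma}\den M$'', and then do the induction on the derivation, assuming given related environments $\vec x \wbisim{}{\vec\sigma}\vec y$.

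First I would dispatch the structural cases. Variables: $\den{x_i}(\vec x)=x_i \wbisim{}{\sigma_i} y_i = \den{x_i}(\vec y)$ directly from the assumption on environments. Unit: $\den{\fpcunit}(\vec x)=\now{}(\sunit)=\den{\fpcunit}(\vec y)$, and $\now{}(\sunit)\;L(\propeq_{\fpcunittype})\;\now{}(\sunit)$ holds by the first clause of Definition~\ref{def:fpc:liftrel} since $\sunit\propeq\sunit$. Pairing, projections, $\lambda$-abstraction and application are the usual logical-relations bookkeeping: e.g. for $\lambda x.M$ one extends the environment with a related pair $x_{n+1}\wbisim{}{\tau}y_{n+1}$ and applies the induction hypothesis for $M$; for $MN$ one applies the IH for $M$ (a relatedness statement at function type) to the IH for $N$. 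For $\fpcinl{M}$ and $\fpcinr{M}$ one uses the IH for $M$ together with the first clause of $\liftrel$ and the definition of $(R+R')$. The case $\fpccase{L}{M}{N}$ is the most laborious of the ``routine'' ones: here I would prove, as a sub-lemma, that $\den{\lambda x.\fpccase{x}{M}{N}}(\vec x)\wbisim{}{\tau_1+\tau_2\to\sigma}\den{\lambda x.\fpccase{x}{M}{N}}(\vec y)$ by a nested guarded recursion plus case analysis on the related pair $u\;L(\wbisim{}{\tau_1}+\wbisim{}{\tau_2})\;v$: in the $\now{}(\sinl(-))/\now{}(\sinr(-))$ sub-cases reduce to the IH for $M$ or $N$, and in the $\tick{}{}/\tick{}{}$ sub-case peel off one tick, use the homomorphism property of $\widehat f$ (Lemma~\ref{lem:fpc:case:hom}) to commute $\tick{}{}$ past the interpretation, and close with the nested guarded hypothesis and Lemma~\ref{lem:fpc:r:app}'s analogue for $\liftrel$ (applying related-later functions to related-later arguments, which follows directly from \eqref{eq:fpc:laterR:next} and the definition of $\app$).

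The genuinely interesting cases are $\fpcfold{M}$ and $\fpcunfold{M}$, and these are where the outer guarded recursion is used. For $\fpcfold{M}$: $\den{\fpcfold{M}}(\vec x)=\purebare(\den M(\vec x))$, so I must show $\purebare(\den M(\vec x))\wbisim{}{\foldedtype}\purebare(\den M(\vec y))$, which by the defining clause $x\wbisim{}{\foldedtype}y \eqdef x\laterWBisim{}{\unfoldedtype}y$ and \eqref{eq:fpc:laterR:next} is exactly $\laterbare(\den M(\vec x)\wbisim{}{\unfoldedtype}\den M(\vec y))$; this follows from the $\laterbare$'d guarded hypothesis applied to $M$ (of type $\unfoldedtype$), after noting via Lemma~\ref{lem:fpc:subst:types} that the relation $\wbisim{}{\unfoldedtype}$ on $\den{\unfoldedtype}$ and the ``later'' of the relation at the recursive type match up. For $\fpcunfold{M}$ of type $\unfoldedtype$: $\den{\fpcunfold{M}}(\vec x)=\tick{}{\unfoldedtype}(\den M(\vec x))$, and by the IH for $M$ we have $\den M(\vec x)\wbisim{}{\foldedtype}\den M(\vec y)$, i.e. $\den M(\vec x)\laterWBisim{}{\unfoldedtype}\den M(\vec y)$; I then need a lemma saying $\tick{}{\tau}$ preserves $\wbisim{}{\tau}$ in the sense that $x\laterWBisim{}{\tau}y$ implies $\tick{}{\tau}(x)\wbisim{}{\tau}y'$ whenever... — actually what is needed is simply that $a\laterWBisim{}{\tau}b$ implies $\tick{}{\tau}(a)\wbisim{}{\tau}\tick{}{\tau}(b)$, an auxiliary fact proved by guarded recursion and induction on $\tau$ exactly parallel to Lemma~\ref{lem:fpc:r:tick} (using Lemma~\ref{lem:fpc:liftrel:delay:add:sx:add:dx} at base types to absorb the extra tick). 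I expect \textbf{the main obstacle} to be getting this auxiliary ``$\tick{}{}$ preserves $\wbisim{}{}$'' lemma and its interaction with the $\fpcunfoldbare$ case exactly right — in particular making the types line up across the Lemma~\ref{lem:fpc:fix:eq} isomorphism and handling the base-type clauses ($\fpcunittype$ and $\tau_1+\tau_2$, whose semantics goes through the lifting monad) where $\tick{}{}$ genuinely adds a step that must be absorbed using Lemma~\ref{lem:fpc:liftrel:delay:add:sx:add:dx} — whereas everything else is the standard logical-relations induction already rehearsed for $\Rbare$.
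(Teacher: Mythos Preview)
Your approach is essentially the paper's. The two auxiliary facts you identify are exactly the lemmas the paper isolates beforehand: ``$\tick{}{\tau}$ preserves $\wbisim{}{\tau}$'' is Lemma~\ref{lem:fpc:wbisim:tick}, and the later-application congruence is Lemma~\ref{lem:fpc:wbisim:later:appl}. Two small corrections. First, for $\fpcfold{M}$ the ordinary structural induction hypothesis on the subterm $M$ already gives $\den M(\vec x)\wbisim{}{\unfoldedtype}\den M(\vec y)$ \emph{now}, from which $\laterbare(\dots)$ follows trivially; no outer guarded hypothesis is needed (the paper's proof is a plain induction on $M$, with guarded recursion only nested inside the $\fpccasebare$ case). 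Second, at base types the tick-preservation lemma holds directly by the last clause of Definition~\ref{def:fpc:liftrel}: the $\tick{}{}/\tick{}{}$ clause of $\liftrel$ \emph{is} $\laterliftrel$, so there is nothing to absorb and Lemma~\ref{lem:fpc:liftrel:delay:add:sx:add:dx} is not used there.

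The one genuine gap is in your $\fpccasebare$ analysis. The case split on $u\;L(\wbisim{}{\tau_1}+\wbisim{}{\tau_2})\;v$ has four cases (Definition~\ref{def:fpc:liftrel}), not two: you handle $\now{}/\now{}$ and $\tick{}{}/\tick{}{}$ but omit the mixed cases $\tick{}{}(x')/\now{}(v)$ and its symmetric counterpart. In the mixed case the assumption gives $x = \delay{}{\tau_1+\tau_2}[n](\now{}(w))$ with $w$ related to $v$; after reducing to the IH for $M$ (or $N$) on the $\now{}$ components you obtain a relatedness at $\sigma$, and then you must absorb the $n$ extra delays on one side. This uses the homomorphism property of $\widehat f$ (Lemma~\ref{lem:fpc:case:hom}) to commute the delays outward, together with the fact that $\delay{}{\sigma}$ can be freely added on either side of $\wbisim{}{\sigma}$ for \emph{arbitrary} closed $\sigma$. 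That last fact is a separate lemma in the paper (Lemma~\ref{lem:fpc:wbisim:delay:add:sx:add:dx}), itself proved by guarded recursion and induction on $\sigma$, generalising Lemma~\ref{lem:fpc:liftrel:delay:add:sx:add:dx} from $L$-types to all FPC types; without it the mixed cases do not close.
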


The rest of this section is devoted to the proof of Proposition~\ref{prop:fpc:wbisim:refl}
which is important for the proof of the extensional computational adequacy theorem. 
To prove the proposition we first establish some basic properties of the logical relation.
The first lemma states that delayed 
application $\app$ respects the logical relation. 
\begin{lemma}
  For all $f, g$ of type $\later{\kappa}{\den{\tau \to \sigma}}$ and
  $x,y$ of type $\later{\kappa}{\den{\tau}}$, if
  $f \laterWBisim{\kappa}{\tau \to \sigma} g$ and
  $x \laterWBisim{\kappa}{\tau} y$ then
  $(f \app x) \laterWBisim{\kappa}{\sigma} (g \app y)$.
  \label{lem:fpc:wbisim:later:appl}
\end{lemma}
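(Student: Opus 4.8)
The plan is to mirror the proof of Lemma~\ref{lem:fpc:r:app}: unfold both hypotheses using the definition (\ref{def:fpc:rel}) of the lifting of relations, bring them under a single delayed substitution, and apply the generalised later application (\ref{eq:app:gen:def}) repeatedly. No guarded recursion is needed here.

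First I would unfold $f \laterWBisim{\kappa}{\tau \to \sigma} g$, using (\ref{def:fpc:rel}) and the definition of $\wbisim{\kappa}{\tau \to \sigma}$ from Figure~\ref{fig:fpc:wbisim}, to the type $\later{\kappa}[\hrt{f' \gets f, g' \gets g}]{(\Pi (a, b : \den{\tau}).\, a \wbisim{\kappa}{\tau} b \to f'(a) \wbisim{\kappa}{\sigma} g'(b))}$, and likewise $x \laterWBisim{\kappa}{\tau} y$ to $\later{\kappa}[\hrt{x' \gets x, y' \gets y}]{(x' \wbisim{\kappa}{\tau} y')}$. Since the codomains $\den{\tau \to \sigma}$ and $\den{\tau}$ are mutually independent, weakening of delayed substitutions (the type-level form of (\ref{eq:fpc:delayed:next:weakening})) lets me regard both as living under the common delayed substitution $\xi = \hrt{f' \gets f, g' \gets g, x' \gets x, y' \gets y}$. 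The first hypothesis is then, under $\xi$, a $\Pi$-type, so I can apply it via (\ref{eq:app:gen:def}) successively to $\pure{\kappa}[\xi]{x'}$, to $\pure{\kappa}[\xi]{y'}$ (using the special case of (\ref{eq:app:gen:def}) for arguments of the form $\pure{\kappa}[\xi]{(-)}$), and finally to the second hypothesis; this produces an inhabitant of $\later{\kappa}[\xi]{(f'(x') \wbisim{\kappa}{\sigma} g'(y'))}$.

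To conclude I would check that this type is the goal. Unfolding (\ref{def:fpc:rel}), the goal $(f \app x) \laterWBisim{\kappa}{\sigma} (g \app y)$ is $\later{\kappa}[\hrt{c \gets f \app x, d \gets g \app y}]{(c \wbisim{\kappa}{\sigma} d)}$, and by the definition (\ref{eq:app:def}) of $\app$ together with weakening one has $f \app x \judgeq \pure{\kappa}[\xi]{f'(x')}$ and $g \app y \judgeq \pure{\kappa}[\xi]{g'(y')}$. Then the $\beta$-rule (\ref{eq:fpc:delayed:later:beta}) for delayed substitutions, applied twice, together with commutation of independent delayed substitutions (\ref{eq:subst:comm}), rewrites the goal to exactly $\later{\kappa}[\xi]{(f'(x') \wbisim{\kappa}{\sigma} g'(y'))}$.

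The only delicate point is the bookkeeping with delayed substitutions: one must check that the weakenings, merges and reorderings used above are all licensed by the rules of Figure~\ref{fig:fpc:gdtt:rules} — which is precisely where the mutual independence of the codomains is needed — and that each use of (\ref{eq:app:gen:def}) lands in the type claimed. Everything else is routine, and the overall shape of the argument is identical to that of Lemma~\ref{lem:fpc:r:app}.
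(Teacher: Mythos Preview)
Your proposal is correct and follows essentially the same approach as the paper's own proof: unfold the first hypothesis via (\ref{def:fpc:rel}) and the definition of $\wbisim{\kappa}{\tau\to\sigma}$, apply it to $x$, $y$ and the second hypothesis using the generalised later application (\ref{eq:app:gen:def}), and then rewrite the resulting $\later{\kappa}[\xi]{(f'(x')\wbisim{\kappa}{\sigma}g'(y'))}$ into the form of the goal. The paper's presentation is slightly more compressed --- it invokes (\ref{eq:fpc:laterR:next}) and (\ref{eq:app:def}) directly for the final rewriting rather than spelling out the weakening and $\beta$-steps as you do --- but the underlying argument is the same, and indeed parallels that of Lemma~\ref{lem:fpc:r:app} as you note.
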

\begin{proof}
  Assume $f \laterWBisim{\kappa}{\tau \to \sigma} g$ and
  $x \laterWBisim{\kappa}{\tau} y$.  By Definition~\ref{def:fpc:rel}
  $f \laterWBisim{\kappa}{\tau \to \sigma} g$ is
  $\later{\kappa}[\hrt{f' \gets f, g' \gets g}]{(f'
    \wbisim{\kappa}{\tau \to \sigma} g')}$
  which by unfolding the definition of $\wbisim{\kappa}{\tau\to \sigma}$ is
  \[\later{\kappa}[\hrt{f' \gets f, g' \gets g}]{}(\Pi (x,y :
  \den{\sigma}). x \wbisim{\kappa}{\tau} y \to f' (x)
  \wbisim{\kappa}{\sigma} g'(y))\]
  By applying this to $x$, $y$ and $x \laterWBisim{\kappa}{\tau} y$ using the 
  dependent version of $\app$ defined in (\ref{eq:app:gen:def}) we get
    \[\later{\kappa}[\hrt{f' \gets f, g' \gets g, a \gets x, b \gets
    y}]{(f'(a) \wbisim{\kappa}{\sigma} g'(b))}\]
  By (\ref{eq:fpc:laterR:next}) this is equal to 
  \[\pure{\kappa}[\hrt{f' \gets f, a \gets x}]{(f' (a))}
  \laterWBisim{\kappa}{\sigma} \pure{\kappa}[\hrt{g' \gets g, b \gets y}]{(g'(b))}\] 
  which by rule (\ref{eq:app:def}) is equal to
  \[
  (f \app x) \laterWBisim{\kappa}{\sigma} (g \app y)
  \]
\end{proof}

Next we show that $\tick{}{}$ respects the logical relation.

\begin{lemma} Let $x, y$ of type $\later{\kappa}{\den{\sigma}}$, if
  $(x \laterWBisim{\kappa}{\sigma} y) $ then
  $ \tick{\kappa}{\sigma} (x) \wbisim{\kappa}{\sigma}
  \tick{\kappa}{\sigma}(y)$
  \label{lem:fpc:wbisim:tick}
\end{lemma}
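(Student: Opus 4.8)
The plan is to prove the statement by guarded recursion together with structural induction on $\sigma$, following exactly the pattern of Lemma~\ref{lem:fpc:r:tick}: I assume $\laterbare$ applied to the statement of the lemma (the \emph{guarded hypothesis}), then case on $\sigma$, using the guarded hypothesis only in the recursive-type case and the ordinary induction hypothesis in the $\times$- and $\to$-cases. The cases $\sigma = \fpcunittype$ and $\sigma = \tau_1 + \tau_2$ are immediate: in both, $\tick{\kappa}{\sigma}$ is the lifting-monad algebra map $\tick{\kappa}{L(-)}$, and by the fourth clause of Definition~\ref{def:fpc:liftrel} the goal unfolds to $x \laterliftrel y$, i.e.\ to $\later{\kappa}[\hrt{x' \gets x, y' \gets y}]{(x' \liftrel y')}$, where $R$ is $\propeq_{\fpcunittype}$ respectively $\wbisim{\kappa}{\tau_1} + \wbisim{\kappa}{\tau_2}$; this is precisely the hypothesis $x \laterWBisim{\kappa}{\sigma} y$ after unfolding $\laterWBisim{\kappa}{\sigma}$.

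For $\sigma = \tau_1 \times \tau_2$ I project the hypothesis to obtain, for $i = 1,2$, an inhabitant of $\later{\kappa}[\hrt{x' \gets x, y' \gets y}]{(\pi_i(x') \wbisim{\kappa}{\tau_i} \pi_i(y'))}$; rewriting $\later{\kappa}(\pi_i)(x)$ via (\ref{eq:app:def}), the delayed-substitution $\beta$ and weakening rules (\ref{eq:fpc:delayed:next:beta}), (\ref{eq:fpc:delayed:next:weakening}), and (\ref{eq:fpc:laterR:next}), this says exactly $\later{\kappa}(\pi_i)(x) \laterWBisim{\kappa}{\tau_i} \later{\kappa}(\pi_i)(y)$. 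The induction hypothesis for $\tau_i$ then gives $\tick{\kappa}{\tau_i}(\later{\kappa}(\pi_i)(x)) \wbisim{\kappa}{\tau_i} \tick{\kappa}{\tau_i}(\later{\kappa}(\pi_i)(y))$, and since $\pi_i(\tick{\kappa}{\tau_1 \times \tau_2}(x)) = \tick{\kappa}{\tau_i}(\later{\kappa}(\pi_i)(x))$ by definition, this is what is needed. For $\sigma = \tau_1 \to \tau_2$: given $x \wbisim{\kappa}{\tau_1} y$ I must show $\tick{\kappa}{\tau_1 \to \tau_2}(f)(x) \wbisim{\kappa}{\tau_2} \tick{\kappa}{\tau_1 \to \tau_2}(g)(y)$, i.e.\ $\tick{\kappa}{\tau_2}(f \app \purebare(x)) \wbisim{\kappa}{\tau_2} \tick{\kappa}{\tau_2}(g \app \purebare(y))$. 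Since $x \wbisim{\kappa}{\tau_1} y$ holds now, $\purebare(x) \laterWBisim{\kappa}{\tau_1} \purebare(y)$ by (\ref{eq:fpc:laterR:next}), so Lemma~\ref{lem:fpc:wbisim:later:appl} gives $f \app \purebare(x) \laterWBisim{\kappa}{\tau_2} g \app \purebare(y)$, and the induction hypothesis for $\tau_2$ closes the case.

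The main obstacle is the case $\sigma = \foldedtype$, where $\unfoldedtype$ is not structurally smaller and I must use the guarded hypothesis. Here $\tick{\kappa}{\foldedtype}(x) = \purebare(\tick{\kappa}{\unfoldedtype}) \app x$ and $\wbisim{\kappa}{\foldedtype}$ is $\laterWBisim{\kappa}{\unfoldedtype}$, so the goal is $(\purebare(\tick{\kappa}{\unfoldedtype}) \app x) \laterWBisim{\kappa}{\unfoldedtype} (\purebare(\tick{\kappa}{\unfoldedtype}) \app y)$. By (\ref{eq:app:def}) and the $\beta$ rule (\ref{eq:fpc:delayed:next:beta}), $\purebare(\tick{\kappa}{\unfoldedtype}) \app x$ equals $\pure{\kappa}[\hrt{z \gets x}]{\tick{\kappa}{\unfoldedtype}(z)}$, and likewise for $y$ with a fresh $w$; weakening both delayed substitutions to $\hrt{z \gets x, w \gets y}$ and applying (\ref{eq:fpc:laterR:next}), the goal becomes $\later{\kappa}[\hrt{z \gets x, w \gets y}]{(\tick{\kappa}{\unfoldedtype}(z) \wbisim{\kappa}{\unfoldedtype} \tick{\kappa}{\unfoldedtype}(w))}$. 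Meanwhile the hypothesis $x \laterWBisim{\kappa}{\foldedtype} y$, using $\wbisim{\kappa}{\foldedtype} = \laterWBisim{\kappa}{\unfoldedtype}$ and Lemma~\ref{lem:fpc:fix:eq} so that the bound variables range over $\laterbare\den{\unfoldedtype}$, unfolds to $\later{\kappa}[\hrt{z \gets x, w \gets y}]{(z \laterWBisim{\kappa}{\unfoldedtype} w)}$, and the guarded hypothesis instantiated at $\unfoldedtype$ gives $\later{\kappa}{\bigl(\Pi z, w : \laterbare\den{\unfoldedtype}.\, (z \laterWBisim{\kappa}{\unfoldedtype} w) \to \tick{\kappa}{\unfoldedtype}(z) \wbisim{\kappa}{\unfoldedtype} \tick{\kappa}{\unfoldedtype}(w)\bigr)}$, which (\ref{eq:fpc:delayed:next:weakening}) lets me weaken to the same delayed substitution $\hrt{z \gets x, w \gets y}$. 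Applying the latter to the former under $\later{\kappa}$ yields the goal. The only delicate point throughout is this delayed-substitution bookkeeping in the recursive-type case — tracking which variables occur where, and justifying the merging, weakening and $\beta$ steps; everything else follows the established patterns.
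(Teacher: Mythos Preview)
Your proposal is correct and follows essentially the same approach as the paper's proof: guarded recursion combined with structural induction on $\sigma$, with the unit and sum cases immediate from Definition~\ref{def:fpc:liftrel}, the product and function cases using the structural induction hypothesis (the latter via Lemma~\ref{lem:fpc:wbisim:later:appl}), and the recursive-type case using the guarded hypothesis under the delayed substitution. Your treatment of the delayed-substitution bookkeeping in the $\foldedtype$ case is slightly more explicit than the paper's, but the argument is the same.
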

\begin{proof}
  We prove the statement by guarded recursion. Thus, we assume
  the statement holds ``later'' and we proceed by induction on
  $\sigma$.  All the cases for the types that are interpreted using
  the lifting -- namely the unit type and the sum type -- in
  Definition~\ref{def:fpc:liftrel} hold by definition of the lifting
  relation.

  First the case for the function types: Assume
  $\sigma = \tau_1 \to \tau_2$ and assume $f$ and $g$ of type
  $\later{\kappa}{\den{\tau_1 \to \tau_2}}$ such that
  $f \laterWBisim{\kappa}{\tau_1 \to \tau_2} g$. We must show that if
  $x,y : \den{\tau_1}^\kappa$ and $x \wbisim{\kappa}{\tau_1} y$
  then
  $(\tick{\kappa}{\tau_1 \to \tau_2} (f)) (x) \wbisim{\kappa}{\tau_2}
  (\tick{\kappa}{\tau_1 \to \tau_2} (g)) (y))$.

  So suppose $x \wbisim{\kappa}{\tau_1} y$, then also
  $\later{\kappa}{(x \wbisim{\kappa}{\tau_1} y)}$, which by
  (\ref{eq:fpc:laterR:next}) is equal to
  $\pure{\kappa}{(x)} \laterWBisim{\kappa}{\tau_1}
  \pure{\kappa}{(y)}$.
  By applying Lemma~\ref{lem:fpc:wbisim:later:appl} to this and
  $f \laterWBisim{\kappa}{\tau_1 \to \tau_2} g$ we get
  \[
  f \app(\pure{\kappa}{x}) \laterWBisim{\kappa}{\tau_2} g
  \app \pure{\kappa}{y}
  \]
  By induction hypothesis on $\tau_2$, we get
  $\tick{\kappa}{\tau_2}(f \app(\pure{\kappa}{x}))
  \wbisim{\kappa}{\tau_2} \tick{\kappa}{\tau_2}(g
  \app(\pure{\kappa}{y}))$.
  We conclude by observing that by definition of $\tick{\kappa}{}$,
  $\tick{\kappa}{\tau_1\to\tau_2} (f) (x) = \tick{\kappa}{\tau_2} (f
  \app\pure{\kappa}{(x)})$.

  The case of the product is straightforward. 
    
  For the case of recursive types, assume
  $\phi \laterWBisim{\kappa}{\foldedtype} \psi$. This is type equal to
  \[
  \later{\kappa}[\hrt{x \gets \phi, y \gets \psi}]{(x
    \wbisim{\kappa}{\foldedtype} y)}
  \]
  By definition this is equal to
  \[
  \later{\kappa}[\hrt{x \gets \phi, y \gets \psi}]{(x
    \laterWBisim{\kappa}{\unfoldedtype} y)}
  \]
  By the guarded recursion hypothesis we get
  \[
  \later{\kappa}[\hrt{x \gets \phi, y \gets \psi}]{(
    \tick{\kappa}{\unfoldedtype}(x) \wbisim{\kappa}{\unfoldedtype}
    \tick{\kappa}{\unfoldedtype} (y) )}
  \]
  By (\ref{eq:fpc:laterR:next}) this is equal to
  \[
  (\pure{}[\hrt{x \gets \phi}]{(\tick{\kappa}{\unfoldedtype}(x))})
  \laterWBisim{\kappa}{\unfoldedtype} (\pure{}[\hrt{y \gets
    \psi}]{(\tick{\kappa}{\unfoldedtype}(y))})
  \]
  This equals
  \[
  (\purebare(\tick{\kappa}{\unfoldedtype}) \app \phi
  \laterWBisim{\kappa}{\unfoldedtype}
  (\purebare(\tick{\kappa}{\unfoldedtype}) \app \psi
  \]
  By definition
  $\purebare(\tick{}{\unfoldedtype}) \app \phi$ is equal to
  $\tick{}{\foldedtype} (\phi)$ thus we can derive
  \[
  \tick{\kappa}{\foldedtype}(\phi) \laterWBisim{\kappa}{\unfoldedtype}
  \tick{\kappa}{\foldedtype}(\psi)
  \]
  which by definition of $\wbisim{\kappa}{\foldedtype}$ is
  \[
  \tick{\kappa}{\foldedtype}(\phi) \wbisim{\kappa}{\foldedtype}
  \tick{\kappa}{\foldedtype}(\psi)
  \]
\end{proof}

Next we generalise Lemma~\ref{lem:fpc:liftrel:delay:add:sx:add:dx} to 
hold for $\wbisim{\kappa}{\sigma}$ for all $\sigma$.

\begin{lemma}
  Let $\sigma$ be a closed FPC type and let $x$ and $y$ of type
  $\den{\sigma}$, if $x \wbisim{\kappa}{\sigma} y$ then
  $\delay{\kappa}{\sigma}(x) \wbisim{\kappa}{\sigma} y $ and
  $x \wbisim{\kappa}{\sigma} \delay{\kappa}{\sigma}(y)$.
  \label{lem:fpc:wbisim:delay:add:sx:add:dx}
\end{lemma}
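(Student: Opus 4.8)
The plan is to prove the statement by guarded recursion — necessary because in the clause for recursive types $\unfoldedtype$ need not be structurally smaller than $\foldedtype$ — with an inner induction on the structure of the closed FPC type $\sigma$. So I would assume $\laterbare$ applied to the whole statement of the lemma as a guarded recursion hypothesis, and then argue by cases on $\sigma$, using the inner induction hypothesis only on proper subtypes and the guarded hypothesis in the recursive case. By symmetry it is enough to treat the first conclusion, $\delay{\kappa}{\sigma}(x) \wbisim{\kappa}{\sigma} y$; the second, with the delay on the right, is obtained by the mirror-image argument (and for $\sigma \in \{\fpcunittype,\ \tau_1 + \tau_2\}$ it is already contained in Lemma~\ref{lem:fpc:liftrel:delay:add:sx:add:dx}).

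For $\sigma = \fpcunittype$ and $\sigma = \tau_1 + \tau_2$ the relation $\wbisim{\kappa}{\sigma}$ is by definition (Figure~\ref{fig:fpc:wbisim}) the lifting $\liftrel$ of an appropriate relation $R$ on values, and $\tick{}{\sigma}$ is the monad algebra map $\tick{}{LA}$ (Figure~\ref{fig:later:alg}), so $\delay{\kappa}{\sigma}$ coincides with the delay map $\delay{}{LA}$ of the lifting monad for the appropriate $A$; hence both halves follow at once from Lemma~\ref{lem:fpc:liftrel:delay:add:sx:add:dx}. For $\sigma = \tau_1 \times \tau_2$ I would first use Figure~\ref{fig:later:alg} and the rule (\ref{eq:fpc:next:laterappl}) to compute $\delay{\kappa}{\tau_1 \times \tau_2}(x) \judgeq \gttpair{\delay{\kappa}{\tau_1}(\pi_1(x))}{\delay{\kappa}{\tau_2}(\pi_2(x))}$, and then conclude componentwise from the inner induction hypotheses for $\tau_1$ and $\tau_2$. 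For $\sigma = \tau_1 \to \tau_2$ the analogous computation gives $\delay{\kappa}{\tau_1 \to \tau_2}(f) \judgeq \lambda z.\, \delay{\kappa}{\tau_2}(f(z))$; then, given $x \wbisim{\kappa}{\tau_1} y$, I would get $f(x) \wbisim{\kappa}{\tau_2} g(y)$ from $f \wbisim{\kappa}{\tau_1 \to \tau_2} g$ and apply the inner induction hypothesis for $\tau_2$ to obtain $\delay{\kappa}{\tau_2}(f(x)) \wbisim{\kappa}{\tau_2} g(y)$, which is exactly what is required.

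The case $\sigma = \foldedtype$ is where the real work is, and I expect it to be the main obstacle. Here $\den{\foldedtype} \propeq \laterbare\den{\unfoldedtype}$ by Lemma~\ref{lem:fpc:fix:eq}, and Figure~\ref{fig:later:alg} together with (\ref{eq:fpc:next:laterappl}) yields $\delay{\kappa}{\foldedtype}(x) \propeq \purebare(\tick{}{\unfoldedtype}(x))$. Unfolding $\wbisim{\kappa}{\foldedtype}$ as $\laterWBisim{\kappa}{\unfoldedtype}$ via (\ref{def:fpc:rel}) and applying the $\beta$-rule for delayed substitutions (\ref{eq:fpc:delayed:later:beta}), the goal $\delay{\kappa}{\foldedtype}(x) \wbisim{\kappa}{\foldedtype} y$ reduces to producing an inhabitant of $\later{}[\hrt{b \gets y}]{(\tick{}{\unfoldedtype}(x) \wbisim{\kappa}{\unfoldedtype} b)}$, whereas the hypothesis $x \wbisim{\kappa}{\foldedtype} y$ unfolds to $\later{}[\hrt{a \gets x,\, b \gets y}]{(a \wbisim{\kappa}{\unfoldedtype} b)}$. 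I would then apply the guarded recursion hypothesis at the closed type $\unfoldedtype$ underneath the delayed substitution $\hrt{a \gets x,\, b \gets y}$, turning this into $\later{}[\hrt{a \gets x,\, b \gets y}]{(\delay{\kappa}{\unfoldedtype}(a) \wbisim{\kappa}{\unfoldedtype} b)}$, that is $\later{}[\hrt{a \gets x,\, b \gets y}]{(\tick{}{\unfoldedtype}(\purebare(a)) \wbisim{\kappa}{\unfoldedtype} b)}$. Finally the $\eta$-type equality (\ref{eq:eta}), which provides a term of $\later{}[\hrt{a \gets x}]{(\purebare(a) \propeq x)}$, lets me replace $\purebare(a)$ by $x$ under this delayed substitution; the variable $a$ is then unused and is removed by weakening, leaving precisely the reduced goal. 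The delicate part throughout this case is the bookkeeping with delayed substitutions — in particular recognising that $\tick{}{\unfoldedtype}(\purebare(a))$ read under $\hrt{a \gets x}$ coincides with $\tick{}{\unfoldedtype}(x)$ — while everything else reduces to unwinding the definitions of $\tick{}{}$ and $\delay{}{}$ on the various type formers.
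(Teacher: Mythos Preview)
Your proposal is correct and follows essentially the same approach as the paper: guarded recursion with an inner structural induction on $\sigma$, dispatching the $\fpcunittype$ and $\tau_1+\tau_2$ cases by Lemma~\ref{lem:fpc:liftrel:delay:add:sx:add:dx}, the product and function cases by commuting $\delay{}{}$ with the projections and with application, and the recursive-type case by unfolding $\wbisim{\kappa}{\foldedtype}$ to $\laterWBisim{\kappa}{\unfoldedtype}$, applying the guarded hypothesis under the delayed substitution, and using (\ref{eq:eta}) together with weakening to eliminate the bound variable. The only cosmetic difference is that the paper carries out the recursive-type calculation for the conclusion $x \wbisim{\kappa}{\foldedtype} \delay{}{\foldedtype}(y)$ rather than $\delay{}{\foldedtype}(x) \wbisim{\kappa}{\foldedtype} y$, but as you note these are mirror images of one another.
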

\begin{proof}
  The proof is by guarded recursion and then by induction on the type
  $\sigma$.  Thus, assume this lemma holds ``later'', and proceed 
  by induction on $\sigma$. The cases of the unit type and coproduct 
  follow from Lemma~\ref{lem:fpc:liftrel:delay:add:sx:add:dx}
  and the case of products follows by induction from the fact that 
  $\delay{\kappa}{\tau_i}(\pi_i(x)) = \pi_i(\delay{\kappa}{\tau_1\times \tau_2}(x))$,
  for $i = 1,2$. The case of function types follows from the fact that 
  $\delay{}{\sigma \to \tau}(f)(x) = \delay{}{\tau}(f(x))$.

  For the case of recursive types assume
  $x \wbisim{\kappa}{\foldedtype} y$. Note that
  \begin{align*}
   x \wbisim{\kappa}{\foldedtype} y 
   & \judgeq x \laterWBisim{\kappa}{\unfoldedtype} y \\
   & \judgeq \later{}[\hrt{x' \gets x, y' \gets y}]{x' \wbisim{\kappa}{\unfoldedtype} y'}
  \end{align*}
  Using the dependent version of $\app$ as defined in (\ref{eq:app:gen:def})
  we can apply the guarded recursion assumption to conclude 
  $\later{}[\hrt{x' \gets x, y' \gets y}]{x'
    \wbisim{\kappa}{\unfoldedtype} \delay{}{\unfoldedtype}(y')}$.
  Note that the delay operator is the composition
  $\theta \circ \purebare$, thus $y'$ appears under $\purebare$. We
  can thus employ (\ref{eq:eta}) to derive that 
  $\later{}[\hrt{x' \gets x}]{x' \wbisim{\kappa}{\unfoldedtype}
    \tick{}{\unfoldedtype}(y)}$. From here we conclude by a simple computation:
  \begin{align*}
    \later{}[\hrt{x' \gets x}]{x' \wbisim{\kappa}{\unfoldedtype} \tick{}{\unfoldedtype}(y)} 
    & \judgeq x \laterWBisim{\kappa}{\unfoldedtype} \purebare(
  \tick{}{\unfoldedtype}(y)) \\
    & \judgeq x \laterWBisim{\kappa}{\unfoldedtype} \purebare (\tick{}{\unfoldedtype}) \app \purebare (y) \\
    & \judgeq x \laterWBisim{\kappa}{\unfoldedtype} \tick{}{\foldedtype} (\purebare (y)) \\
    & \judgeq x \wbisim{\kappa}{\foldedtype} \delay{}{\foldedtype}(y) 
  \end{align*}
%
%
\end{proof}

\begin{lemma}
  Let $\sigma$ be a closed FPC type and let $x, y$ of type
  $\denglob{\sigma}$. If $x \wbglob{\sigma} y$ then
  $x \wbglob{\sigma} \delayglob{\sigma} (y)$ and
  $\delayglob{\sigma} (x) \wbglob{\sigma} y$
  \label{lem:fpc:wbglob:delay:add:sx:add:dx}
\end{lemma}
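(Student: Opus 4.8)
The plan is to reduce the statement to its single-clock counterpart, Lemma~\ref{lem:fpc:wbisim:delay:add:sx:add:dx}, by unfolding the definitions of the global relation and the global delay operator and pushing the quantification over clocks inward. Recall from Definition~\ref{def:fpc:wbisim:open} that $x \wbglob{\sigma} y$ unfolds to $\forall\kappa.\, x[\kappa] \wbisim{\kappa}{\sigma} y[\kappa]$, and from (\ref{def:delay:glob}) that $\delayglob{\sigma}(y)[\kappa] \judgeq \delay{}{\sigma}(y[\kappa])$ for every $\kappa$ (and likewise for $x$). So the two conclusions $x \wbglob{\sigma} \delayglob{\sigma}(y)$ and $\delayglob{\sigma}(x) \wbglob{\sigma} y$ are, after unfolding, exactly $\forall\kappa.\, x[\kappa] \wbisim{\kappa}{\sigma} \delay{}{\sigma}(y[\kappa])$ and $\forall\kappa.\, \delay{}{\sigma}(x[\kappa]) \wbisim{\kappa}{\sigma} y[\kappa]$.

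First I would fix an arbitrary clock $\kappa$ and instantiate the hypothesis $x \wbglob{\sigma} y$ at $\kappa$, obtaining $x[\kappa] \wbisim{\kappa}{\sigma} y[\kappa]$. Then I would apply Lemma~\ref{lem:fpc:wbisim:delay:add:sx:add:dx} at type $\sigma$ to this, which yields both $\delay{}{\sigma}(x[\kappa]) \wbisim{\kappa}{\sigma} y[\kappa]$ and $x[\kappa] \wbisim{\kappa}{\sigma} \delay{}{\sigma}(y[\kappa])$. Rewriting the delayed terms using (\ref{def:delay:glob}) and then abstracting over $\kappa$ again gives the two desired global statements. This is the same pattern by which Lemma~\ref{lem:fpc:liftrelglob:delay:add:sx:add:dx} was derived directly from Lemma~\ref{lem:fpc:liftrel:delay:add:sx:add:dx}.

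I do not expect a genuine obstacle here; the argument is essentially bookkeeping. The one point that needs a little care is that the relation $\wbisim{\kappa}{}$, the operator $\delay{}{\sigma}$, and the statement of Lemma~\ref{lem:fpc:wbisim:delay:add:sx:add:dx} all live relative to the implicit clock, so when one works underneath $\forall\kappa$ one must be explicit about which clock is in scope — which is exactly why the conclusion had to be phrased on the global interpretation $\denglob{\sigma}$ in the first place. No clock irrelevance or $\force$ is needed, since the clock $\kappa$ genuinely occurs on both sides.
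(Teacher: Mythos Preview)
Your proposal is correct and is exactly the approach the paper takes: the paper's proof is the single line ``Direct from Lemma~\ref{lem:fpc:wbisim:delay:add:sx:add:dx}'', and what you have written is precisely the unfolding of that remark. Your observation that this mirrors the derivation of Lemma~\ref{lem:fpc:liftrelglob:delay:add:sx:add:dx} from Lemma~\ref{lem:fpc:liftrel:delay:add:sx:add:dx} is spot on.
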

\begin{proof}
  Direct from Lemma~\ref{lem:fpc:wbisim:delay:add:sx:add:dx}.
\end{proof}

\begin{proofof}{Proposition~\ref{prop:fpc:wbisim:refl}}
The proof is by induction on $M$ and we just show the interesting cases.
In all cases we will assume $\Gamma \equiv x_1: \sigma_1 .., x_n: \sigma_n $ and
that we are given $\vec{x}$ and $\vec{y}$ such that 
$\vec{x} \wbisim{\kappa}{\vec{\sigma}} \vec{y}$.   
  
  For case expressions, to prove that
  \[
  \den{\fpccase{L}{M}{N}}(\vec{x}) \wbisim{\kappa}{\tau}
  \den{\fpccase{L}{M}{N}}(\vec{y})
  \]
  it suffices to prove that
  \begin{equation}\den{\lambda x.\fpccase{x}{M}{N}}(\vec{x})
    \wbisim{\kappa}{\sigma
      \to \tau} \den{\lambda x.\fpccase{x}{M}{N}}(\vec{y})
    \label{lem:fpc:wbisim:refl:gr}
  \end{equation}
  Thus that for all $x$, $y$ s.t.
  $x \wbisim{\kappa}{\tau_1 + \tau_2} y$
  \[ \den{\lambda x.\fpccase{x}{M}{N}}(\vec{x})(x)
  \wbisim{\kappa}{\tau} \den{\lambda
    x.\fpccase{x}{M}{N}}(\vec{y})(y)\]
  holds. We prove (\ref{lem:fpc:wbisim:refl:gr}) by guarded
  recursion. Thus, we assume the statement holds ``later'' and we
  proceed by case analysis on $x$ and $y$.  When $x$ is
  $\now{\kappa}{}(x')$ and $y$ is $\now{\kappa}{}(y')$ 
  either $x'$ and $y'$ are both in the left component or they
  are both in the right component of the sum. The former case 
  $x' = \sinl(x'')$ and $y' = \sinl(y'')$
  reduces to 
  \[ \den{M}(\vec{x}, x'')
  \wbisim{\kappa}{\tau} \den{M}(\vec{y},y'')\]
  which follows from the induction hypothesis, and the latter case is similar. 
  
  Now consider the case of $x = \tick{\kappa}{\tau_1 + \tau_2} (x')$ and 
  $y=\now{\kappa}{}(v)$. Since by assumption
  $x \wbisim{\kappa}{\tau_1 + \tau_2} y$ there exists $n$ and $w$ such
  that $x = \delay{\kappa}{\tau_1 + \tau_2}[n] (\now{\kappa}{} (w))$
  and $w \wbisim{}{\tau_1 + \tau_2} v$. As before, $v$ and $w$ must be 
  in the same component of the coproduct, so assume
  $w = \sinl(w')$ and $v = \sinl(v')$ such that $w' \wbisim{}{\tau_1} v'$. 
  By induction hypothesis we know that
  $\den{M}(\vec{x}) \wbisim{\kappa}{\tau_1 \to \tau}
  \den{M}(\vec{y})$
  and thus that
  $\den{M}(\vec{x})(w') \wbisim{\kappa}{\tau} \den{M}(\vec{y})(v')$.
  By Lemma~\ref{lem:fpc:wbisim:delay:add:sx:add:dx} this implies
  $\delay{\kappa}{\tau}[n] (\den{M}(\vec{x})(w'))
  \wbisim{\kappa}{\tau} \den{M}(\vec{y})(v')$.
  Since
  \[\den{M}(\vec{x})(w') = \den{\lambda x.\fpccase{x}{M}{N}}(\vec{x})(\now{}{} (w)),\] 
  by Lemma~\ref{lem:fpc:case:hom} we get 
\begin{align*}
 \delay{\kappa}{\tau}[n] (\den{M}(\vec{x})(w')) 
 & = \delay{\kappa}{\tau}[n] (\den{\lambda x.\fpccase{x}{M}{N}})(\vec{x})(\now{}{} (w))) \\
 & = \den{\lambda x.\fpccase{x}{M}{N}}(\vec{x})(\delay{\kappa}{\tau_1 +
    \tau_2}[n] (\now{}{} (w)))
\end{align*}
and thus we conclude
  \begin{align*}
     \den{\lambda x.\fpccase{x}{M}{N}}(\vec{x})(\delay{\kappa}{\tau_1 + \tau_2}[n]
      (\now{}{} (w))) \wbisim{\kappa}{\tau_1 + \tau_2} 
    \den{\lambda x.\fpccase{x}{M}{N}})(\vec{x})(\now{}{} (v))
  \end{align*}
  which is what we wanted to show. 
  
  The last case is when $x$ is
  $\tick{\kappa}{\tau_1 + \tau_2} (x')$ and $y$ is
  $\tick{\kappa}{\tau_1 + \tau_2} (y')$. By guarded recursion we know
  that
  \[ \later{\kappa}{}(\den{\lambda x.\fpccase{x}{M}{N}}(\vec{x})
  \wbisim{\kappa}{\tau_1 + \tau_2 \to \tau} (\den{\lambda
    x.\fpccase{x}{M}{N}})(\vec{y}))\]
  By (\ref{eq:fpc:laterR:next}) we get
  \begin{align*}
     \purebare (\den{\lambda x.\fpccase{x}{M}{N}}(\vec{x}))
      \laterWBisim{\kappa}{\tau_1 + \tau_2 \to \tau} 
    \purebare(\den{\lambda x.\fpccase{x}{M}{N}}(\vec{y}))
  \end{align*}
  Since the assumption 
  $\tick{\kappa}{\tau_1 + \tau_2} (x') \wbisim{}{\tau_1 + \tau_2}
  \tick{\kappa}{\tau_1 + \tau_2} (y')$,
  means that 
  $x' \laterWBisim{}{\tau_1 + \tau_2} y'$, by
  Lemma~\ref{lem:fpc:wbisim:later:appl} this implies
  \begin{align*}
    \purebare (\den{\lambda x.\fpccase{x}{M}{N}})(\vec{x}) \app x'
      \laterWBisim{\kappa}{\tau} 
    \purebare(\den{\lambda x.\fpccase{x}{M}{N}})(\vec{y}) \app y'
  \end{align*}
  By Lemma~\ref{lem:fpc:wbisim:tick} this implies 
  \begin{align*}
    \tick{}{\tau}(\purebare (\den{\lambda
      x.\fpccase{x}{M}{N}})(\vec{x}) \app x')
      \wbisim{\kappa}{\tau} 
    \tick{}{\tau}(\purebare(\den{\lambda
      x.\fpccase{x}{M}{N}})(\vec{y}) \app y')
  \end{align*}
  By Lemma~\ref{lem:fpc:case:hom} we conclude that
  \begin{align*}
    \den{\lambda
      x.\fpccase{x}{M}{N}}(\vec{x}) (\tick{}{\tau_1 + \tau_2} (x'))
      \wbisim{\kappa}{\tau} 
     \den{\lambda
      x.\fpccase{x}{M}{N}})(\vec{y}) (\tick{}{\tau_1 + \tau_2} (y'))
  \end{align*}
  proving the case.
 
  Finally we prove the two cases for the recursive types. We first consider
  the case for $\fpcunfold{M}$ of type $\unfoldedtype$.  We have to
  show that
  \[
  \den{\fpcunfold{M}}(\vec{x}) \wbisim{\kappa}{\unfoldedtype}
  \den{\fpcunfold{M}}(\vec{y})
  \]
  By induction hypothesis we know that
  $\den{M}(\vec{x}) \wbisim{\kappa}{\foldedtype} \den{M}(\vec{y})$
  which by definition of $\wbisim{\kappa}{\foldedtype}$ is
  $\den{M}(\vec{x}) \laterWBisim{\kappa}{\unfoldedtype}
  \den{M}(\vec{y})$. By Lemma~\ref{lem:fpc:wbisim:tick} we get
  \[
  \tick{\kappa}{\unfoldedtype}(\den{M}(\vec{x}))
  \wbisim{\kappa}{\unfoldedtype}
  \tick{\kappa}{\unfoldedtype}(\den{M}(\vec{y}))
  \]
  and by definition of the interpretation function this is what we
  wanted. 
  
  Now the case for $\fpcfold{M}$ of type $\foldedtype$.  By
  induction hypothesis we know that
  $\den{M}(\vec{x}) \wbisim{\kappa}{\unfoldedtype} \den{M}(\vec{y})$
  which implies 
  $\laterbare (\den{M}(\vec{x}) \wbisim{\kappa}{\unfoldedtype}
  \den{M}(\vec{y}))$
  which is equal to
  \[\purebare (\den{M}(\vec{x})) \laterWBisim{\kappa}{\unfoldedtype}
  \purebare(\den{M}(\vec{y})).\]
  By definition of $\wbisim{\kappa}{\foldedtype}$ this is precisely
  $\purebare (\den{M}(\vec{x})) \wbisim{\kappa}{\foldedtype}
  \purebare(\den{M}(\vec{y}))$
  which by definition of the interpretation function is
  \[\den{\fpcfold{M}}(\vec{x}) \wbisim{\kappa}{\foldedtype}
  \den{\fpcfold{M}}(\vec{y})\]
\end{proofof}

\subsection{Extensional computational adequacy}

Contextual equivalence of FPC is defined in the standard way by
observing convergence at unit type. We first define the language of
contexts. These are FPC programs with a hole $[-]$ defined inductively
as in the next definition.
\begin{definition}[Contexts]
  \begin{align*}
    \ctx  & := [-] \mid \lambda x. \ctx \mid \ctx N \mid M \ctx \\
          & \mid \fpcinl{\ctx} \mid \fpcinr{\ctx} \mid \fpcpair{\ctx}{M}
            \mid \fpcpair{M}{\ctx} \mid \fpcfst{\ctx} \mid \fpcsnd{\ctx} \\
          & \mid \fpccase{\ctx}{M}{N} \\ 
          & \mid \fpccase{L}{\ctx}{N} \mid \fpccase{L}{M}{\ctx}\\
          & \mid \fpcunfold{\ctx} \mid \fpcfold{\ctx} 
  \end{align*}
\end{definition}

Intuitively, a context is a term that takes a term and returns a new term. 

We define the ``fill hole'' function $ \cdot [\cdot]: \ctx \times \FPCOTerms \to \FPCOTerms$ 
by induction on the context in the standard way. Note that 
this may capture free variables in the term being substituted.

We say that a context $C$ has type $(\Gamma, \sigma) \to (\Delta, \tau)$ if 
$\hastype{}{\Delta}{C[M]}{\tau}$ whenever $\hastype{}{\Gamma}{M}{\sigma}$. This can be captured by
a typing relation on contexts as defined in Figure~\ref{fig:fpc:context:typing}.
Next we define contextual equivalence using the big-step semantics
$\bigstep$. This states that two program are contextually equivalent
if no context can distinguish them. Using $\bigstep$ (instead of
$\bigstep^k$) ensures that we capture the standard notion of
contextual equivalence, thus that two programs producing the same
value will be equivalent no matter how many steps they take to
terminate. 




\begin{figure}[tb]
  \begin{mathpar}
    \inferrule* { \; }{\ctxhastype{-}{\Gamma}{\tau}{\Gamma}{\tau} }
    \and \inferrule* {\ctxhastype{C}{\Gamma}{\tau}{(\Delta, x:
        \sigma')}{\sigma} } {\ctxhastype{(\lambda
        x.C)}{\Gamma}{\tau}{\Delta}{\sigma' \to \sigma} } \and
    \inferrule* { \ctxhastype{C}{\Gamma}{\tau}{\Delta}{\tau' \to
        \sigma} \qquad \hastype{}{\Delta}{N}{\tau'} }{ \ctxhastype{C
        N}{\Gamma}{\tau}{\Delta}{\sigma} } \and \inferrule* {
      \ctxhastype{C}{\Gamma}{\sigma}{\Delta}{\tau'} \qquad
      \hastype{}{\Delta}{M}{\tau' \to \sigma} }{ \ctxhastype{M
        C}{\Gamma}{\sigma}{\Delta}{\sigma}} \and \inferrule*
    {\ctxhastype{C}{\Gamma}{\sigma}{\Delta}{\foldedtype}}{
      \ctxhastype{\fpcunfold{C}}{\Gamma}{\sigma}{\Delta}{\unfoldedtype}
    } \and \inferrule*
    {\ctxhastype{C}{\Gamma}{\sigma}{\Delta}{\unfoldedtype}}{
      \ctxhastype{\fpcfold{C}}{\Gamma}{\sigma}{\Delta}{\foldedtype} }
    \and \inferrule* {\ctxhastype{C}{\Gamma}{\tau}{\Delta}{\tau_1
        \times
        \tau_2}}{\ctxhastype{\fpcfst{C}}{\Gamma}{\tau}{\Delta}{\tau_1}}
    \qquad \inferrule* {\ctxhastype{C}{\Gamma}{\tau}{\Delta}{\tau_1
        \times
        \tau_2}}{\ctxhastype{\fpcsnd{C}}{\Gamma}{\tau}{\Delta}{\tau_2}}
    \and \inferrule* {\ctxhastype{C}{\Gamma}{\tau}{\Delta}{\tau_1}
      \qquad
      \hastype{}{\Delta}{N}{\tau_2}}{\ctxhastype{\fpcpair{C}{N}}{\Gamma}{\tau}{\Delta}{{\tau_1}\times{\tau_2}}}
      \and
    \inferrule* {\ctxhastype{C}{\Gamma}{\tau}{\Delta}{\tau_2} \qquad
      \hastype{}{\Delta}{M}{\tau_1}}{
      \ctxhastype{\fpcpair{M}{C}}{\Gamma}{\tau}{\Delta}{{\tau_1}\times{\tau_2}}}
    \and \inferrule* {\ctxhastype{C}{\Gamma}{\tau}{\Delta}{\tau_1 +
        \tau_2} \qquad \hastype{}{\Delta, x_1 : \tau_1}{M}{\sigma}
      \qquad \hastype{}{\Delta, x_2 :
        \tau_2}{N}{\sigma}}{\ctxhastype{\fpccase{C}{M}{N}}{\Gamma}{\tau}{\Delta}{\sigma}}
    \and \inferrule* { \hastype{}{\Delta}{L}{\tau_1 + \tau_2} \qquad
      \ctxhastype{C}{\Gamma}{\tau}{(\Delta, x_1 : \tau_1)}{\sigma}
      \qquad \hastype{}{\Delta, x_2 :
        \tau_2}{N}{\sigma}}{\ctxhastype{\fpccase{L}{C}{N}}{\Gamma}{\tau}{\Delta}{\sigma}}
    \and \inferrule*{ \hastype{}{\Delta}{L}{\tau_1 + \tau_2} \qquad
      \hastype{}{\Delta, x_1 : \tau_1}{M}{\sigma}\qquad
      \ctxhastype{C}{\Gamma}{\tau}{(\Delta, x_2 : \tau_2)}{\sigma}
    }{\ctxhastype{\fpccase{L}{M}{C}}{\Gamma}{\tau}{\Delta}{\sigma}}
    \and \inferrule*{\ctxhastype{C}{\Gamma}{\tau}{\Delta}{\tau_1 }}{
      \ctxhastype{\fpcinl{C}}{\Gamma}{\tau}{\Delta}{\tau_1 + \tau_2}}
    \qquad \inferrule*{\ctxhastype{C}{\Gamma}{\tau}{\Delta}{\tau_2 }}{
      \ctxhastype{\fpcinr{C}}{\Gamma}{\tau}{\Delta}{\tau_1 + \tau_2}}
  \end{mathpar}
  \caption{Typing judgment for contexts}
  \label{fig:fpc:context:typing}
\end{figure}

\begin{definition}
  \label{def:fpc:ctxeq}
  Let $\hastype{}\Gamma {M,N}\tau$. We say that $M,N$ are contextually
  equivalent, written $M \ctxeq N$, if for all contexts $C$ of type
  $(\Gamma,\tau) \to (-, \fpcunittype)$
  \[ C[M] \bigstep \fpcunit \iff C[N] \bigstep \fpcunit \]
\end{definition}

\newcommand{\wbisimbare}{\wbisim{}{}}
Finally we can state the main theorem of this section. Using the
global view of the logical relation $\wbisimbare$ we can prove if the
denotations of two programs are related then they are contextual
equivalent in the extensional sense. 
\begin{theorem}[Extensional Computational Adequacy]  \label{thm:ext:adequacy}
  If $\Gamma \vdash M,N : \tau$ and
  $\denglob{M} \wbglob{\Gamma, \tau} \denglob{N}$ then $M \ctxeq N$. 
\end{theorem}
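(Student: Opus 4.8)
The plan is to reduce the statement to Proposition~\ref{prop:fpc:wbisim:refl} by a congruence argument for $\wbisimbare$, and then to transfer termination at type $\fpcunittype$ across the resulting weak bisimulation using global computational adequacy (Corollary~\ref{cor:fpc:gl:adequacy}).

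The first and main step is a \emph{congruence lemma}: for every context $C$ of type $(\Gamma,\tau)\to(\Delta,\sigma)$ (typed as in Figure~\ref{fig:fpc:context:typing}), if $\denglob{M}\wbglob{\Gamma,\tau}\denglob{N}$ then $\denglob{C[M]}\wbglob{\Delta,\sigma}\denglob{C[N]}$. I would prove this by induction on the derivation of the context typing judgement. The base case $C=[-]$ is the hypothesis. In each inductive case exactly one immediate subterm is the sub-context $C'$ and the rest are ordinary terms; one applies the induction hypothesis to $C'$, Proposition~\ref{prop:fpc:wbisim:refl} to each ordinary subterm, and then the closure of $\wbisimbare$ under the corresponding denotational operation. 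All these closure properties already appear inside the proof of Proposition~\ref{prop:fpc:wbisim:refl}, possibly with the two sides allowed to differ: the definition of $\wbisim{\kappa}{\tau\to\sigma}$ for application and for $\lambda$-abstraction, the definition of $\wbisim{\kappa}{\tau_1\times\tau_2}$ for pairing and projections, the definition of $\wbisim{\kappa}{\tau_1+\tau_2}$ (via the lifting $L(-)$) for injections, Lemma~\ref{lem:fpc:wbisim:tick} together with~(\ref{eq:fpc:laterR:next}) for $\fpcfoldbare$ and $\fpcunfoldbare$, and -- for $\fpccasebare$ in each of its three hole positions -- the guarded-recursion argument used for~(\ref{lem:fpc:wbisim:refl:gr}), applied (via Lemmas~\ref{lem:fpc:wbisim:later:appl} and~\ref{lem:fpc:case:hom}) to the related pair coming from the induction hypothesis rather than to two equal arguments. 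Alternatively, one can encode $C$ as a term $\widehat C$ with a single higher-order hole variable $h$, so that $\den{C[P]}(*)=\den{\widehat C}(\lambda\vec x.\den P(\vec x))$ by the substitution lemma (Lemma~\ref{lem:fpc:subst:terms}), and then derive the congruence lemma directly from Proposition~\ref{prop:fpc:wbisim:refl} applied to $\widehat C$, fed the pair $\lambda\vec x.\den M$, $\lambda\vec y.\den N$, which the hypothesis, unfolded via Definition~\ref{def:fpc:wbisim:open}, says is related at the function type $\sigma_1\to\dots\to\sigma_n\to\tau$.

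Specialising the congruence lemma to a closing context $C$ of type $(\Gamma,\tau)\to(-,\fpcunittype)$ gives $\denglob{C[M]}\wbglob{-,\fpcunittype}\denglob{C[N]}$, which by Definition~\ref{def:fpc:wbisim:open} and the $\fpcunittype$-clause of Figure~\ref{fig:fpc:wbisim} unfolds to $\forall\kappa.\bigl(\den{C[M]}(*)\;L(\propeq_{\fpcunittype})\;\den{C[N]}(*)\bigr)$, i.e.\ to $\den{C[M]}(*)\mathbin{L^{\glob}(\propeq_{\fpcunittype})}\den{C[N]}(*)$. I would then prove a \emph{termination-transfer lemma}: if $x\mathbin{L^{\glob}(\propeq)}y$ in $L^{\glob}1$ and $\forall\kappa.\,x[\kappa]\propeq\delaybare^k(\eta(\sunit))$ for some $k$, then $\Sigma m.\,\forall\kappa.\,y[\kappa]\propeq\delaybare^m(\eta(\sunit))$; the symmetric statement follows because $L^{\glob}(\propeq_{\fpcunittype})$ is symmetric. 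The proof is by induction on $k$. For $k=0$, unfolding Definition~\ref{def:fpc:liftrel} and case-analysing $y[\kappa]$ yields $\forall\kappa.\,\Sigma n{:}\N.\,y[\kappa]\propeq\delaybare^n(\eta(\sunit))$; the clock/$\Sigma$ swap isomorphism~(\ref{eq:fpc:forallk:sigma:swap}) (applicable since $\N$ has no free clock) and~(\ref{eq:forall:eta}) then pull out a single $m$. For $k=k'+1$, write $\forall\kappa.\,x[\kappa]\propeq\delaybare(\delaybare^{k'}(\eta(\sunit)))$, remove the leading $\delaybare$ using Lemma~\ref{lem:fpc:liftrelglob:delay:rm:sx}, and apply the induction hypothesis.

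Finally, chaining with adequacy: for closed $P:\fpcunittype$, $P\bigstep\fpcunit$ means $\Sigma k.\,P\bigstep^k\fpcunit$, equivalently (Corollary~\ref{cor:fpc:gl:adequacy}) $\Sigma k.\,\forall\kappa.\,\den P(*)\propeq\delaybare^k(\eta(\sunit))$. Applying this to $P=C[M]$ and $P=C[N]$ and invoking the termination-transfer lemma in both directions gives $C[M]\bigstep\fpcunit\iff C[N]\bigstep\fpcunit$; as $C$ was an arbitrary closing context, $M\ctxeq N$. The main obstacle is the congruence lemma: no individual case is hard, but one must handle context binders uniformly (holes under $\lambda$ and inside $\fpccasebare$ branches force the relation to be taken at the appropriately extended context), and since $\wbisimbare$ is \emph{not} reflexive it is essential to invoke Proposition~\ref{prop:fpc:wbisim:refl} rather than reflexivity on the non-hole subterms, with the $\fpccasebare$ cases genuinely requiring their own guarded-recursion argument.
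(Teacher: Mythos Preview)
Your proposal is correct and follows essentially the same structure as the paper: a congruence lemma for contexts (the paper's Lemma~\ref{lem:fpc:wbisim:ctx:gen} and its global consequence Lemma~\ref{lem:fpc:wbglob:ctx}), a termination-transfer lemma at type $\fpcunittype$ (the paper's Lemma~\ref{lem:fpc:wb:den:gen}, proved exactly as you outline using~(\ref{eq:fpc:forallk:sigma:swap}) in the base case and Lemma~\ref{lem:fpc:liftrelglob:delay:rm:sx} in the step), and the final appeal to Corollary~\ref{cor:fpc:gl:adequacy}.

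One small difference worth noting: in the congruence lemma the paper avoids redoing the guarded-recursion argument for $\fpccasebare$ (and similarly for $\fpcunfoldbare$) by applying Proposition~\ref{prop:fpc:wbisim:refl} directly to a suitable lambda-term encoding the surrounding operation, e.g.\ $\lambda x.\,\fpccase{L}{x(x_1)}{N'}$ for the hole-in-left-branch case, and then feeding it the related pair coming from the induction hypothesis via the function-type clause of $\wbisimbare$. This is morally the same as your closure-property argument (and is a local instance of your alternative ``encode $C$ as a term with a higher-order hole'' idea), but it saves re-running the guarded recursion you flag as necessary.
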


To prove this theorem, we need the following lemma stating that contexts
preserve the logical relation. 

\begin{lemma} \label{lem:fpc:wbisim:ctx:gen}
  Let $\hastype{}{\Gamma}{M}{\tau}$ and $\hastype{}{\Gamma}{N}{\tau}$ and suppose
  $\den{M} \wbisim{\kappa}{\Gamma,\tau} \den{N}$. If $C$ is a context such that 
  $C : \Gamma, \tau \to \Delta,\sigma$ then 
  $\den{C[M]} \wbisim{\kappa}{\Delta, \sigma} \den{C[N]}$
\end{lemma}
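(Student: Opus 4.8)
The plan is to argue by induction on the derivation of $\ctxhastype{C}{\Gamma}{\tau}{\Delta}{\sigma}$ from Figure~\ref{fig:fpc:context:typing}. The base case $C = [-]$ forces $\Delta = \Gamma$ and $\sigma = \tau$, and the conclusion $\den{C[M]} \wbisim{\kappa}{\Delta,\sigma} \den{C[N]}$ is then exactly the hypothesis $\den{M} \wbisim{\kappa}{\Gamma,\tau} \den{N}$. Every other rule exhibits $C$ as an FPC term former applied to a subcontext $C'$ together with some hole-free subterms; for such a rule the induction hypothesis supplies $\den{C'[M]} \wbisim{\kappa}{\Delta',\sigma'} \den{C'[N]}$ for the appropriate intermediate $\Delta',\sigma'$, and Proposition~\ref{prop:fpc:wbisim:refl} supplies reflexivity $\den{P} \wbisim{\kappa}{\tau_P} \den{P}$ at the relevant type for every hole-free subterm $P$. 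What remains in each case is a \emph{compatibility} property of the corresponding term former: that reading it through the interpretation of Figure~\ref{fig:fpc:term:interp} sends $\wbisim{\kappa}{}$-related arguments to $\wbisim{\kappa}{}$-related results.

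For $\lambda$-abstraction, application, pairing, the projections, $\fpcinl{}$ and $\fpcinr{}$, these facts follow immediately by unfolding Definition~\ref{def:fpc:wbisim:open} and the clauses of Figure~\ref{fig:fpc:wbisim}; for application one additionally uses that two $\wbisim{\kappa}{\tau\to\sigma}$-related functions send $\wbisim{\kappa}{\tau}$-related inputs to $\wbisim{\kappa}{\sigma}$-related outputs, combining the induction hypothesis for the subcontext with reflexivity for the other operand. For $\fpcfold{C'}$ one needs that $\den{C'[M]}(\vec{a}) \wbisim{\kappa}{\unfoldedtype} \den{C'[N]}(\vec{b})$ implies $\purebare(\den{C'[M]}(\vec{a})) \wbisim{\kappa}{\foldedtype} \purebare(\den{C'[N]}(\vec{b}))$, which holds because $x \wbisim{\kappa}{\foldedtype} y$ is by definition $x \laterWBisim{\kappa}{\unfoldedtype} y$ and then by (\ref{eq:fpc:laterR:next}); for $\fpcunfold{C'}$ one needs the dual implication through $\tick{}{\unfoldedtype}$, which is Lemma~\ref{lem:fpc:wbisim:tick}. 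These are precisely the computations already performed for the corresponding constructors in the proof of Proposition~\ref{prop:fpc:wbisim:refl}, only with two distinct but related arguments in place of a single one.

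The substantive cases are the three context formers for $\fpccasebare$. When the hole is in the scrutinee, $C = \fpccase{C'}{M_1}{M_2}$, it is enough to observe that $\lambda x.\fpccase{x}{M_1}{M_2}$ is an FPC term of type $\tau_1 + \tau_2 \to \sigma$ in context $\Delta$, so Proposition~\ref{prop:fpc:wbisim:refl} gives $\den{\lambda x.\fpccase{x}{M_1}{M_2}} \wbisim{\kappa}{\Delta,\,\tau_1 + \tau_2 \to \sigma} \den{\lambda x.\fpccase{x}{M_1}{M_2}}$, which we apply to the pair $\bigl(\den{C'[M]}(\vec{a}),\,\den{C'[N]}(\vec{b})\bigr)$, related by the induction hypothesis. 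When the hole lies in one of the branches the two sides carry genuinely different branch terms, so reflexivity of the $\fpccasebare$-term no longer suffices; here I would first isolate a standalone \emph{relational} compatibility lemma for $\fpccasebare$ --- if $\den{L} \wbisim{\kappa}{\Delta,\tau_1+\tau_2} \den{L'}$, $\den{M_1} \wbisim{\kappa}{(\Delta,x_1 : \tau_1),\sigma} \den{M_1'}$ and $\den{M_2} \wbisim{\kappa}{(\Delta,x_2 : \tau_2),\sigma} \den{M_2'}$, then $\den{\fpccase{L}{M_1}{M_2}} \wbisim{\kappa}{\Delta,\sigma} \den{\fpccase{L'}{M_1'}{M_2'}}$ --- and prove it by the argument already used for the case-expression clause in the proof of Proposition~\ref{prop:fpc:wbisim:refl}: guarded recursion, then a case analysis on the two $\wbisim{\kappa}{\tau_1+\tau_2}$-related scrutinee values, invoking Lemma~\ref{lem:fpc:wbisim:later:appl}, Lemma~\ref{lem:fpc:wbisim:tick}, Lemma~\ref{lem:fpc:case:hom} and Lemma~\ref{lem:fpc:wbisim:delay:add:sx:add:dx}. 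The only adjustment is that the branch terms of the two sides are now related rather than identical, which the argument accommodates without change since it never uses that the two environments coincide. With this lemma available, the hole-in-branch cases follow by feeding it the induction hypothesis for $C'$ and reflexivity (Proposition~\ref{prop:fpc:wbisim:refl}) for $L$ and the untouched branch. The one real obstacle is thus isolating and proving this relational compatibility lemma for $\fpccasebare$; all remaining cases are routine unfolding of definitions.
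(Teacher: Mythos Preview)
Your proof is correct and follows the same inductive structure as the paper's; indeed the paper explicitly says that each case can be handled either by mimicking the corresponding case of Proposition~\ref{prop:fpc:wbisim:refl} or by applying that proposition directly to a suitable wrapper term. The one place you take the former route where the paper showcases the latter is the hole-in-branch $\fpccasebare$ case. You reprove a relational compatibility lemma for $\fpccasebare$ by rerunning the guarded-recursion argument; the paper instead packages the induction hypothesis $\den{C'[M]} \wbisim{\kappa}{(\Delta,x_1:\tau_1),\sigma} \den{C'[N]}$ as $\den{\lambda x_1.C'[M]}(\vec a) \wbisim{\kappa}{\tau_1\to\sigma} \den{\lambda x_1.C'[N]}(\vec b)$ and then applies Proposition~\ref{prop:fpc:wbisim:refl} to the hole-free FPC term $\lambda x.\fpccase{L}{x(x_1)}{N'}$ of type $(\tau_1\to\sigma)\to\sigma$, feeding it the packaged hypothesis. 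The same $\lambda$-wrapping device handles $\fpcunfold{C}$ via $\lambda x.\fpcunfold x$. Your approach yields a reusable standalone compatibility statement; the paper's avoids any new guarded recursion or lemma.
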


\begin{proof}
The proof is by induction on $C$ and most cases can be proved either 
very similarly to corresponding cases of Proposition~\ref{prop:fpc:wbisim:refl},
or by direct application of Proposition~\ref{prop:fpc:wbisim:refl}. We show how to 
do the latter in two cases. 

  For a context
  $\fpcunfold{C}$ of type
  $ (\Gamma, \sigma) \to (\Delta, \unfoldedtype)$ we have by induction
  that $C$ has type $(\Gamma, \sigma) \to (\Delta, \foldedtype)$ and
  thus induction hypothesis we know that
  $\den{C[M]}(\vec{x}) \wbisim{\kappa}{\foldedtype}
  \den{C[N]}(\vec{y})$.
  By Proposition~\ref{prop:fpc:wbisim:refl} we know that
  \[\den{\lambda x. \fpcunfold{x}} \wbisim{\kappa}{(\foldedtype) \to
    (\unfoldedtype)} \den{\lambda x. \fpcunfold{x}}\]
  By applying this latter fact to the induction hypothesis we obtain
  \[\den{\fpcunfold{C[M]}}(\vec{x}) \wbisim{\kappa}{\unfoldedtype}
  \den{\fpcunfold{C[N]}}(\vec{y})\]
  which is what we wanted.

  When the context binds a variable one has to be a bit more careful. 
  For example, for a context
  of the form $\fpccase{L}{C}{N'}$ of type
  $(\Gamma, \tau) \to (\Delta, \sigma)$ we have by induction that $C$
  has type $(\Gamma, \tau) \to ((\Delta, x_1 \co \tau_1), \sigma)$ and thus by
  induction hypothesis we know by applying the context parameters that
  $\den{C[M]}(\vec{x}) \wbisim{\kappa}{\tau_1, \sigma}
  \den{C[N]}(\vec{y})$.
  From this we also know that
  \begin{equation} \label{eq:bisim:lambda:C[M]}
\den{\lambda x_1. C[M]}(\vec{x}) \wbisim{\kappa}{\tau_1 \to
    \sigma} \den{\lambda x_1.C[N]}(\vec{y}).
\end{equation}
  By Proposition~\ref{prop:fpc:wbisim:refl} we know that
  \[\den{\lambda
    x. \fpccase{L}{x(x_1)}{N'}}(\vec{x})\wbisim{\kappa}{(\tau_1 \to
    \sigma) \to \sigma} \den{\lambda
    x. \fpccase{L}{x(x_1)}{N'}}(\vec{y}). \]
  By applying this to (\ref{eq:bisim:lambda:C[M]}) we
  conclude.
%
%
\end{proof}

As a direct consequence we get the following lemma. 

\begin{lemma}
  If $\hastype{}\Gamma{M,N}{\tau}$ and
  $\denglob{M} \wbglob{\Gamma,\tau} \denglob{N}$ then for all contexts
  $C$ of type $(\Gamma,\tau) \to (-, \fpcunittype)$,
  $\denglob{C[ M ]} \wbglob{(-,\fpcunittype)} \denglob{C[ N ]} $
  \label{lem:fpc:wbglob:ctx}
\end{lemma}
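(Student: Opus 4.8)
The plan is to reduce the claim to the clocked Lemma~\ref{lem:fpc:wbisim:ctx:gen} by unfolding the definition of the global relation and working underneath the universal clock quantifier. Recall from Definition~\ref{def:fpc:wbisim:open} that $\denglob{M} \wbglob{\Gamma,\tau} \denglob{N}$ abbreviates $\forall\kappa.\, \denglob{M}[\kappa] \wbisim{\kappa}{\Gamma,\tau} \denglob{N}[\kappa]$, and that $\denglob{M} = \Lambda\kappa.\den{M}$, so $\denglob{M}[\kappa]$ is judgementally equal to $\den{M}$ (with $\kappa$ the implicit clock). Thus the hypothesis is precisely $\forall\kappa.\, \den{M} \wbisim{\kappa}{\Gamma,\tau} \den{N}$.

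First I would observe that a context $C$ of type $(\Gamma,\tau)\to(-,\fpcunittype)$ is exactly the instance $\ctxhastype{C}{\Gamma}{\tau}{-}{\fpcunittype}$ of the context typing judgement, i.e.\ $\Delta$ empty and $\sigma = \fpcunittype$ in the notation of Lemma~\ref{lem:fpc:wbisim:ctx:gen}. Since that lemma is established for the implicit clock $\kappa$ and imposes no constraint on $\kappa$, applying $\forall\kappa.(-)$ to the implication it provides — using functoriality of clock quantification — transforms the hypothesis $\forall\kappa.\, \den{M} \wbisim{\kappa}{\Gamma,\tau} \den{N}$ into $\forall\kappa.\, \den{C[M]} \wbisim{\kappa}{-,\fpcunittype} \den{C[N]}$.

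Finally I would fold the conclusion back up: $\den{C[M]}$ is judgementally equal to $\denglob{C[M]}[\kappa]$, so the statement just obtained reads $\forall\kappa.\, \denglob{C[M]}[\kappa] \wbisim{\kappa}{-,\fpcunittype} \denglob{C[N]}[\kappa]$, which by Definition~\ref{def:fpc:wbisim:open} is exactly $\denglob{C[M]} \wbglob{(-,\fpcunittype)} \denglob{C[N]}$, as required. There is no genuine obstacle here: all the real work has already been done in Lemma~\ref{lem:fpc:wbisim:ctx:gen}, and the only point that needs a moment's care is the bookkeeping of clock abstraction and application, together with the observation that a clocked lemma valid uniformly in $\kappa$ may be transported under $\forall\kappa$.
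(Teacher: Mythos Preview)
Your proposal is correct and matches the paper's approach: the paper simply states that this lemma is a direct consequence of Lemma~\ref{lem:fpc:wbisim:ctx:gen}, and your argument spells out precisely that reduction by unfolding the definition of $\wbglob{}$ and working under $\forall\kappa$. The bookkeeping you describe (that $\denglob{M}[\kappa] \judgeq \den{M}$ and that the clocked lemma can be transported under the clock quantifier) is exactly what is needed.
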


The next lemma states that if two computations of unit type are
related then the first converges iff the second converges.  Note that
this lemma needs to be stated using the fact that the two computations
are \emph{globally related}.
\begin{lemma} \label{lem:fpc:wb:den:gen} For all $x, y$ of type
  $\denglob{\fpcunittype}$, if $x \wbglob{(-,\fpcunittype)} y$ then
  \[\Sigma n. x \propeq (\delayglob{\fpcunittype})^{n} (\eta(\sunit))
  \Leftrightarrow \Sigma m. y \propeq (\delayglob{\fpcunittype})^m
  (\eta(\sunit))\]
\end{lemma}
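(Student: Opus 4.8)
The plan is to reduce everything to a one-step analysis of the coinductive type $\Lglob 1$, which satisfies $\Lglob 1 \judgeq 1 + \Lglob 1$ (Section~\ref{sec:clock-variables}). Unfolding Definition~\ref{def:fpc:wbisim:open}, the hypothesis $x \wbglob{(-,\fpcunittype)} y$ says $\forall\kappa.\, x[\kappa] \liftrel y[\kappa]$ for $R \judgeq {\propeq_{\fpcunittype}}$, i.e.\ $x \liftrelglob y$ with this (symmetric) relation. Since $\propeq_{\fpcunittype}$ is symmetric, so is $\liftrel$ --- a routine guarded-recursion argument on Definition~\ref{def:fpc:liftrel}, using that the now/tick and tick/now clauses are mirror images --- and hence so is $\liftrelglob$. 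Therefore it suffices to prove the single implication: if $x \liftrelglob y$ and $x \propeq (\delayglob{\fpcunittype})^n(\eta(\sunit))$, then $\Sigma m.\, y \propeq (\delayglob{\fpcunittype})^m(\eta(\sunit))$; the other direction of the stated equivalence then follows by exchanging $x$ and $y$.

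First I would strip the delays on the left. Writing $(\delayglob{\fpcunittype})^n(\eta(\sunit)) \propeq \delayglob{\fpcunittype}((\delayglob{\fpcunittype})^{n-1}(\eta(\sunit)))$ and applying Lemma~\ref{lem:fpc:liftrelglob:delay:rm:sx} (with $A \judgeq 1$, which removes one $\delayglob{}$ from the left) $n$ times reduces the hypothesis to $\eta(\sunit) \liftrelglob y$, so we may assume $n = 0$. Now case-analyse $y$ through the isomorphism $\Lglob 1 \judgeq 1 + \Lglob 1$: under it $\eta(\sunit)$ corresponds to $\inl(\sunit)$ and, for every $y'$, $\delayglob{\fpcunittype}(y')$ corresponds to $\inr(y')$ (this is how the isomorphism is built from (\ref{eq:forall:dist:plus})--(\ref{eq:forall:later})). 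Either $y \propeq \eta(\sunit)$, and we are done with $m \judgeq 0$, or $y \propeq \delayglob{\fpcunittype}(y')$ for a unique $y' \co \Lglob 1$.

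In the remaining case, $\eta(\sunit) \liftrelglob \delayglob{\fpcunittype}(y')$ means that for every clock $\kappa$ we have $\eta(\sunit) \liftrel \tick{\kappa}{L1}(\purebare(y'[\kappa]))$; the second clause of Definition~\ref{def:fpc:liftrel} then yields an $n_\kappa$ with $\delay{\kappa}{L1}(y'[\kappa]) \propeq \delay{\kappa}{L1}[n_\kappa](\eta(\sunit))$. Since the left side is a ``tick'' element while $\eta(\sunit)$ is a ``now'' element, and since $L1 \judgeq 1 + \laterclock\kappa L1$ is a disjoint coproduct, we must have $n_\kappa \judgeq n'_\kappa + 1$; cancelling the outer $\tick{\kappa}{L1}$ (injective, being half of the coproduct isomorphism) and using (\ref{eq:later:id}) gives an inhabitant of $\laterclock\kappa(y'[\kappa] \propeq \delay{\kappa}{L1}[n'_\kappa](\eta(\sunit)))$. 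Thus we have a term of type $\forall\kappa.\,\Sigma(n'\co\NN).\,\laterclock\kappa(y'[\kappa] \propeq \delay{\kappa}{L1}[n'](\eta(\sunit)))$; as $\NN$ is clock-independent, isomorphism (\ref{eq:fpc:forallk:sigma:swap}) turns this into $\Sigma(n'\co\NN).\,\forall\kappa.\,\laterclock\kappa(y'[\kappa] \propeq \delay{\kappa}{L1}[n'](\eta(\sunit)))$. Applying $\force$ and then (\ref{eq:forall:eta}) we obtain $y' \propeq (\delayglob{\fpcunittype})^{n'}(\eta(\sunit))$, hence $y \propeq \delayglob{\fpcunittype}(y') \propeq (\delayglob{\fpcunittype})^{n'+1}(\eta(\sunit))$, as required.

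The step I expect to be the crux is this last one --- manufacturing a single, clock-independent step count out of the family $(n_\kappa)_\kappa$ of per-clock witnesses. Everything hinges on cancelling \emph{exactly one} delay so as to land in a $\laterclock\kappa$-equality (which needs injectivity of $\tick{\kappa}{L1}$ and disjointness of the now/tick summands), then commuting $\forall\kappa$ past the $\NN$-indexed existential and past $\laterclock\kappa$ via $\force$, and only at the very end reading the resulting pointwise equalities back as an equality of global elements through (\ref{eq:forall:eta}). Arranged this way the argument uses no guarded recursion beyond what is already packaged into the $\liftrelglob$-lemmas.
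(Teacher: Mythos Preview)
Your proof is correct and follows the paper's overall strategy: reduce to the case $n=0$ by repeatedly invoking Lemma~\ref{lem:fpc:liftrelglob:delay:rm:sx}, then extract a single clock-independent step count via the swap isomorphism~(\ref{eq:fpc:forallk:sigma:swap}). The paper phrases the reduction as an induction on $n$ rather than stripping all delays at once, and it leaves the symmetry argument implicit, but these are cosmetic differences.

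Where you diverge is in the base case. You case-analyse $y$ through the global isomorphism $\Lglob 1 \cong 1 + \Lglob 1$, and in the $\delayglob{}$ branch you cancel one tick (using injectivity of $\tick{}{}$), land in a $\laterclock\kappa$-equality, swap, and then apply $\force$. The paper avoids all of this: from $\eta(\sunit) \wbisim{\kappa}{\fpcunittype} y[\kappa]$, the first two clauses of Definition~\ref{def:fpc:liftrel} already give, for each $\kappa$, an $m$ with $y[\kappa] \propeq \delay{}{\fpcunittype}^m(\eta(\sunit))$ directly---no cancellation needed. One application of~(\ref{eq:fpc:forallk:sigma:swap}) then finishes the job, with no appeal to injectivity of $\tick{}{}$, to $\force$, or to a global case split on $y$. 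Your detour is sound but superfluous; the per-clock step count is available immediately without peeling off a delay first.
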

\begin{proof}  We show the left to right implication, so
  suppose $x \propeq (\delayglob{\fpcunittype})^{n} (\eta(\sunit))$.
  The proof proceeds by induction on $n$. If $n\propeq 0$ then since
  by assumption
  $\forall \kappa. \alwaysapp{x}{\kappa}\wb{\fpcunittype}
  \alwaysapp{y}{\kappa}$,
  by definition of $\wb{\fpcunittype}$, for all $\kappa$, there exists
  an $m$ such that
  $\alwaysapp{y}{\kappa} \propeq \delay{}{\fpcunittype}^m
  (\eta(\sunit))$.
  By type isomorphism (\ref{eq:fpc:forallk:sigma:swap}), since $m$ is
  a natural number, this implies there exists $m$ such that for all
  $\kappa$,
  $\alwaysapp{y}{\kappa} \propeq \delay{}{\fpcunittype}^m
  (\eta(\sunit))$
  which implies $y \propeq \alwaysterm\kappa{\alwaysapp y\kappa} 
  \propeq (\delayglob{\fpcunittype})^m(\eta(\sunit))$.

  In the inductive case $n \propeq n' + 1$, since by
  Lemma~\ref{lem:fpc:liftrelglob:delay:rm:sx}
  $(\delayglob{\fpcunittype})^{n'} (\denglob{v}) \wbglob{\fpcunittype}
  y$,
  the induction hypothesis implies
  $\Sigma m. y \propeq (\delayglob{\fpcunittype})^m (\eta(\sunit))$.
\end{proof}

\begin{proof}[Proof of Theorem~\ref{thm:ext:adequacy}]
  Suppose $\denglob{M} \wbglob{\Gamma, \tau} \denglob{N}$ and that $C$
  has type $(\Gamma,\tau) \to (-, \fpcunittype)$. We show that if
  $C[M] \Downarrow \fpcunit$ also $C[N] \Downarrow \fpcunit$. So
  suppose $C[M] \Downarrow \fpcunit$.  By definition this means
  $\Sigma n. C[M] \Downarrow^n \fpcunit$.  By 
  Corollary~\ref{cor:fpc:gl:adequacy} we get
  $\Sigma n .\forall \kappa. \den{C[M]} \propeq (\delay{}{\fpcunittype})^{n}
  (\eta(\sunit))$
  which is equivalent to
  $\Sigma n . \denglob{C[M]} \propeq (\delayglob{\fpcunittype})^{n}
  (\eta(\sunit))$.
  From the assumption and Lemma~\ref{lem:fpc:wbglob:ctx} we know that 
  $\denglob{C[ M ]} \wbglob{\fpcunittype} \denglob{C[ N ]}$, so by
  Lemma~\ref{lem:fpc:wb:den:gen} there exists an $m$ such that
  $\denglob{C[N]} \propeq (\delayglob{\fpcunittype})^m (\eta(\sunit))$.
  By applying the Corollary~\ref{cor:fpc:gl:adequacy}
  once again we get $C[N] \Downarrow \fpcunit$ as desired.
\end{proof}
\section{Executing the denotational semantics}
\label{sec:executing}

In this final section we sketch an additional benefit of the denotational
semantics described in this paper: The denotational 
semantics can be executed. More precisely, given a closed
FPC term of base type and a number $n$, the denotational semantics
can be executed up to $n$ steps. This will terminate if and only if the 
big-step operational semantics terminates in $n$ steps or less. 
The time-out $n$ is necessary since FPC programs can
diverge and programs in type theory must terminate. We emphasize that 
at the moment there is no full implementation of \gdtt\, and so the practical
implications of this section are speculative. 

We illustrate the execution 
of the denotational semantics in the case of programs computing booleans, i.e.,
closed term of type $\fpcunittype + \fpcunittype$. The global interpretation
of such a term has type $\denglob{\fpcunittype + \fpcunittype} = \forall\kappa . L(L1 + L1)$. 
We first define a term
\[
\runstep : (\forall\kappa . L(L1 + L1)) \to (1 + 1) + (\forall\kappa . L(L1 + L1))
\]
running the denotation of the term for one step. We define $\runstep\, x$ by cases of 
$\alwaysapp x{\kappa_0} : L(L1 + L1)\subst{\kappa_0}\kappa$ 
where $\kappa_0$ is the clock constant. If $\alwaysapp x{\kappa_0} = \eta(\inl(y))$ for some $y$,
then $\runstep\, x = \inl(\inl(\star))$, and likewise 
if $\alwaysapp x{\kappa_0} = \eta(\inr(y))$ for some $y$,
then $\runstep\, x = \inl(\inr(\star))$. In case $\alwaysapp x{\kappa_0}$ is of the form
$\tick{}{}(y)$, then, as we saw in the construction of the isomorphism (\ref{eq:forall:dist:sum}) 
in Section~\ref{sec:clock-variables}, there is a term $z_\kappa$
such that $\alwaysapp x{\kappa} = \tick{}{}(z_\kappa)$. (Precisely, 
$z_\kappa = \pi_2(\alwaysapp x{\kappa})$ using the encoding
of binary sums as dependent sums over $1+1$.) In that case we define 
$\runstep x = \inr(\prev\kappa z_{\kappa})$. 

Using $\runstep$ we can define a function 
\begin{align*}
 \exec & :\NN \to (\forall\kappa . L(L1 + L1)) \to (1 + 1) + (\forall\kappa . L(L1 + L1))
\end{align*}
such that $\exec\,n$ iterates $\runstep$ until it gets a result, or for at most $n+1$ times. Precisely,
we define $\exec\, 0 \, x  = \runstep \, x$ and $\exec\, (n+1) \, x = \runstep \,x$ if $\runstep\, x$ is in the left component
and $\exec\, (n+1) \, x = \exec\, n \, y$ if $\runstep \, x = \inr(y)$. 

We now show that executing the denotational semantics using $\exec\, n$ corresponds to executing the operational
semantics for up to $n$ steps.  

\begin{proposition} \label{prop:exec}
 Let $M$ be a closed term of FPC of type $\fpcunittype + \fpcunittype$,
 and let $n$ be a natural number. Then 
 $\exec\, n\, \denglob{M} = \inl(\inl(\star))$ iff there exists an $N$ such that 
 $M\Downarrow^k \fpcinl(N)$ for some $k\leq n$. 
\end{proposition}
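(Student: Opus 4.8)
The plan is to reduce the statement to a single-step correspondence between $\runstep$ and the operational semantics, and then to iterate this correspondence through the recursive definition of $\exec$ together with Lemma~\ref{lem:fpc:bigstep:many:manyk:soundness}. Concretely, I would first establish the following single-step lemma: for a closed FPC term $M$ of type $\fpcunittype + \fpcunittype$, exactly one of the following three situations obtains, and each determines $\runstep\,\denglob M$: (i) $\zeromany{M}{\fpcinl(L)}$ for some $L$, in which case $\runstep\,\denglob M = \inl(\inl(\star))$; (ii) $\zeromany{M}{\fpcinr(L)}$ for some $L$, in which case $\runstep\,\denglob M = \inl(\inr(\star))$; (iii) $\zeromany{M}{M'}$ and $\toone{M'}{M''}$ for some (necessarily unique) $M', M''$, in which case $\runstep\,\denglob M = \inr(\denglob{M''})$. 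Since the three right-hand sides are pairwise distinct, this lemma reads in both directions: it also recovers which case holds from the value of $\runstep\,\denglob M$.

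To prove the single-step lemma I would first observe, using determinism of the small-step semantics (Lemma~\ref{lem:fpc:small-step:det}), standard type safety, and strong normalisation of the $\to^0$-reduction (a subsystem of the simply typed $\lambda$-calculus with finite sums and products, hence terminating), that every closed well-typed FPC term reduces by $\to^0$-steps either to a value or to a term $M'$ admitting a $\to^1$-step, and exactly one of these happens; for type $\fpcunittype + \fpcunittype$ a value is $\fpcinl(L)$ or $\fpcinr(L)$, which gives the trichotomy. In cases (i) and (ii), repeated application of Lemma~\ref{lem:fpc:to:den} (with $k = 0$) together with the definition of the interpretation gives $\den{M}(*) \propeq \eta(\inl(\den{L}(*)))$, respectively $\den{M}(*) \propeq \eta(\inr(\den{L}(*)))$, so $\alwaysapp{\denglob M}{\kappa_0}$ is of $\eta$-form and $\runstep$ returns $\inl(\inl(\star))$, respectively $\inl(\inr(\star))$, directly by its definition. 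In case (iii), repeated application of Lemma~\ref{lem:fpc:to:den} gives $\den{M}(*) \propeq \den{M'}(*) \propeq \delta^1(\den{M''}(*)) \propeq \tick{}{}(\purebare(\den{M''}(*)))$, so $\alwaysapp{\denglob M}{\kappa}$ is of $\tick{}{}$-form for every $\kappa$; unwinding the definition of $\runstep$ and using the $\prevbare$-$\purebare$ equations (\ref{eq:prev:beta:eta}) then computes its right-hand component as $\prev{\kappa}{\purebare(\den{M''}(*))} \propeq \denglob{M''}$.

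Granting the single-step lemma, the proposition follows by unwinding the recursive definition of $\exec$. For the ``if'' direction, assume $M \bigstep^k \fpcinl(N)$ with $k \le n$; Lemma~\ref{lem:fpc:bigstep:many:manyk:soundness} gives $\many{M}{k}{\fpcinl(N)}$, and grouping the underlying reduction sequence into maximal runs of $\to^0$-steps around the $k$ occurrences of $\to^1$ presents it as $\zeromany{M_i}{P_i}$, $\toone{P_i}{M_{i+1}}$ for $i < k$ and $\zeromany{M_k}{\fpcinl(N)}$, with $M_0 \equiv M$. By the single-step lemma, $\runstep\,\denglob{M_i} = \inr(\denglob{M_{i+1}})$ for $i < k$ and $\runstep\,\denglob{M_k} = \inl(\inl(\star))$, so an induction on $k$ using the clauses of $\exec$ and $k \le n$ yields $\exec\, n\, \denglob M = \inl(\inl(\star))$. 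For the ``only if'' direction, $\exec\, n\, \denglob M = \inl(\inl(\star))$ unwinds to the existence of $j \le n$ and elements $x_0 \equiv \denglob M, x_1, \dots, x_j$ with $\runstep\,x_i = \inr(x_{i+1})$ for $i < j$ and $\runstep\,x_j = \inl(\inl(\star))$; the single-step lemma then identifies, inductively, $x_i = \denglob{M_i}$ for terms $M \equiv M_0, \dots, M_j$ with $\zeromany{M_i}{P_i}$, $\toone{P_i}{M_{i+1}}$ for $i < j$ and $\zeromany{M_j}{\fpcinl(L)}$ for some $L$. Concatenating these segments gives $\many{M}{j}{\fpcinl(L)}$, hence $M \bigstep^j \fpcinl(L)$ with $j \le n$ by Lemma~\ref{lem:fpc:bigstep:many:manyk:soundness}.

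The main obstacle is the single-step lemma, and within it the guarded-type-theory bookkeeping: one must verify not merely that $\runstep\,\denglob M$ lands in the right summand in case (iii), but that its right-hand component is \emph{exactly} $\denglob{M''}$ for the term $M''$ produced by the operational semantics, so that iterating $\runstep$ faithfully tracks iterating a block of $\to^0$-steps followed by a $\to^1$-step. This depends on applying soundness (Lemma~\ref{lem:fpc:to:den}) at the implicit clock and reassembling globally, on the $\prevbare$-$\purebare$ computation above (and the harmless identification of $\denglob M$ with the global element of $L(L1+L1)$ it determines), and on the standard strong normalisation of $\to^0$, which is used but not proved in the paper and is needed to exclude terms that reduce forever by $\to^0$-steps alone.
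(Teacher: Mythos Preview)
Your argument is correct, but it takes a genuinely different route from the paper. The paper's proof splits into two auxiliary lemmas: Lemma~\ref{lem:exec:term:delay} shows purely semantically that $\exec\,n\,x = \inl(\inl(\star))$ forces $x$ to have the shape $(\delayglob{})^k(\Lambda\kappa.\eta(\inl(\dots)))$ for some $k\le n$; Lemma~\ref{lem:delay:1+1:term} then turns such a shape for $\denglob M$ into operational convergence by invoking the Fundamental Lemma (the adequacy logical relation $\Rbare$) and peeling off the delays with $\force$. The ``if'' direction in the paper is, as in your proof, a direct computation using soundness.

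Your decomposition instead sets up a tight one-step correspondence between $\runstep$ and a block ``$\to^0_*$ then $\to^1$'' of operational reduction, and iterates it. This is more elementary in that it never touches $\Rbare$ or the adequacy development: soundness (Lemma~\ref{lem:fpc:to:den}) plus the $\prevbare$/$\purebare$ laws already pin down $\runstep\,\denglob M$ exactly. The price is the extra ingredient you flag yourself: termination of $\to^0$ on closed well-typed FPC terms. This is true and standard, but your justification (``a subsystem of the simply typed $\lambda$-calculus with finite sums and products'') is a bit quick: FPC terms contain $\fpcfoldbare$/$\fpcunfoldbare$, and the evaluation contexts include $\fpcunfold{E}$, so one cannot literally read $\to^0$ as STLC reduction without first encoding these constructors (e.g.\ treating each $\mu\alpha.\tau$ as a fresh base type and $\fpcfoldbare$/$\fpcunfoldbare$ as constants, together with an evaluation-context-preserving translation). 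The result still holds, but the argument deserves a sentence more care. By contrast, the paper's route reuses machinery already built for computational adequacy and needs no new normalisation fact.
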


To prove Proposition~\ref{prop:exec} we need following two lemmas. 

%

\begin{lemma} \label{lem:exec:term:delay}
 If $\exec\, n\, x = \inl(\inl(\star))$ then there exists a $k \leq n$ and a $y$
 such that $x = (\delayglob{\fpcunittype + \fpcunittype})^k(\Lambda \kappa . \eta( \inl(\alwaysapp y\kappa)))$.
\end{lemma}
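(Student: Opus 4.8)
The plan is to prove the statement by induction on $n$, after recording two facts about $\runstep$ that follow immediately from its definition.

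First I would observe that $\runstep\, x \propeq \inl(\inl(\star))$ implies $x \propeq \Lambda\kappa . \eta(\inl(\alwaysapp y\kappa))$ for some $y \co \forall\kappa . L1$. Indeed, by definition of $\runstep$ this case occurs exactly when $\alwaysapp x{\kappa_0}$ lies in the first $L1$-summand of $L(L1+L1) \cong L1 + L1 + \laterbare L(L1+L1)$. Writing this threefold sum as a $\Sigma$-type over a three-element type, the ``tag'' $\pi_1(\alwaysapp x\kappa)$ is, by the clock-irrelevance argument used to construct the isomorphism (\ref{eq:forall:dist:sum}) in Section~\ref{sec:clock-variables}, independent of $\kappa$; being the first tag at $\kappa_0$ it is the first tag at every $\kappa$, so $\alwaysapp x\kappa \propeq \eta(\inl(u_\kappa))$ for the corresponding component $u_\kappa \co L1$. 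Taking $y \eqdef \Lambda\kappa . u_\kappa$ and applying the $\eta$-law (\ref{eq:forall:eta}) for $\forall\kappa$ gives $x \propeq \Lambda\kappa . \eta(\inl(\alwaysapp y\kappa))$.

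Second I would observe that $\runstep\, x \propeq \inr(z)$ implies $x \propeq \delayglob{\fpcunittype + \fpcunittype}(z)$. By definition of $\runstep$, in this case $\alwaysapp x\kappa \propeq \tick{}{}(z_\kappa)$ for every $\kappa$, with $z_\kappa \co \laterbare L(L1+L1)$ and $z \propeq \prev\kappa z_\kappa$, where $\tick{}{}$ here is the structure map of the lifting monad on $L1+L1$, i.e. $\tick{}{\fpcunittype+\fpcunittype}$ under the identification $\den{\fpcunittype+\fpcunittype} = L(L1+L1)$ (Figure~\ref{fig:later:alg}). The $\beta$-$\eta$ rules (\ref{eq:prev:beta:eta}) give $\purebare(\alwaysapp z\kappa) \judgeq z_\kappa$, so $\alwaysapp x\kappa \propeq \tick{}{\fpcunittype+\fpcunittype}(\purebare(\alwaysapp z\kappa)) \propeq \delay{}{\fpcunittype+\fpcunittype}(\alwaysapp z\kappa)$; now (\ref{eq:forall:eta}) together with the definition (\ref{def:delay:glob}) of $\delayglob{}$ gives $x \propeq \Lambda\kappa . \delay{}{\fpcunittype+\fpcunittype}(\alwaysapp z\kappa) \propeq \delayglob{\fpcunittype+\fpcunittype}(z)$.

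Granting these, the induction is bookkeeping over the recursion defining $\exec$. For $n = 0$ we have $\exec\, 0\, x = \runstep\, x$, so the hypothesis forces $\runstep\, x \propeq \inl(\inl(\star))$ and the first observation applies with $k = 0$. For $n = n' + 1$, split on $\runstep\, x$: if it is in the left summand then $\exec\,(n'+1)\,x = \runstep\, x$, which by hypothesis is $\inl(\inl(\star))$ (the value $\inl(\inr(\star))$ being incompatible with the hypothesis), and the first observation again gives $k = 0$; if $\runstep\, x \propeq \inr(z)$ then $\exec\,(n'+1)\,x = \exec\, n'\, z = \inl(\inl(\star))$, so the induction hypothesis yields $k' \le n'$ and $y$ with $z \propeq (\delayglob{\fpcunittype + \fpcunittype})^{k'}(\Lambda\kappa . \eta(\inl(\alwaysapp y\kappa)))$, and the second observation gives $x \propeq \delayglob{\fpcunittype + \fpcunittype}(z) \propeq (\delayglob{\fpcunittype + \fpcunittype})^{k' + 1}(\Lambda\kappa . \eta(\inl(\alwaysapp y\kappa)))$ with $k \eqdef k' + 1 \le n$. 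The only nontrivial parts are the two observations, which are pure unwinding of the definition of $\runstep$ together with the clock-manipulation rules of $\gdtt$; I expect the main (and very minor) obstacle to be keeping the identifications $\den{\fpcunittype + \fpcunittype} = L(L1+L1)$ and $\delay{}{\fpcunittype+\fpcunittype} = \tick{}{\fpcunittype+\fpcunittype}\circ\purebare$ straight while carrying them out.
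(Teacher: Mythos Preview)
Your proposal is correct and follows essentially the same route as the paper: induction on $n$ combined with case analysis on the shape of $\alwaysapp{x}{\kappa_0}$, using clock irrelevance to propagate the tag to all $\kappa$ and the $\prevbare$/$\purebare$ laws (\ref{eq:prev:beta:eta}) to identify $x$ with $\delayglob{\fpcunittype+\fpcunittype}(z)$ in the tick case. The only organisational difference is that you extract the two facts about $\runstep$ beforehand, whereas the paper does the case analysis inline; the paper also spells out the ex falso step from $\inl(\inr(\star)) = \inl(\inl(\star))$ (citing disjointness of coproduct injections) where you say ``incompatible with the hypothesis'', but this is the same argument.
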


\begin{proof}
 The proof is by induction on $n$ and case analysis of $\alwaysapp x{\kappa_0}$. If $\alwaysapp x{\kappa_0} 
 = \eta(\inl(y))$ for some $y$, then, as above, also $\alwaysapp x{\kappa} = \eta(\inl(z_\kappa))$ for 
 some $z_\kappa$ and so $x = \Lambda\kappa . \eta(\inl(z_\kappa))$ proving the lemma. 

 If $\alwaysapp x{\kappa_0} = \eta(\inr(y))$, then also $\exec\, n\, x = \inl(\inr(\star))$. Comparing this with the assumption
 we get $\inl(\inr(\star)) = \inl(\inl(\star))$. Recall~\cite[Section~2.12]{hottbook} that $\inl(\inr(\star)) = \inl(\inl(\star))$
 is equivalent to $\inr(\star) = \inl(\star)$ which is equivalent to the empty type, so 
 from this we conclude $0$ and thus anything is provable. 

 Suppose finally that  $\alwaysapp x{\kappa_0} = \tick{}{}(y)$. Then $\alwaysapp x{\kappa} = \tick{}{}(z_\kappa)$, 
 and $\runstep x = \inr(\prev\kappa z_{\kappa})$. In this case $n$ must be greater than $0$, i.e., $n = m+1$, and 
 $\exec\, (m+1) \, x = \exec\, m \, (\prev\kappa z_{\kappa})$. In this case, by induction hypothesis, 
 $\prev\kappa z_{\kappa} = (\delayglob{\fpcunittype + \fpcunittype})^k(\Lambda \kappa . \eta( \inl(\alwaysapp y\kappa)))$
 for some $y$ and $k\leq n$. So then,
\begin{align*}
 x & = \Lambda \kappa . (\alwaysapp x\kappa) \\
 & = \Lambda \kappa . (\tick{}{} (z_\kappa)) \\
 & = \Lambda \kappa . (\tick{}{\fpcunittype + \fpcunittype} \pure{}^\kappa(\alwaysapp{(\prev\kappa z_\kappa)}\kappa) \\
 & = \Lambda \kappa . (\delay{}{{\fpcunittype + \fpcunittype}}{} (\alwaysapp{(\delayglob{\fpcunittype + \fpcunittype})^k(\Lambda \kappa . \eta( \inl(\alwaysapp y\kappa)))}\kappa) \\
 & = (\delayglob{\fpcunittype + \fpcunittype}{})^{k+1}(\Lambda \kappa . \eta( \inl(\alwaysapp y\kappa))
\end{align*}
\end{proof}

\begin{lemma} \label{lem:delay:1+1:term}
 Let $M$ be a closed term of FPC of type $\fpcunittype + \fpcunittype$. If 
 $\denglob M = (\delayglob{\fpcunittype + \fpcunittype})^k(\Lambda \kappa . \eta( \inl(\alwaysapp y\kappa)))$ 
 then there exists an $N$ such that $M \Downarrow^k \fpcinl(N)$. 
\end{lemma}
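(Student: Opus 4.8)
The plan is to reduce the claim, which concerns the sum type $\fpcunittype + \fpcunittype$, to the unit-type adequacy result already in hand (Corollary~\ref{cor:fpc:gl:adequacy}), by feeding $M$ through a case expression whose $\inr$ branch diverges.

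First I would unfold the hypothesis. By the definition of $\delayglob{}$ in~(\ref{def:delay:glob}) we have $(\delayglob{\fpcunittype + \fpcunittype})^k(\Lambda\kappa.\eta(\inl(\alwaysapp{y}{\kappa}))) \judgeq \Lambda\kappa.\delaybare^k(\eta(\inl(\alwaysapp{y}{\kappa})))$, and since $\denglob{M} \judgeq \Lambda\kappa.\den{M}(*)$, the clock-extensionality rule~(\ref{eq:forall:eta}) turns the hypothesis into $\forall\kappa.\bigl(\den{M}(*) \propeq \delaybare^k(\eta(\inl(\alwaysapp{y}{\kappa})))\bigr)$. Next, fix a closed FPC term $\Omega$ of type $\fpcunittype$ with no value --- for instance the Turing-combinator instance $\pcffix{\fpcunittype}{(\lambda x.x)}$ from the examples of Section~\ref{sec:FPC} --- and set $P \eqdef \fpccase{M}{\fpcunit}{\Omega}$, a closed term of type $\fpcunittype$. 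I would then compute $\den{P}(*)$: by Figure~\ref{fig:fpc:term:interp} it equals $\widehat{f}(\den{M}(*))$ with $f(\inl(\_)) = \eta(\sunit)$ and $f(\inr(\_)) = \den{\Omega}(*)$, and since $\widehat f$ is a homomorphism of $\laterbare$-algebras it commutes with the delay operation $\delaybare = \tickbare \circ \purebare$. Hence, writing $z$ for $\alwaysapp{y}{\kappa}$, we obtain $\den{P}(*) = \widehat f(\delaybare^k(\eta(\inl(z)))) = \delaybare^k(\widehat f(\eta(\inl(z)))) = \delaybare^k(\eta(\sunit))$ for every clock, i.e. $\forall\kappa.\bigl(\den{P}(*) \propeq \delaybare^k(\eta(\sunit))\bigr)$, so Corollary~\ref{cor:fpc:gl:adequacy} gives $P \bigstep^k \fpcunit$.

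Finally I would invert this big-step judgement. Since $P = \fpccase{M}{\fpcunit}{\Omega}$ and $\Omega$ has no value, the only big-step rule for case that can derive $P \bigstep^k \fpcunit$ is the $\inl$ one, which requires $M \bigstep^j \fpcinl{L'}$ and $\fpcunit \bigstep^m \fpcunit$ with $m + j = k$; and $\fpcunit \bigstep^m \fpcunit$ forces $m = 0$, so $M \bigstep^k \fpcinl{L'}$, that is $M \Downarrow^k \fpcinl(L')$, which is the conclusion (with $N = L'$). The only real work lies in (i) the routine bookkeeping that $\widehat f$ commutes with $\delaybare$, and (ii) verifying that $\Omega$ indeed has no value in the big-step semantics --- equivalently, by Lemma~\ref{lem:fpc:bigstep:many:manyk:soundness}, that $\many{\Omega}{m}{v}$ is uninhabited for every $m$ and value $v$ --- which is the standard divergence argument for the Turing combinator. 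A direct logical-relation argument, peeling the $k$ delays off $\den{M}(*) \R{\fpcunittype + \fpcunittype} M$ supplied by the Fundamental Lemma, is also conceivable, but it is hampered by the fact that the extracted value $N$ would be produced ``under a $\laterbare$''; passing through the unit type, where adequacy is stated without any $\laterbare$, circumvents this.
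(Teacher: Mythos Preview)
Your argument is correct and is a genuinely different route from the paper's. The paper does precisely the direct logical-relation peel-off that you dismiss in your final remark: it shows by induction on $k$ that $\forall\kappa.\,\delaybare^k(\eta(\inl(\alwaysapp y\kappa)))\R{\fpcunittype+\fpcunittype} M$ (which holds by the Fundamental Lemma) implies the existence of $N$ with $M\Downarrow^k\fpcinl(N)$. The ``under a $\laterbare$'' obstacle you anticipate is handled exactly by the global packaging: in the inductive step, the hypothesis unfolds to a $\forall\kappa.\Sigma M',M''.\ldots$ with a $\laterbare$ inside; the paper first commutes the existential outside the clock quantifier using the isomorphism~(\ref{eq:fpc:forallk:sigma:swap}) (legitimate since $M',M''\in\FPCTerms$ do not mention $\kappa$), and then applies $\force$ to strip the $\laterclock\kappa$, recovering a hypothesis of the same shape at index $k-1$. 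So the witness $N$ is never stuck behind a $\laterbare$.

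Comparing the two: your reduction to unit-type adequacy via the context $\fpccase{-}{\fpcunit}{\Omega}$ is elegant and reuses Corollary~\ref{cor:fpc:gl:adequacy} as a black box, but it imports an extra obligation, namely that $\Omega\bigstep^m v$ is uninhabited for every $m$ and value $v$. This is morally obvious but its formal proof is a small induction on $m$ driven by determinism of the small-step semantics (Lemma~\ref{lem:fpc:small-step:det}) and the explicit cycle $\Omega\to^0\cdots\to^1\Omega$; it is not stated anywhere in the paper and would need to be supplied. The paper's direct approach avoids any appeal to a divergent term and instead leans on the $\forall\kappa$/$\Sigma$ commutation and $\force$, machinery already in place from Section~\ref{sec:clock-variables}. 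Both proofs are of comparable length; the paper's is perhaps more self-contained, while yours highlights that adequacy at a richer observation type can be recovered from adequacy at $\fpcunittype$ by a context argument.
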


\begin{proof}
We prove by induction on $k$ that if 
\[
\forall\kappa . \delay{}{\fpcunittype + \fpcunittype}^k(\eta( \inl(\alwaysapp y\kappa)))
 \R{\tau_1 + \tau_2} M
\]
then there exists an $N$ such that $M \Downarrow^k \fpcinl(N)$. The lemma then follows from the Fundamental Lemma
(Lemma~\ref{lem:fpc:fundamental}). In the case of $k= 0$, by definition the assumption implies 
\[
\forall\kappa . \Sigma N . M \Downarrow^0 \fpcinl(N)
\]
which by an application to the clock constant $\kappa_0$ implies 
\[
\Sigma N . M \Downarrow^0 \fpcinl(N)
\]
as desired. 
If $k = l+1$, the assumption
$\forall\kappa . \tick{}{}(\pure{}^\kappa(\delay{}{\fpcunittype + \fpcunittype}^{l}(\eta( \inl(\alwaysapp y\kappa))))) \R{\tau_1 + \tau_2} M$ reduces to 
 \[
   \forall\kappa . (\Sigma M',M''\co \FPCTerms \ld M \to_*^0 M' \to^1 M''   
   \text{ and } \pure{}^\kappa(\delay{}{\fpcunittype + \fpcunittype}^l(\eta( \inl(\alwaysapp y\kappa))))
 \laterR{\tau_1 + \tau_2} \pure{}{(M'')})
 \]
 This implies 
 \[
   \Sigma M',M''\co \FPCTerms \ld M \to_*^0 M' \to^1 M''   
   \text{ and } \forall\kappa . \laterclock\kappa(\delay{}{\fpcunittype + \fpcunittype}^l(\eta( \inl(\alwaysapp y\kappa)))
 \R{\tau_1 + \tau_2} M'')
 \]
 which, using $\force$ implies
 \[
   \Sigma M',M''\co \FPCTerms \ld M \to_*^0 M' \to^1 M''   
   \text{ and } \forall\kappa . \delay{}{\fpcunittype + \fpcunittype}^l(\eta( \inl(\alwaysapp y\kappa)))
 \R{\tau_1 + \tau_2} M''
 \]
 Now the induction hypothesis applies to give an $N$ such that $M'' \Downarrow^l \fpcinl(N)$, which
 by Lemma~\ref{lem:fpc:bigstep:many:manyk:soundness}  implies $\many{M''}{l}{v}$ and thus $\many{M}{k}{v}$
 which implies $M \Downarrow^k \fpcinl(N)$ again by Lemma~\ref{lem:fpc:bigstep:many:manyk:soundness}. 
\end{proof}

\begin{proofof}{Proposition~\ref{prop:exec}}
The left to right implication follows from Lemmas~\ref{lem:exec:term:delay} and~\ref{lem:delay:1+1:term}. 
If $M\Downarrow^k \fpcinl(N)$ for some $k\leq n$, then 
$\denglob M = (\delayglob{\fpcunittype + \fpcunittype})^k(\Lambda\kappa . \eta(\inl (\den N)))$.
We prove that this implies that 
$\exec\, n\, \denglob M  = \inl(\inl(\star))$ 
by induction on $k$. The case of $k=0$ 
follows directly by definition of $\exec$. If $k = l + 1$ also $n= m+1$ for some $m$. Observe now that for any 
$x : \denglob{\fpcunittype + \fpcunittype}$ 
\begin{align*}
 \runstep \, \delayglob{\fpcunittype + \fpcunittype}(x) & = 
 \runstep \, \Lambda\kappa.(\tick{}{}(\pure{}^\kappa(\alwaysapp x\kappa))) \\
 & = \inr(\prev\kappa (\pure{}^\kappa(\alwaysapp x\kappa))) \\ 
 & = \inr(\Lambda\kappa . \alwaysapp x\kappa) \\
 & = \inr(x)
\end{align*}
and so in particular
\[
\runstep \,\denglob M = \inr({(\delayglob{\fpcunittype + \fpcunittype})^l(\Lambda\kappa . \eta(\inl (\den N)))})
\]
so that 
\begin{align*}
 \exec\, (n+1)\, \denglob M & = \exec\, n\, (\delayglob{\fpcunittype + \fpcunittype})^l(\Lambda\kappa . \eta(\inl (\den N)))
\end{align*}
which equals $\inl(\inl(\star))$ by the induction hypothesis.
\end{proofof}

\section{Conclusions and Future Work}
\label{sec:conclusions}

We have shown that programming languages with recursive types can be
given sound and computationally adequate denotational semantics in
guarded dependent type theory. The semantics is intensional, in the
sense that it can distinguish between computations computing the same
result in different number of steps, but we have shown how to capture
extensional equivalence in the model by constructing a logical
relation on the interpretation of types.

This work can be seen as a first step towards a formalisation of
domain theory in type theory. Other, more direct formalisations have
been carried out in Coq, e.g. \cite{BKV09, BBKV10,Dockins14} but we
believe that the synthetic viewpoint offers a more abstract and
simpler presentation of the theory.  Moreover, we hope that the
success of guarded recursion for operational reasoning, mentioned in
the introduction, can be carried over to denotational models of more
advanced programming language features as, for example, to general
references, for which, at the present day, no denotational model
exists. 

Future work also includes implementation of $\gdtt$\ in a proof
assistant, allowing for the theory of this paper to be machine
verified. Currently, initial experiments are being carried out in this
direction~\cite{BBCGV16}.

Finally, we have not yet investigate the possible applications of the
weak bisimulation introduced in Section~\ref{sec:extensional}. 

\printbibliography[heading=bibintoc]

\label{lastpage}
\end{document}